\documentclass[11pt]{article}
%\pdfminorversion=7

\usepackage[utf8]{inputenc} % allow utf-8 input
\usepackage[T1]{fontenc}
\usepackage{lmodern}
\usepackage{hyperref}       % hyperlinks
\usepackage{url}            % simple URL typesetting
\usepackage{booktabs}       % professional-quality tables
\usepackage{amsfonts}       % blackboard math symbols
\usepackage[numbers]{natbib}
\usepackage{nicefrac}       % compact symbols for 1/2, etc.
\usepackage{microtype}      % microtypography
\usepackage{algorithm,algpseudocode}
\usepackage{amsfonts}
\usepackage{amsmath}
\usepackage{amsthm}
\usepackage{amssymb}
\usepackage[title]{appendix}
\usepackage{bm}
\usepackage{courier}
\usepackage[usenames,dvipsnames]{color}
\usepackage{enumitem}
\usepackage{graphicx}
\usepackage[margin=0.9in]{geometry}
\usepackage{url}
\usepackage{subfigure}
\usepackage{multirow}
\usepackage[dvipsnames]{xcolor}
\hypersetup{
	colorlinks,
	linkcolor={red!80!black},
	citecolor={blue!50!black},
	urlcolor={blue!80!black}
}

\input{./Definitions}

\graphicspath{ {./figures/} }
\DeclareGraphicsExtensions{.pdf, .jpeg, .png}
\pdfoptionpdfminorversion=7
%\pdfsuppresswarningpagegroup=1

\allowdisplaybreaks

\title{
Speeding up Linear Programming using Randomized Linear Algebra
}

\author{
Agniva Chowdhury%
\footnote{Department of Statistics, Purdue University,
West Lafayette, IN, USA, \url{chowdhu5@purdue.edu}.}
\and
Palma London%
\footnote{Department of Computer Science, California Institute of Technology,
	Pasadena, CA, USA, \url{plondon@caltech.edu}.}
\and
Haim Avron%
\footnote{School of Mathematical Sciences, Tel Aviv University,
	Tel Aviv, Israel, \url{haimav@tauex.tau.ac.il}.}
\and
Petros Drineas%
\footnote{Department of Computer Science, Purdue University,
West Lafayette, IN, USA, \url{pdrineas@purdue.edu}.}
}

\date{}

\begin{document}

\maketitle

%!TEX root = IPM_arxiv.tex

\begin{abstract}
	 Linear programming (LP) is an extremely useful tool and has been successfully applied to solve various problems in a wide range of areas, including operations research, engineering, economics, or even more abstract mathematical areas such as combinatorics. It is also used in many machine learning applications, such as $\ell_1$-regularized SVMs, basis pursuit, nonnegative matrix factorization, etc.  Interior Point Methods (IPMs) are one of the most popular methods to solve LPs both in theory and in practice. Their underlying complexity is dominated by the cost of solving a system of linear equations at each iteration.  In this paper, we consider \emph{infeasible} IPMs for the special case where the number of variables is much larger than the number of constraints. Using tools from Randomized Linear Algebra, we present a preconditioning technique that, when combined with the Conjugate Gradient iterative solver, provably guarantees that infeasible IPM algorithms (suitably modified to account for the error incurred by the approximate solver), converge to a feasible, approximately optimal solution, without increasing their iteration complexity. Our empirical evaluations verify our theoretical results on both real-world and synthetic data.
\end{abstract}

%!TEX root = IPM_arxiv.tex

\section{Introduction}\label{sec:intro}

%\haim{For ICML, it was important to orient the paper to Machine Learning. However, this is not necessarily the case for the arxiv version. What do you think: should we make a small tweak to make the paper more "general interest"? If so, we need to modify the first sentence of the Abstract and Introduction. Additional places are marked with "tweak" as well.}
Linear programming (LP) is one of the most useful tools available to theoreticians and practitioners throughout science and engineering. It has been extensively used to solve various problems in a wide range of areas, including operations research, engineering, economics, or even in more abstract mathematical areas such as combinatorics. Also in machine learning and numerical optimization, LP appears in numerous settings, including $\ell_1$-regularized SVMs~\cite{zhu20041}, basis pursuit (BP)~\cite{yang2011alternating}, sparse inverse covariance matrix estimation (SICE)~\cite{yuan2010high}, the nonnegative matrix factorization (NMF)~\cite{recht2012factoring}, MAP inference~\cite{meshi2011alternating}, etc. Not surprisingly, designing and analyzing LP algorithms is a topic of paramount importance in computer science and applied mathematics.

One of the most successful paradigms for solving LPs is the family of Interior Point Methods (IPMs), pioneered by Karmarkar in the mid 1980s~\cite{karmarkar84}. Path-following IPMs and, in particular, long-step path following IPMs, are among the most practical approaches for solving linear programs. Consider the standard form of the primal LP problem:
\begin{flalign}
	\min\,\cbb^\ts\xb\,,\text{ subject to }\Ab\xb=\bb\,,\xb\ge \zero\,,\label{eq:primal}
\end{flalign}
where $\Ab\in\RR{\dimone}{\dimtwo}$, $\bb\in\R{\dimone}$, and $\cbb\in\R{\dimtwo}$ are the inputs, and $\xb\in\R{\dimtwo}$ is the vector of the primal variables. The associated dual problem is
\begin{flalign}
	\max\,\bb^\ts\yb\,,\text{ subject to }\Ab^\ts\yb+\sbb=\cbb\,,\sbb\ge\zero\,,\label{eq:dual}
\end{flalign}
where $\yb\in\R{\dimone}$ and $\sbb\in\R{\dimtwo}$ are the vectors of the dual and slack variables respectively.
Triplets $(\xb, \yb, \sbb)$ that uphold both \eqref{eq:primal} and \eqref{eq:dual} are called \emph{primal-dual solutions}.  Path-following IPMs typically
%focus on strictly feasible solutions (e.g., solutions that satisfy $(\xb,\sbb)>\zero$) and
converge towards a primal-dual solution by operating as follows: given the current iterate $(\xb^{k},\yb^{k},\sbb^{k})$, they compute the Newton search direction $(\Delta\xb,\Delta\yb,\Delta\sbb)$ and update the current iterate by following a step towards the search direction. To compute the search direction, one standard approach~\cite{NW06} involves solving the
\emph{normal equations}\footnote{Another widely used approach is to solve the augmented system~\cite{NW06}. This approach is less relevant for this paper.}:
\begin{flalign}
	\Ab\Db^2\Ab^\ts\Delta\yb=~&\pb.\label{eq:normal}
\end{flalign}
Here, $\Db = \Xb^{1/2}\Sb^{-1/2}$ is a diagonal matrix, $\Xb,\Sb\in\RR{\dimtwo}{\dimtwo}$ are diagonal matrices whose $i$-th diagonal entries are equal to $\xb_i$ and $\sbb_i$, respectively, and $\pb \in \R{\dimone}$ is a vector whose exact definition is given in eqn.~(\ref{eqn:pdef})\footnote{The superscript $k$ in eqn.~(\ref{eqn:pdef}) simply indicates iteration count and is omitted here for notational simplicity.}. Given $\Delta\yb$, computing $\Delta \sbb$ and $\Delta \xb$ only involves matrix-vector products.

The core computational bottleneck in IPMs is the need to solve the linear system of eqn.~(\ref{eq:normal}) at each iteration. This leads to two key challenges: first, for high-dimensional matrices $\Ab$, solving the linear system is computationally prohibitive. Most implementations of IPMs use a \emph{direct solver}; see Chapter 6 of~\cite{NW06}. However, if $\Ab\Db^2\Ab^\ts$ is large and dense, direct solvers are computationally impractical. If $\Ab\Db^2\Ab^\ts$ is sparse, specialized direct solvers have been developed, but these do not apply to many LP problems, 
especially those arising in machine learning applications,
due to irregular sparsity patterns. 
%\haim{tweak.}
%
%\haim{The previous sentence said "....these don't apply to many LP problems
%with dense matrices arising in machine learning applications....". This is inaccurate, since in many ML applications the matrices are actually sparse. The problem there is irregular structure that prevents sparse linear %solver from taking advantage of the sparsity. I did a minor modification to clarify structure.}
%%their computational complexity is highly dependent on sparse structure, which is often absent in matrices arising in machine learning applications.
%
Second, an alternative to direct solvers is the use of iterative solvers, but the situation is further complicated since $\Ab\Db^2\Ab^\ts$ is typically ill-conditioned. Indeed, as IPM algorithms approach the optimal primal-dual solution, the diagonal matrix $\Db$ becomes ill-conditioned, which also results in the matrix $\Ab\Db^2\Ab^\ts$ becoming ill-conditioned. Additionally, using approximate solutions for the linear system of eqn.~(\ref{eq:normal}) causes certain invariants, which are crucial for guaranteeing the convergence of IPMs, to be violated; see Section~\ref{sxn:contrib} for details.

In this paper, we address the aforementioned challenges, for the special case where $m \ll n$, %\palma{after reading this again: it's unclear from this sentence if the challenges are "challenges for the special case where $m \ll n$", or for the challenges listed above (even with the use of "aforementioned"). Also see my next comment}
i.e., the number of constraints is much smaller than the number of variables; see Section~\ref{sxn:extensions} for a generalization. This is a common setting in 
%ML 
many applications 
%\haim{tweak} 
of LP solvers. For example, in machine learning, 
%since 
$\ell_1$-SVMs and basis pursuit problems often exhibit such structure when the number of available features ($n$)  is larger than the number of objects ($m$). Indeed, this setting has been of interest in recent work on LPs \cite{Donoho05,Bienstock06,LondonAAAI2018}.  For simplicity of exposition, we also assume that the constraint matrix $\Ab$ has full rank, equal to $m$. First, we propose and analyze a preconditioned Conjugate Gradient (CG) iterative solver for the normal equations of eqn.~(\ref{eq:normal}), using matrix sketching constructions from the Randomized Linear Algebra (RLA) literature. We develop a preconditioner for $\Ab\Db^2\Ab^\ts$
%
%\palma{I think this should be closer to the previous paragraph about the 'challenges'. We need to be super clear that 'we address the challenge of illconditioning' and not let the reader be distracted with the $m \ll n$ story that is now inbetween. We could move the $m \ll n$ paragraph to the end of the section, or maybe to the beginning, right after the 1st paragraph?}
%
using matrix sketching which allows us to prove strong convergence guarantees for the \textit{residual}
%\textit{residual error} 
of CG solvers. 
%\haim{Aren't we interested, and actually bounding, the "relative residual" and not the "error". Not sure "residual error" is the term we should use.}
%
Second, building upon the work of~\cite{Mon03}, we propose and analyze a provably accurate long-step \textit{infeasible} IPM algorithm. The proposed IPM solves the normal equations using iterative solvers. In this paper, for brevity and clarity, we primarily focus our description and analysis on the CG iterative solver. We note that a non-trivial concern is that the use of iterative solvers and matrix sketching tools implies that the normal equations at each iteration will be solved only approximately. In our proposed IPM, we develop a novel way to \textit{correct} for the error induced by the approximate solution in order to guarantee convergence. Importantly, this correction step is relatively computationally light, unlike a similar step proposed in~\cite{Mon03}. Third, we empirically show that our algorithm performs well in practice. We consider solving LPs that arise from $\ell_1$-regularized SVMs and test them on a variety of synthetic and real-world data sets. Several extensions of our work are discussed in Section~\ref{sxn:extensions}.

\subsection{Our contributions}\label{sxn:contrib}

Our point of departure in this work is the introduction of preconditioned, iterative solvers for solving eqn.~(\ref{eq:normal}). Preconditioning is used to address the ill-conditioning of the matrix $\Ab\Db^2\Ab^\ts$. Iterative solvers allow the computation of approximate solutions using only matrix-vector products while avoiding matrix inversion, Cholesky or LU factorizations, etc. A preconditioned formulation of eqn.~(\ref{eq:normal}) is:
\begin{flalign}
	\Qb^{-1}\Ab\Db^2\Ab^\ts\Delta\yb=\Qb^{-1}\pb,
	\label{eq:precond}
\end{flalign}
where $\Qb \in \mathbb{R}^{m \times m}$ is the preconditioning matrix; $\Qb$ should be easily invertible (see~\cite{axelsson1984finite,GVL12} for background). An alternative yet equivalent formulation of eqn.~(\ref{eq:precond}), which is more amenable to theoretical analysis, is
\begin{flalign}
	\Qb^{-\nicefrac{1}{2}}\Ab\Db^2\Ab^\ts\Qb^{-\nicefrac{1}{2}}\zb=~\Qb^{-\nicefrac{1}{2}}\pb,\label{eq:precond_alt}
\end{flalign}
where $\zb\in\R{\dimone}$ is a vector such that $\Delta\yb=\Qb^{-\nicefrac{1}{2}}\zb$. Note that the matrix in the left-hand side of the above equation is always symmetric, which is not necessarily the case for eqn.~\eqref{eq:precond}. We do emphasize that one can use eqn.~\eqref{eq:precond} in the actual implementation of the preconditioned solver; eqn.~(\ref{eq:precond_alt}) is much more useful in theoretical analyses. 
%\haim{But actually in our proposed algorithm (but not the experiments) we use eqn.~\eqref{eq:precond_alt}?} \agniva{Yes, we use eqn. (5) in the description of the algorithm as well. But I think the experiments were conducted using eqn. (4).} \haim{OK, maybe mention this, but we can leave it as it is now.}

Recall that we focus on the special case where $\Ab \in \mathbb{R}^{m \times n}$ has $m \ll n$, i.e., it is a short-and-fat matrix. Our first contribution starts with the design and analysis of a preconditioner for the Conjugate Gradient solver that satisfies, with high probability,
\begin{flalign}\label{eq:pdcond1}
	\frac{2}{2+\zeta} \leq \sigma^2_{\min}(\Qb^{-\frac{1}{2}}\Ab\Db) \leq \sigma^2_{\max}(\Qb^{-\frac{1}{2}}\Ab\Db) \leq \frac{2}{2-\zeta},
\end{flalign}
for some error parameter $\zeta \in  [0,1]$. In the above, $\sigma_{\min}(\cdot)$ and $\sigma_{\max}(\cdot)$ correspond to the smallest and largest singular value of the matrix in parentheses.
The above condition says that the preconditioner effectively reduces the condition number of $\Ab\Db$ to a constant. We note that the particular form of the lower and upper bounds in eqn.~(\ref{eq:pdcond1}) was chosen to simplify our derivations. RLA matrix-sketching techniques allow us to construct preconditioners for all short-and-fat matrices that satisfy the above inequality \textit{and} can be inverted efficiently. Such constructions go back to the work of~\cite{Avron2010}; see Section~\ref{sxn:PCG} for details on the construction of $\Qb$ and its inverse. Importantly, given such a preconditioner, we then prove that the resulting CG iterative solver satisfies
\begin{flalign}
	\|\Qb^{-\nicefrac{1}{2}}\Ab\Db^2\Ab^\ts\Qb^{-\nicefrac{1}{2}}\tilde{\zb}^t-\Qb^{-\nicefrac{1}{2}}\pb\|_2\leq
	\zeta^t \|\Qb^{-\nicefrac{1}{2}}\pb\|_2. \label{eq:pdcond2}
\end{flalign}
Here $\tilde{\zb}^t$ is the approximate solution returned by the CG iterative solver after $t$ iterations. In words, the above inequality states that the \textit{residual}
%\textit{residual error} \haim{relative residual?} 
achieved after $t$ iterations of the CG iterative solver drops exponentially fast. To the best of our knowledge, this result is not known in the CG literature: indeed, it is actually well-known that the residual error of CG may oscillate, even in cases where the energy norm of the solution error decreases monotonically. However, we prove that if the preconditioner is sufficiently good, i.e., it satisfies the constraint of eqn.~\eqref{eq:pdcond1}, then the residual error decreases as well.

Our second contribution is the analysis of a novel variant of a long-step \textit{infeasible} IPM algorithm proposed by~\cite{Mon03}. Recall that such algorithms can, in general, start with an initial point that is not necessarily feasible, but does need to satisfy some, more relaxed, constraints. Following the lines of~\cite{Zh94,Mon03}, let $\mathcal{S}$ be the set of feasible and optimal solutions
of the form $(\xb^*,\yb^*,\sbb^*)$ for the primal and dual problems of eqns.~\eqref{eq:primal} and~\eqref{eq:dual} and assume that $\mathcal{S}$ is not empty. Then, long-step infeasible IPMs can start with any initial point $(\xb^{0},\yb^{0},\sbb^{0})$ that satisfies $(\xb^{0},\sbb^{0}) > 0$ \textit{and} $(\xb^{0},\sbb^{0}) \geq (\xb^{*},\sbb^{*})$, for some feasible and optimal solution
$(\xb^{*},\sbb^{*})\in \mathcal{S}$. In words, the starting primal and slack variables must be strictly positive \textit{and} larger (element-wise) when compared to some feasible, optimal primal-dual solution. See Chapter 6 of \cite{wright1997primal}
%)
for a discussion regarding why such choices of starting points 
%that 
are
%\textit{sufficiently positive} 
%can be identified in practice more efficiently than feasible points
also relevant to computational practice.

The flexibility of infeasible IPMs comes at a cost: long-step \textit{feasible} IPMs converge in $\Ocal(n\log\nicefrac{1}{\epsilon})$  iterations, while long-step \textit{infeasible} IPMs need $\Ocal(n^2 \log\nicefrac{1}{\epsilon})$ iterations to converge~\cite{Zh94,Mon03}. Here $\epsilon$ is the accuracy of the approximate LP solution returned by the IPM; see Algorithm~\ref{algo:iipm} for the exact definition. Let
\begin{flalign}
	\Ab\xb^0-\bb&= \rb_p^0, \label{eq:primalres}\\
	\Ab^\ts\yb^0+\sbb^0-\cbb &= \rb_d^0,\label{eq:dualres}
\end{flalign}
where $\rb_p^0 \in \mathbb{R}^n$ and $\rb_d^0 \in \mathbb{R}^m$ are the \textit{primal} and \textit{dual} residuals, respectively, and characterize how far the initial point is from being feasible. As long-step infeasible IPM algorithms iterate and update the primal and dual solutions, the residuals are updated as well. Let $\rb^k = (\rb_p^k,\rb_d^k) \in \mathbb{R}^{n+m}$ be the primal and dual residual at the $k$-th iteration: it is well-known that the convergence analysis of infeasible long-step IPMs critically depends on $\rb^k$ lying on the line segment between 0 and $\rb^0$. Unfortunately, using approximate solvers (such as the CG solver proposed above) for the normal equations violates this invariant.Aa simple solution to fix this problem by adding a perturbation vector $\vb$ to the current primal-dual solution that guarantees that the invariant is satisfied is proposed in~\cite{Mon03}. 
%The construction of $\vb$ proposed by~\cite{Mon03} necessitated the computation of a maximum weight basis for the matrix $\Ab \Db$, which is computationally expensive,
%
%and does not come with provable accuracy guarantees;~\cite{Mon03} simply assume a bound on the Euclidean norm of their construction of $\vb$, whereas we \textit{prove} such a bound.
%\haim{Not sure that the last statement is correct. In eqn. (45) of \cite{Mon03} they have a bound on the infinity norm of $\vb$, which implies a bound on the Euclidean norm. Further, note Proposition 2.2 that bounds the condition number of the preconditioned matrix.}
%
%More precisely,
Again, we use RLA matrix sketching principles to propose an efficient construction for $\vb$ that provably satisfies the invariant. Next, we combine the above two primitives to prove that Algorithm~\ref{algo:iipm} in Section~\ref{sxn:IIPM} satisfies the following theorem.
\begin{theorem}\label{thm:1}
	Let $0 \leq \epsilon \leq 1$ be an accuracy parameter. Consider the long-step infeasible IPM Algorithm~\ref{algo:iipm} (Section~\ref{sxn:IIPM}) that solves eqn.~(\ref{eq:precond_alt}) using the CG solver of Algorithm~\ref{algo:PCG} (Section~\ref{sxn:PCG}). Assume that the CG iterative solver runs with accuracy parameter $\zeta = \nicefrac{1}{2}$ and iteration count
	$t = \Ocal (\log n)$. 
	%\haim{Shouldn't it be $t = \Omega(\log n)$?}
	%
	Then, with probability at least 0.9, the long-step infeasible IPM converges after $\Ocal(n^2 \log \nicefrac{1}{\epsilon})$ iterations.
\end{theorem}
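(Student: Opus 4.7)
The plan is to combine three ingredients: (i) the residual bound in eqn.~\eqref{eq:pdcond2} for the preconditioned CG solver, (ii) a sketching-based construction of a correction vector $\vb$ that restores the residual invariant violated by the inexact linear solve, and (iii) the Monteiro--O'Sullivan framework for long-step infeasible IPMs, which yields an $\Ocal(n^2 \log \nicefrac{1}{\epsilon})$ outer iteration bound provided the usual invariants hold at every iteration.

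First, I would fix the preconditioner $\Qb$ from Section~\ref{sxn:PCG}. Using standard RLA guarantees (e.g.\ the subspace-embedding results that underlie~\cite{Avron2010}), a single sketch of $\Ab\Db$ suffices to produce a $\Qb$ satisfying eqn.~\eqref{eq:pdcond1} at the current iterate with failure probability at most, say, $1/(10 N)$, where $N = \Ocal(n^2 \log \nicefrac{1}{\epsilon})$ is the target iteration count. A union bound over the $N$ iterations controls the total failure probability by $0.1$, delivering the stated 0.9 success probability. Given such a $\Qb$, plugging $\zeta = \nicefrac{1}{2}$ and $t = \Ocal(\log n)$ into eqn.~\eqref{eq:pdcond2} shows that the residual of the approximate solution $\tilde\zb^t$ to the normal equations is at most $n^{-c} \|\Qb^{-\nicefrac{1}{2}}\pb\|_2$ for any desired constant $c$, by choosing the hidden constant in $t$ large enough.

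Second, I would address the crucial invariant of infeasible IPMs, namely that the residual pair $\rb^k = (\rb_p^k,\rb_d^k)$ at every iterate lies on the line segment from $\zero$ to $\rb^0$. When the Newton system is solved exactly, this invariant is preserved automatically by the update equations. With an approximate $\Delta\yb$ obtained from CG, the update on $\rb_d^k$ is no longer exact and the invariant fails. Following~\cite{Mon03}, I would correct this by adding a perturbation $\vb$ to the current iterate, but, rather than the expensive construction in~\cite{Mon03}, I would build $\vb$ directly from the sketched preconditioner: specifically, choose $\vb$ in the column space of $\Ab^\ts$ so that $\Ab\vb$ absorbs the CG residual. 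The bound in eqn.~\eqref{eq:pdcond2} combined with eqn.~\eqref{eq:pdcond1} allows $\vb$ to be bounded by a polynomially small quantity in $n$, which in turn keeps the perturbed iterate strictly positive and inside the wide neighborhood $\mathcal{N}(\gamma)$ of the central path.

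Third, once the invariant is restored and the iterate is shown to remain in $\mathcal{N}(\gamma)$, I would invoke the Monteiro--O'Sullivan potential-reduction analysis (the $\mu$-based duality-gap recursion in~\cite{Mon03,Zh94}) essentially verbatim: the only change is that all constants arising from the Newton step now carry an additive $\Ocal(n^{-c})$ slack due to the CG error, which, for $c$ large enough, is absorbed into the $\Ocal(1)$ factors of the recursion. This gives a geometric decrease of $\mu_k$ by a factor $1 - \Theta(1/n^2)$ per iteration, hence $\Ocal(n^2 \log \nicefrac{1}{\epsilon})$ iterations to reach duality gap $\epsilon$. The main obstacle I anticipate is step two: one must simultaneously (a) choose $\vb$ so that $\Ab\vb$ cancels the residual exactly, (b) keep $\|\vb\|$ small enough that the resulting perturbation does not push the iterate outside $\mathcal{N}(\gamma)$ or violate positivity, and (c) do so at cost dominated by one sketch, not a fresh factorization; this is where the condition-number guarantee in eqn.~\eqref{eq:pdcond1} is used in a non-trivial way, since it controls $\|\Qb^{-\nicefrac{1}{2}}\|$ and hence the norm of the correction. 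Once these bounds are made quantitative, the theorem follows by combining the per-iteration guarantees with the standard outer loop analysis and the union bound over $N$ iterations.
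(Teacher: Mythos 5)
Your overall architecture matches the paper's proof: precondition CG so that eqn.~\eqref{eq:pdcond2} holds, correct the inexact Newton step with a sketch-based perturbation vector $\vb$ so that the residuals $\rb^k$ stay on the segment between $\zero$ and $\rb^0$, and then invoke the long-step infeasible framework of~\cite{Mon03} (Monteiro and O'Neal, incidentally, not O'Sullivan) to obtain $\Ocal(n^2\log\nicefrac{1}{\epsilon})$ outer iterations. However, there is a genuine gap in how you argue that $t=\Ocal(\log n)$ inner iterations suffice. Eqn.~\eqref{eq:pdcond2} only gives a \emph{relative} residual bound $\|\tilde{\fb}^{(t)}\|_2\le\zeta^t\|\Qb^{-\nicefrac{1}{2}}\pb\|_2$; to feed this into the step-size lemmas of~\cite{Mon03} you need the \emph{absolute} bounds $\|\tilde{\fb}^{(t)}\|_2\le\frac{\gamma\sigma}{4\sqrt{n}}\sqrt{\mu}$ and $\|\vb\|_2\le\frac{\gamma\sigma}{4}\mu$. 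Saying the residual is ``$n^{-c}\|\Qb^{-\nicefrac{1}{2}}\pb\|_2$ for any desired constant $c$'' does not close this: if $\|\Qb^{-\nicefrac{1}{2}}\pb\|_2$ were super-polynomially large compared to $\sqrt{\mu}$, no constant $c$, and hence no $t=\Ocal(\log n)$, would work. The paper closes this with Lemma~\ref{thm:boundf}, which uses eqn.~\eqref{eq:pdcond1} together with the initialization assumptions $(\xb^0,\sbb^0)\ge(\xb^*,\sbb^*)$ and the neighborhood condition to prove $\|\Qb^{-\nicefrac{1}{2}}\pb\|_2\le\Ocal(n)\sqrt{\mu}$; this is the quantitative heart of why logarithmically many CG steps are enough, and it is absent from your proposal.

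A second, smaller issue is the norm bound on $\vb$. You attribute it to eqn.~\eqref{eq:pdcond1} controlling $\|\Qb^{-\nicefrac{1}{2}}\|$, but that is not the mechanism. The paper's $\vb=(\Xb\Sb)^{\nicefrac{1}{2}}\Wb(\Ab\Db\Wb)^{\dagger}(\Ab\Db^2\Ab^\ts\hat{\Delta\yb}-\pb)$ enters through $\hat{\Delta\xb}$ with an extra $\Sb^{-1}$, the invariant to be enforced is $\Ab\Sb^{-1}\vb=\Ab\Db^2\Ab^\ts\hat{\Delta\yb}-\pb$ rather than ``$\Ab\vb$ absorbs the residual,'' and the bound $\|\vb\|_2\le\sqrt{3n\mu}\,\|\tilde{\fb}^{(t)}\|_2$ (Lemma~\ref{lem:v}) comes from applying the sketching guarantee of eqn.~\eqref{eqn:pdprec} a \emph{second} time, to the matrix $(\Xb\Sb)^{\nicefrac{1}{2}}$, in order to control $\|(\Xb\Sb)^{\nicefrac{1}{2}}\Wb\|_2$. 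Without this second use of the sketch (or some substitute), items (a)--(c) in your own list of obstacles are not actually verified. The remaining pieces --- the per-iteration union bound and the invocation of Lemmas 3.5--3.7 and Theorem 2.6 of~\cite{Mon03}, the last of which needs a small modification because the different construction of $\vb$ changes the bound on $\|(\Xb\Sb)^{-\nicefrac{1}{2}}\vb\|_2$ --- proceed as in the paper.
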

We note that the 0.9 success probability above is for simplicity of exposition and can be easily amplified using standard techniques. Also, at each iteration of our infeasible long-step IPM algorithm, the running time is $\Ocal((\nnz{A}+m^3)\log n)$. See Section~\ref{sxn:IIPM} for a detailed discussion of the overall running time.

Our empirical evaluation demonstrates that our algorithm requires an order of magnitude much fewer inner CG iterations than a standard IPM using CG, while producing a comparably accurate solution (see Section~\ref{sec:exp}). In practice, our empirical evaluation also indicates that using a CG solver with our sketching-based preconditioner does not increase the number of (outer) iterations of the infeasible IPM, compared to unpreconditioned CG or  a direct linear solver. In particular, there are instances where our solver performs much better than unpreconditioned CG in terms of (outer) iteration count.
%\haim{Perhaps we should say that in practice much fewer iterations are required?}

\subsection{Comparison with Related Work}\label{sxn:comparison}

There is a large body of literature on solving LPs using IPMs. We only review literature that is immediately relevant to our work. Recall that we solve the normal equations inexactly at each iteration, and develop a way to \emph{correct} for the error incurred. We also focus on IPMs that can use an sufficiently positive, infeasible initial point (see Section~\ref{sxn:contrib}). We discuss below two papers that present related ideas.

The use of an approximate iterative solver for eqn.~(\ref{eq:normal}), followed by a correction step to ``fix'' the approximate solution was proposed in~\cite{Mon03} (see our discussion in Section~\ref{sxn:contrib}). We propose efficient, RLA-based approaches to precondition and solve eqn.~(\ref{eq:normal}), as well as a novel approach to correct for the approximation error in order to guarantee the convergence of the IPM algorithm. Specifically,~\cite{Mon03} propose to solve eqn.~\eqref{eq:normal} using the so-called \emph{maximum weight basis} preconditioner \cite{RV93}.
%introduced by~\cite{RV93} and generalized by~\cite{OS97}.
However, computing such a preconditioner needs access to a maximal linearly independent set of columns of $\Ab\Db$ in each iteration, which is costly, taking $\Ocal(\dimone^2\dimtwo)$ time in the worst-case.
More importantly, while~\cite{Mon04} was able to provide a bound on the condition number of the preconditioned matrix, that depends only on properties of $\Ab$, and is independent of $\Db$, this bound might, in general, be very large. In contrast, our bound is a constant and it does not depend on properties of $\Ab$ or its dimension. In addition, \cite{Mon03} assumed a bound on the two-norm of the residual of the preconditioned system, but it is unclear how their preconditioner guarantees such a bound. Similar concerns exist for the construction of the correction vector $\vb$ proposed by~\cite{Mon03}, which our work alleviates.

The line of research in the Theoretical Computer Science literature that is closest to our work is~\cite{daitch2008faster}, who presented an IPM that uses an approximate solver in each iteration. However, their accuracy guarantee is in terms of the final objective value which is different from ours.
%However, their final solution satisfies the constraints approximately and thus is only approximately feasible.
%whereas our approach returns a final solution that %satisfies the constraints exactly and is primal-dual %feasible. 
%\haim{Is the last sentence true even if the initial point is infeasible?} \agniva{You are right. The last sentence was not true and we removed it. What we meant to say is that although [14] starts from a feasible point, it does not stay feasible over the iterations and only approximately satisfy the constraints. In our case, the iterates does not satisfy the constraints simply because we already started from an infeasible point. But how to put this into words? This seems to be an apple to orange comparison.}
%\haim{If [14] requires an a feasible initial point, and we don't, just say that. That is, in that case both methods return only an approximately feasible point, but we don't require a feasible starting point.} 
%Since~\cite{daitch2008faster} allows an approximately feasible solution as the end result, it does not need to correct for the error incurred in each iteration due to the approximate solver. \haim{Obviously, that is not why they do not need to correct the error: we return an approximately feasible point and still need to correct the error.}
More importantly,~\cite{daitch2008faster} focuses on \textit{short-step}, feasible IPMs, whereas ours is \emph{long-step} and does not require a feasible starting point. Finally, the approximate solver proposed by \cite{daitch2008faster} works only for the special case of input matrices that correspond to graph Laplacians, following the lines of~\cite{spielman2004nearly,spielman2014nearly}.

We also note that in the Theoretical Computer Science literature,~\cite{LS13,lee2013path2, LS14,LS15,lee2019solving,CLS19} proposed and analyzed theoretically ground-breaking algorithms for LPs based on novel tools such as the so-called \emph{inverse maintenance} for accelerating the linear system solvers in IPMs.  However, all these endeavors are primarily focused on the theoretically fast but practically inefficient short-step feasible IPMs. In contrast, our work is focused on infeasible \textit{long-step} IPMs, known to work efficiently in practice. Very recently,~\cite{brand2020solving} proposed another fast, short-step, feasible IPM for solving tall and dense LPs. The output of their algorithm does not satisfy the linear constraints exactly (similar to~\cite{daitch2008faster}) and  the final convergence guarantee is somewhat different from our work.

Another relevant line of research is the work of~\cite{CMTH16}, which proposed solving eqn.~\eqref{eq:normal}  using preconditioned Krylov subspace methods, including variants of \emph{generalized minimum residual} (GMRES) or CG methods. Indeed, \cite{CMTH16} conducted extensive numerical experiments on LP problems taken from standard benchmark libraries, but did not provide any theoretical guarantees.

From a matrix-sketching perspective, our work was partially motivated by~\cite{CYD18}, which presented an iterative, sketching-based algorithm to solve under-constrained ridge regression problems, but did not address how to make use of such approaches in an IPM-based framework, as we do here. Recent papers proposed the so-called \emph{Newton sketch}~\cite{PW17,XYRCM16} to construct an approximate Hessian matrix  for more general convex objective functions of which LP is a special case. Nevertheless, these randomized second-order methods are significantly faster than the conventional approach only when the data matrix is over-constrained, \ie~$\dimone\gg\dimtwo$. It is unclear whether the approach of~\cite{PW17,XYRCM16} is faster than IPMs when the optimization problem to be solved is linear. A probabilistic algorithm to solve LP approximately in a random projection-based reduced feature-space was proposed in ~\cite{VPL18}. A possible drawback of this paper is that the approximate solution is infeasible with respect to the original region.

Finally, we refer the interested reader to the surveys~\cite{Woodruff14, DM2018,Halko2011,Mahoney11,Drineas2016} for more background on Randomized Linear Algebra.

%!TEX root = IPM_arxiv.tex

\section{Notation and Background}\label{sxn:background}

$\mathbf{A}, \mathbf{B}, \ldots$ denote
matrices and $\mathbf{a}, \mathbf{b}, \ldots$ denote vectors. For vector $\ab$, $\|\mathbf{a}\|_{2}$ denotes its Euclidean norm; for a matrix $\mathbf{A},\|\mathbf{A}\|_{2}$ denotes
its spectral norm and $\|\Ab\|_F$ denotes its Frobenius norm. We use $\zero$ to denote a null vector or null matrix, dependent upon context, and $\one$ to denote the all-ones vector.
For any matrix $\Xb\in\RR{\dimone}{\dimtwo}$ with $\dimone\leq\dimtwo$ of rank $\dimone$ a thin Singular Value Decomposition (SVD) is a product $\Ub\Sigmab\Vb^\ts$ , with $\mathbf{U} \in\mathbb{R}^{\dimone\times\dimone}$ (the matrix of the left singular vectors), $\mathbf{V} \in$ $\mathbb{R}^{\dimtwo \times \dimone}($ the matrix of the top-$\dimone$ right singular vectors), and $\Sigmab \in$
$\mathbb{R}^{\dimone \times \dimone}$ a diagonal matrix whose entries are equal to the singular values of $\Xb$. We use $\sigma_{i}(\cdot)$ to denote the $i$-th singular value of the matrix in parentheses. 

For any two symmetric positive semidefinite (resp. positive definite) matrices $\Ab_1$ and $\Ab_2$ of appropriate dimensions, $\Ab_1\preccurlyeq\Ab_2$ ($\Ab_1\prec\Ab_2$) denotes that $\Ab_2-\Ab_1$ is positive semidefinite (resp. positive definite).

%For any two vectors $\ab=(a_1,\dots,a_\ell)^\ts$ and $\bb=(b_1,\dots,b_\ell)^\ts$, we denote $\ab\circ\bb=(a_1b_1,\dots,a_\ell b_\ell)^\ts$. For any vector $\ab\in\R{n}$, the $\ell_\infty$ norm is defined as, %$\|\ab\|_\infty=\max_i\abs{a_i}$ and we will heavily use the following standard inequality to prove results in the appendix,
%\begin{flalign}\label{eq:normineq}
%\abs{\frac{\ab^\ts\one_n}{n}}\le\|\ab\|_\infty\le\|\ab\|_2\,.
%\end{flalign}

We now briefly discuss a result on matrix sketching~\cite{Cohen2016,cohen2016nearly} that is particularly useful in our theoretical analyses. In our parlance,~\cite{Cohen2016} proved that,
for any matrix $\Zb\in\RR{m}{n}$, there exists a sketching matrix $\Wb\in\RR{n}{w}$ such that
\begin{flalign}\label{eqn:pdprec}
	\nbr{\Zb \Wb \Wb^\ts \Zb^\ts - \Zb \Zb^\ts}_2\le \frac{\zeta}{4}\Big(\nbr{\Zb}_2^2+\frac{\|\Zb\|_F^2}{r}\Big)
\end{flalign}
holds with probability at least $1-\delta$ for any $r\ge 1$. Here $\zeta \in [0,1]$ is a (constant) accuracy parameter. Ignoring constant terms, $w=\Ocal(r\log(\nicefrac{r}{\delta}))$; $\Wb$ has $\Ocal(\log(r/\delta))$ non-zero entries per row; and the product $\Zb\Wb$ can be computed in time
$\Ocal(\log(r/\delta)\cdot\nnz{\Zb})$.

%!TEX root = IPM_arxiv.tex

%\section{Preconditioned CG (PCG) solver}\label{sxn:PCG}
\section{Conjugate Gradient Solver}\label{sxn:PCG}
%\haim{I think we need to be careful with terminology regarding CG vs. PCG. In our case, PCG typically refers to implicitly running CG on \eqref{eq:precond_alt} but algebraically we hold iterates that approximate \eqref{eq:normal}. However, in our cases, it seems that we actually want to find iterates that approximately solve \eqref{eq:precond_alt}, so we are not using "PCG" but rather standard CG.}
%
In this section, we discuss the computation of the preconditioner $\Qb$ (and its inverse), followed by a discussion on how such a preconditioner can be used to satisfy eqns.~\eqref{eq:pdcond1} and~\eqref{eq:pdcond2}.
\begin{algorithm}[H]
	\caption{Solving eqn.~\eqref{eq:precond_alt} via CG}\label{algo:PCG}
	\begin{algorithmic}
		\State \textbf{Input:}
		$\Ab\Db\in\RR{\dimone}{\dimtwo}$, $\pb\in\R{\dimone}$, sketching matrix $\Wb \in \mathbb{R}^{n \times w}$,  iteration count $t$;
		\vspace{1mm}
		\State 1. Compute $\Ab\Db\Wb$ and its SVD: let $\Ub_{\Qb} \in \mathbb{R}^{m \times m}$ be the matrix of its left singular vectors and let $\Sigmab_{\Qb}^{\nicefrac{1}{2}} \in \mathbb{R}^{m \times m}$ be the matrix of its singular values;
		\vspace{1mm}
		\State 2. Compute $\Qb^{-\nicefrac{1}{2}} = \Ub_{\Qb} \Sigmab_{\Qb}^{-\nicefrac{1}{2}}\Ub_{\Qb}^\ts$;
		\vspace{1mm}
		\State 3. Initialize $\tilde{\zb}^{0} \gets \zero_\dimone $ and run standard CG on the preconditioned system of eqn.~\eqref{eq:precond_alt} for $t$ iterations;
		\vspace{1mm}
		\State \textbf{Output:} return $\tilde{\zb}^t$;
	\end{algorithmic}
\end{algorithm}
\noindent Algorithm~\ref{algo:PCG} takes as input the sketching matrix $\Wb \in \mathbb{R}^{n \times w}$, which we construct as discussed in Section~\ref{sxn:background}. Our preconditioner $\Qb$ is equal to
\begin{flalign}\label{eqn:pdprecond}
	\Qb=\Ab\Db\Wb\Wb^\ts\Db\Ab^\ts.
\end{flalign}
Notice that we only need to compute $\Qb^{\nicefrac{-1}{2}}$ in order to use it to solve eqn.~(\ref{eq:precond_alt}). Towards that end, we first compute the sketched matrix $\Ab\Db\Wb \in \mathbb{R}^{\dimone \times w}$. Then, we compute the SVD of the matrix $\Ab\Db\Wb$: let $\Ub_{\Qb}$ be the matrix of its left singular vectors and let $\Sigmab_{\Qb}^{\nicefrac{1}{2}}$ be the matrix of its singular values. Notice that the left (and right) singular vectors of $\Qb^{\nicefrac{-1}{2}}$ are equal to $\Ub_{\Qb}$ and its singular values are equal to $\Sigmab_{\Qb}^{-\nicefrac{1}{2}}$. Therefore, $\Qb^{\nicefrac{-1}{2}} = \Ub_{\Qb} \Sigmab_{\Qb}^{-\nicefrac{1}{2}}\Ub_{\Qb}^\ts$.

Let $\Ab\Db = \Ub\Sigmab\Vb^\ts$ be the thin SVD representation of $\Ab\Db$. We apply the results of~\cite{Cohen2016} (see Section~\ref{sxn:background}) to the matrix $\Zb = \Vb^\ts \in\mathbb{R}^{m \times n}$ with $r=m$ to get that, with probability at least $1-\delta$,
\begin{flalign}
	\nbr{\Vb^\ts \Wb \Wb^\ts \Vb - \Ib_{\dimone}}_2\le \frac{\zeta}{4}\Big(\nbr{\Vb}_2^2+\frac{\|\Vb\|_F^2}{m}\Big) \le \frac{\zeta}{2}.\label{eq:cnd1}
\end{flalign}
In the above we used $\nbr{\Vb}_2=1$ and $\nbr{\Vb}_F^2=m$. The running time needed to compute the sketch $\Ab\Db\Wb$ is equal to (ignoring constant factors)
%
%\begin{flalign}\label{eqn:runtimesketch}
$\Ocal(\nnz{\Ab}\cdot \log(m/\delta))$.
%\end{flalign}
%
Note that $\nnz{\Ab\Db}=\nnz{\Ab}$. The cost of computing the SVD of $\Ab\Db\Wb$ (and therefore $\Qb^{\nicefrac{-1}{2}}$) is $\Ocal(m^3\log(m/\delta))$. Overall, computing $\Qb^{-\nicefrac{1}{2}}$ can be done in time
\begin{flalign}\label{eqn:svdQ}
	\Ocal(\nnz{\Ab}\cdot \log(m/\delta)+m^3\log(m/\delta)).
\end{flalign}
Given these results, we now discuss how to satisfy eqns.~\eqref{eq:pdcond1} and \eqref{eq:pdcond2} using the sketching matrix $\Wb$. We start with the following bound, which is relatively straight-forward given prior RLA work.
%(see Appendix~\ref{app:cond1} for a proof).
%
\begin{lemma}\label{lem:cond3}
	If the sketching matrix $\Wb$ satisfies eqn.~\eqref{eq:cnd1}, then, for all $i=1\ldots m$,
	\begin{flalign*}
		(1+\zeta/2)^{-1}\le\sigma_i^2(\Qb^{-\nicefrac{1}{2}}\Ab\Db)\le (1-\zeta/2)^{-1}.
	\end{flalign*}
\end{lemma}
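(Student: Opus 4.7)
The plan is to exploit the thin SVD of $\Ab\Db$ to rewrite $\Qb$, $\Qb^{-1}$ and $\Ab\Db^2\Ab^\ts$ in a common basis, and then show that the squared singular values in question are exactly the eigenvalues of $\Eb^{-1}$, where $\Eb \defeq \Vb^\ts \Wb \Wb^\ts \Vb$. The hypothesis~\eqref{eq:cnd1} gives $\|\Eb - \Ib_m\|_2 \le \zeta/2$, which via Weyl's inequality pins down the spectrum of $\Eb$, and therefore of $\Eb^{-1}$, inside the desired interval.

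More concretely, I will proceed as follows. First, since $\Ab$ has full row rank $m$ and $\Db$ has positive diagonal entries, $\Ab\Db$ is an $m \times n$ matrix of full row rank $m$, so its thin SVD $\Ab\Db = \Ub \Sigmab \Vb^\ts$ has $\Ub \in \RR{m}{m}$ orthogonal, $\Sigmab \in \RR{m}{m}$ diagonal and invertible, and $\Vb \in \RR{n}{m}$ with orthonormal columns. Substituting into~\eqref{eqn:pdprecond} gives
\begin{flalign*}
\Qb \;=\; \Ub \Sigmab \Vb^\ts \Wb \Wb^\ts \Vb \Sigmab \Ub^\ts \;=\; \Ub \Sigmab \Eb \Sigmab \Ub^\ts.
\end{flalign*}
Eqn.~\eqref{eq:cnd1} together with $\zeta \le 1$ implies $\Eb \succ 0$, so $\Qb$ is invertible and $\Qb^{-1} = \Ub \Sigmab^{-1} \Eb^{-1} \Sigmab^{-1} \Ub^\ts$. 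Also, $\Ab\Db^2\Ab^\ts = \Ub \Sigmab^2 \Ub^\ts$.

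Next, I use the standard identity $\sigma_i^2(\Qb^{-\nicefrac{1}{2}}\Ab\Db) = \lambda_i(\Qb^{-\nicefrac{1}{2}} \Ab\Db^2 \Ab^\ts \Qb^{-\nicefrac{1}{2}}) = \lambda_i(\Qb^{-1}\Ab\Db^2\Ab^\ts)$, where the last equality follows from the fact that $\lambda_i(MN) = \lambda_i(NM)$ for compatible matrices. Plugging in the SVD expressions yields
\begin{flalign*}
\Qb^{-1}\Ab\Db^2\Ab^\ts \;=\; \Ub \Sigmab^{-1} \Eb^{-1} \Sigmab^{-1} \Ub^\ts \cdot \Ub \Sigmab^2 \Ub^\ts \;=\; \Ub \Sigmab^{-1} \Eb^{-1} \Sigmab \Ub^\ts,
\end{flalign*}
and applying $\lambda_i(MN) = \lambda_i(NM)$ once more (with $M = \Ub\Sigmab^{-1}\Eb^{-1}$ and $N = \Sigmab\Ub^\ts$) collapses this to $\lambda_i(\Eb^{-1})$, since $\Sigmab\Ub^\ts \cdot \Ub\Sigmab^{-1} = \Ib_m$.

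Finally, hypothesis~\eqref{eq:cnd1} gives $\|\Eb - \Ib_m\|_2 \le \zeta/2$, so by Weyl's inequality every eigenvalue of $\Eb$ lies in $[1 - \zeta/2,\, 1 + \zeta/2]$, hence every eigenvalue of $\Eb^{-1}$ lies in $[(1+\zeta/2)^{-1},\, (1-\zeta/2)^{-1}]$, which is the desired bound. I do not anticipate any real obstacle here; the only step requiring a moment of care is verifying that the two applications of $\lambda_i(MN) = \lambda_i(NM)$ are legitimate (both matrix products are well-defined and the intermediate matrix $\Qb^{-1}\Ab\Db^2\Ab^\ts$ is similar to the symmetric $\Qb^{-\nicefrac{1}{2}}\Ab\Db^2\Ab^\ts\Qb^{-\nicefrac{1}{2}}$, so the eigenvalues are real and positive).
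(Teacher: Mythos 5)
Your proof is correct, and it rests on the same two pillars as the paper's: the thin SVD $\Ab\Db=\Ub\Sigmab\Vb^\ts$ and the hypothesis \eqref{eq:cnd1}. The difference is in how the algebra is organized. You compute $\Qb^{-1}$ explicitly in the SVD basis and establish the \emph{exact} spectral identity $\sigma_i^2(\Qb^{-\nicefrac{1}{2}}\Ab\Db)=\lambda_i\big((\Vb^\ts\Wb\Wb^\ts\Vb)^{-1}\big)$ via two applications of $\lambda_i(MN)=\lambda_i(NM)$, after which the bound is immediate from Weyl. The paper instead never inverts $\Qb$ in closed form: it converts \eqref{eq:cnd1} into a two-sided Loewner inequality, conjugates by $\Ub\Sigmab$ to obtain $(1-\zeta/2)\,\Ab\Db^2\Ab^\ts\preccurlyeq\Qb\preccurlyeq(1+\zeta/2)\,\Ab\Db^2\Ab^\ts$ (its eqn.~\eqref{eq:rich_4}), and then conjugates by $\Qb^{-\nicefrac{1}{2}}$ to reach eqn.~\eqref{eq:normbound}. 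Your route is arguably sharper as an intermediate statement (it identifies the spectrum rather than merely sandwiching it) and slightly shorter; the paper's route has the practical advantage that the intermediate Loewner inequalities \eqref{eq:rich_4} and \eqref{eq:normbound} are reused verbatim later, in Lemma~\ref{lem:normbound} of Appendix~\ref{sxn:rchardson} and Lemma~\ref{lem:sd_bound} of Appendix~\ref{sxn:psd}. One small bookkeeping point common to both arguments: you should (and implicitly do) justify that $\Qb$ is invertible before writing $\Qb^{-\nicefrac{1}{2}}$; your observation that $\Eb\succ 0$ and $\Sigmab$ is invertible handles this, mirroring the paper's rank argument.
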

\begin{proof}
	Consider the condition of eqn.~\eqref{eq:cnd1}:
	\begin{flalign}
	&~\|\Vb^\ts\Wb\Wb^\ts\Vb-\Ib_\dimone\|_2\le\frac{\zeta}{2}~\Leftrightarrow~ -\frac{\zeta}{2}\,\Ib_{\dimone}\preccurlyeq\Vb^\ts\Wb\Wb^\ts\Vb-\Ib_\dimone\preccurlyeq\frac{\zeta}{2}\,\Ib_{\dimone}\label{eq:full}\\
	\Leftrightarrow~&-\frac{\zeta}{2}\,\Ab\Db^2\Ab^\ts\preccurlyeq\Ab\Db\Wb\Wb^\ts\Db\Ab^\ts-\Ab\Db^2\Ab^\ts\preccurlyeq\frac{\zeta}{2}\,\Ab\Db^2\Ab^\ts\label{eq:rich_3}\\
	\Leftrightarrow~&\left(1-\frac{\zeta}{2}\right)\,\Ab\Db^2\Ab^\ts\preccurlyeq\underbrace{\Ab\Db\Wb\Wb^\ts\Db\Ab^\ts}_{\Qb}\preccurlyeq\left(1+\frac{\zeta}{2}\right)\,\Ab\Db^2\Ab^\ts\,.\label{eq:rich_4}
	\end{flalign}
	We obtain eqn.~\eqref{eq:rich_3} by pre- and post-multiplying the previous inequality by $\Ub\Sigmab$ and $\Sigmab\Ub^\ts$ respectively and using the facts that $\Ab\Db=\Ub\Sigmab\Vb^\ts$ and $\Ab\Db^2\Ab^\ts=\Ub\Sigmab^2\Ub^\ts$. Also, from eqn.~\eqref{eq:full}, note that all the eigenvalues of $\Vb^\ts\Wb\Wb^\ts\Vb$ lie between $(1-\frac{\zeta}{2})$ and $(1+\frac{\zeta}{2})$ and thus $\rank(\Vb^\ts\Wb)=m$. Therefore, $\rank(\Ab\Db\Wb)=\rank(\Ub\Sigmab\Vb^\ts\Wb)=m$, as $\Ub\Sigmab$ is non-singular and we know that the rank of a matrix remains unaltered by pre- or post-multiplying it by a non-singular matrix. So, we have $\rank(\Qb)=m$; in words $\Qb$ has full rank. Therefore, all the diagonal entries of $\Sigmab_\Qb$ are positive and $\Qb^{-\nicefrac{1}{2}}\Qb\Qb^{-\nicefrac{1}{2}}=\Ib_m$\,.

Using the above arguments, pre- and post- multiplying eqn.~\eqref{eq:rich_4} by $\Qb^{-1/2}$, we get
	\begin{flalign}
	&~\left(1-\frac{\zeta}{2}\right)\,\Qb^{-1/2}\Ab\Db^2\Ab^\ts\Qb^{-1/2}\preccurlyeq\Ib_{m}\preccurlyeq\left(1+\frac{\zeta}{2}\right)\,\Qb^{-1/2}\Ab\Db^2\Ab^\ts\Qb^{-1/2}\nonumber\\
	\Rightarrow&~\left(1+\frac{\zeta}{2}\right)^{-1}\Ib_{m}\preccurlyeq\Qb^{-1/2}\Ab\Db^2\Ab^\ts\Qb^{-1/2}\preccurlyeq\left(1-\frac{\zeta}{2}\right)^{-1}\Ib_{\dimone}\,.\label{eq:normbound}
	\end{flalign}
Eqn.~\eqref{eq:normbound} implies that all the eigenvalues of $\Qb^{-1/2}\Ab\Db^2\Ab^\ts\Qb^{-1/2}$ are bounded between $\left(1+\frac{\zeta}{2}\right)^{-1}$ and $\left(1-\frac{\zeta}{2}\right)^{-1}$, which concludes the proof of the lemma.
\end{proof}
\noindent The above lemma directly implies eqn.~\eqref{eq:pdcond1}. We now proceed to show that the above construction for $\Qb^{\nicefrac{-1}{2}}$, when combined with the conjugate gradient solver to solve eqn.~\eqref{eq:precond_alt}, indeed satisfies eqn.~\eqref{eq:pdcond2}\footnote{See Chapter 9 of~\cite{Luenberger15} for a detailed overview of CG.}. We do note that in prior work most of the convergence guarantees for CG focus on the error of the approximate solution. However, in our work, we are interested in the convergence of the \textit{residuals} and it is known that even if the energy norm of the error of the approximate solution decreases monotonically, the norms of the CG residuals may oscillate. Interestingly, we can combine a result on the residuals of CG from~\cite{bouyouli2009new} with Lemma~\ref{lem:cond3} to prove that in our setting the norms of the CG residuals also decrease monotonically.

Let $\tilde{\fb}^{(j)}$ be the residual at the $j$-th iteration of the CG algorithm:
$$\tilde{\fb}^{(j)}=\Qb^{-\nicefrac{1}{2}}\Ab\Db^2\Ab^\ts\Qb^{-\nicefrac{1}{2}}\tilde{\zb}^j-\Qb^{-\nicefrac{1}{2}}\pb.$$
Recall from Algorithm~\ref{algo:PCG} that~$\tilde{\zb}^0=\zero$ and thus $\tilde{\fb}^{(0)}=-\Qb^{-\nicefrac{1}{2}}\pb$.
In our parlance, Theorem 8 of~\cite{bouyouli2009new} proved the following bound.
\begin{lemma}[Theorem 8 of \cite{bouyouli2009new}]\label{lem:prev1}%
	Let $\tilde{\fb}^{(j-1)}$ and $\tilde{\fb}^{(j)}$ be the residuals obtained by the CG solver at steps $j-1$ and $j$. Then,%
	\begin{flalign*}
	\|\tilde{\fb}^{(j)}\|_2\le~\frac{\kappa^2(\Qb^{-\nicefrac{1}{2}}\Ab\Db)-1}{2}\|\tilde{\fb}^{(j-1)}\|_2\,,
	\end{flalign*}
	where $\kappa(\Qb^{-\nicefrac{1}{2}}\Ab\Db)$ is the condition number of $\Qb^{-\nicefrac{1}{2}}\Ab\Db$.
	%Combined with Lemma~\ref{lem:cond3} (see Appendix~\ref{} for details), we get
\end{lemma}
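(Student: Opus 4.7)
The lemma is a direct citation of Theorem~8 of~\cite{bouyouli2009new}, so the cleanest proof is to verify that its hypotheses hold in our setting and then invoke it. What needs checking is that (i) we are running standard (preconditioned) CG on a symmetric positive definite system, and (ii) the quantity we denote by $\kappa^2(\Qb^{-\nicefrac{1}{2}}\Ab\Db)$ is the condition number of the system matrix. Both are immediate: the preconditioned system is $\Mb\zb = \Qb^{-\nicefrac{1}{2}}\pb$ with $\Mb := \Qb^{-\nicefrac{1}{2}}\Ab\Db^2\Ab^\ts\Qb^{-\nicefrac{1}{2}}$, which by Lemma~\ref{lem:cond3} is symmetric positive definite with all eigenvalues strictly bounded away from zero; and since $\Qb^{-\nicefrac{1}{2}}\Ab\Db$ has $\Mb$ as its Gram matrix, the eigenvalues of $\Mb$ equal the squared singular values of $\Qb^{-\nicefrac{1}{2}}\Ab\Db$, whence $\kappa(\Mb) = \kappa^2(\Qb^{-\nicefrac{1}{2}}\Ab\Db)$. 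With these in hand, Theorem~8 of~\cite{bouyouli2009new} applied to $\Mb$ yields exactly the claimed inequality.

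If one wanted to reconstruct the bound from first principles rather than cite it, my plan would be: write the $j$-th CG residual as $\tilde{\fb}^{(j)} = \psi_j(\Mb)\tilde{\fb}^{(0)}$ with $\psi_j$ the CG residual polynomial of degree $j$ satisfying $\psi_j(0)=1$; inherit a three-term recurrence for the $\psi_j$ from the underlying Lanczos tridiagonalization of $\Mb$ seeded at $\tilde{\fb}^{(0)}$; take norms of that recurrence, use mutual orthogonality of the CG residuals to cancel the cross terms, and finally bound the tridiagonalization coefficients uniformly in $j$ via the spectral inclusion $\mathrm{spec}(\Mb)\subseteq [\lambda_{\min}(\Mb),\lambda_{\max}(\Mb)]$. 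A short computation then collapses the worst-case constant to $(\lambda_{\max}(\Mb)-\lambda_{\min}(\Mb))/(2\lambda_{\min}(\Mb)) = (\kappa(\Mb)-1)/2 = (\kappa^2(\Qb^{-\nicefrac{1}{2}}\Ab\Db)-1)/2$.

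The main obstacle along the from-scratch route—and the technical heart of~\cite{bouyouli2009new}—is controlling the off-diagonal entries of the Lanczos tridiagonalization of $\Mb$ purely in terms of $\lambda_{\min}(\Mb)$ and $\lambda_{\max}(\Mb)$, uniformly in the iteration index $j$. Since that result is already established, I would simply defer to it; the only new content required here is the spectral verification above, which is furnished by Lemma~\ref{lem:cond3}. It is worth emphasizing that this per-step residual bound is a genuine contraction only when $\kappa^2 < 3$, which is precisely why Lemma~\ref{lem:cond3} was engineered to produce the narrow spectral window it does—for example, at $\zeta=\nicefrac{1}{2}$ the bound $(1+\zeta/2)/(1-\zeta/2) = \nicefrac{5}{3}$ on $\kappa^2$ yields a contraction factor of at most $\nicefrac{1}{3}$, comfortably below $1$.
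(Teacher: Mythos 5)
Your proposal is correct and matches the paper's treatment: the paper gives no proof of this lemma at all, simply citing Theorem~8 of~\cite{bouyouli2009new}, and your verification that the hypotheses apply (the preconditioned matrix $\Qb^{-\nicefrac{1}{2}}\Ab\Db^2\Ab^\ts\Qb^{-\nicefrac{1}{2}}$ is symmetric positive definite by Lemma~\ref{lem:cond3}, and its condition number equals $\kappa^2(\Qb^{-\nicefrac{1}{2}}\Ab\Db)$ since it is the Gram matrix of $\Qb^{-\nicefrac{1}{2}}\Ab\Db$) is exactly the implicit content needed to invoke that theorem. Your closing observation---that the bound is a contraction only when $\kappa^2<3$, which is precisely what the preconditioner guarantees---correctly identifies how the paper uses this lemma downstream.
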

\noindent From Lemma~\ref{lem:cond3}, we get
\begin{flalign}\label{eq:condbd}
\kappa^2(\Qb^{-\nicefrac{1}{2}}\Ab\Db)=\frac{\sigma_{\max}^2(\Qb^{-\nicefrac{1}{2}}\Ab\Db)}{\sigma_{\min}^2(\Qb^{-\nicefrac{1}{2}}\Ab\Db)}\le\frac{1+\zeta/2}{1-\zeta/2}.
\end{flalign}
Combining eqn.~\eqref{eq:condbd} with Lemma~\ref{lem:prev1},
\begin{flalign}
\|\tilde{\fb}^{(j)}\|_2\le~\frac{\frac{1+\zeta/2}{1-\zeta/2}-1}{2}\|\tilde{\fb}^{(j-1)}\|_2
=~\frac{\zeta}{2-\zeta}\|\tilde{\fb}^{(j-1)}\|_2
\le~\zeta \|\tilde{\fb}^{(j-1)}\|_2\label{eq:rec}\,,
\end{flalign}
where the last inequality follows from $\zeta\le1$. Applying eqn.~\eqref{eq:rec} recursively, we get
\begin{flalign}
\|\tilde{\fb}^{(t)}\|_2\le\zeta\|\tilde{\fb}^{(t-1)}\|_2\le\dots\le\zeta^t\|\tilde{\fb}^{(0)}\|_2=\zeta^t\|\Qb^{-\nicefrac{1}{2}}\pb\|_2\nonumber\,,
\end{flalign}
which proves the condition of eqn.~\eqref{eq:pdcond2}.

We remark that one can consider using MINRES~\cite{paige1975solution} instead of CG. Our results hinges on bounding the two-norm of the residual. MINRES finds, at each iteration, the optimal vector with respect the two-norm of the residual inside the same Krylov subspace of CG for the corresponding iteration. Thus, the bound we prove for CG applies to MINRES as well.% and potentially in some cases MINRES might preform better.

%!TEX root = IPM_arxiv.tex

\section{The Infeasible IPM algorithm}\label{sxn:IIPM}

In order to avoid spurious solutions, primal-dual path-following IPMs bias the search direction towards the \emph{central path} and restrict the iterates to a neighborhood of the central path. This search is controlled by the \emph{centering parameter} $\sigma\in[0,1]$.
%  Setting $\sigma$ close to one results in short-step IPMs, which have the best theoretical behavior but poor practical performance; setting $\sigma$ far from one results in long-step IPMs, which have somewhat worse theoretical behavior,  but  work  well  in  practice;  alternating  between steps with $\sigma$ close and far from one results in predictor-corrector algorithms that are a compromise between the two approaches; see () for details.
%
At each iteration, given the current solution $(\xb^{k},\yb^{k},\sbb^{k})$, a standard infeasible IPM obtains the search direction $(\Delta\xb^k,\Delta\yb^k,\Delta\sbb^k)$ by solving the following system of linear equations:
\begin{subequations}\label{eq:system}
	\begin{flalign}
		\Ab\Db^2\Ab^\ts\Delta\yb^k=~&\pb^k\,,\label{eq:normal1}\\
		\Delta\sbb^k=~&-\rb^k_d-\Ab^\ts\Delta\yb^k\,,\label{eq:dels}\\
		\Delta\xb^k=~&-\xb^k+\sigma\mu_k\Sb^{-1}\one_\dimtwo-\Db^2\Delta\sbb^k.\label{eq:delx}
	\end{flalign}
\end{subequations}
Here $\Db$ and $\Sb$ are computed given the current iterate $(\xb^{k}$ and $\sbb^{k})$; we skip the indices on $\Db$ and $\Sb$ for notational simplicity. 
%\haim{Perhaps give $\Db$ and $\Sb$ an index?} 
After solving the above system, the infeasible IPM Algorithm~\ref{algo:iipm} proceeds by computing a step-size $\alphabar$ to return:
\begin{flalign}\label{eqn:update}
	(\xb^{k+1},\yb^{k+1},\sbb^{k+1}) = (\xb^{k},\yb^{k},\sbb^{k}) + \alphabar (\Delta \xb^k,\Delta \yb^k,\Delta \sbb^k).
\end{flalign}
Recall that $\rb^k=(\rb^k_p,\rb^k_d)$ is a vector with  $\rb^k_p=\Ab\xb^k-\bb$ and $\rb^k_d=\Ab^\ts\yb^k+\sbb^k-\cbb$ (the primal and dual residuals). We also use the \textit{duality measure} $\mu_k=\nicefrac{{\xb^k}^\ts\sbb^k}{n}$ and the vector
\begin{flalign}\label{eqn:pdef}
	\pb^k&=-\rb_p^k-\sigma\mu_k\Ab\Sb^{-1}\one_\dimtwo+\Ab\xb^k-\Ab\Db^2\rb^k_d.
\end{flalign}
Given $\Delta\yb^k$ from eqn.~(\ref{eq:normal1}), $\Delta\sbb^k$ and $\Delta\xb^k$ are easy to compute from eqns.~\eqref{eq:dels} and \eqref{eq:delx}, as they only involve matrix-vector products.  However, since we  use Algorithm~\ref{algo:PCG} to solve eqn.~\eqref{eq:normal1} approximately using the sketching-based preconditioned CG solver, the primal and dual residuals \textit{do not} lie on the line segment between $\zero$ and $\rb^0$. This invalidates known proofs of convergence for infeasible IPMs.

For notational simplicity, we now drop the dependency of vectors and scalars on the iteration counter $k$. Let $\hat{\Delta \yb}=\Qb^{\nicefrac{-1}{2}}\tilde{\zb}^t$ be the approximate solution to eqn.~(\ref{eq:normal1}). In order to account for the loss of accuracy due to the approximate solver, we compute $\hat{\Delta\xb}$ as follows:
\begin{flalign}
	\hat{\Delta\xb}=~-\xb+\sigma\mu\Sb^{-1}\one_\dimtwo-\Db^2\hat{\Delta\sbb}-\Sb^{-1}\vb\label{eq:delxhat}.
\end{flalign}
Here $\vb\in\R{n}$ is a perturbation vector that needs to exactly satisfy the following invariant at each iteration of the infeasible IPM:
\begin{flalign}
	\Ab\Sb^{-1}\vb=\Ab\Db^2\Ab^\ts\hat{\Delta\yb}-\pb\,\label{eq:addl}.
\end{flalign}
We note that the computation of $\hat{ \Delta \sbb}$ is still done using, essentially, eqn.~\eqref{eq:dels}, namely
\begin{flalign}
\Delta\hat{\sbb}^k=~&-\rb^k_d-\Ab^\ts\hat{\Delta\yb}^k.\label{eq:delshat}
\end{flalign}
In \cite{Mon03} it is argued that if $\vb$ satisfies eqn.~\eqref{eq:addl}, the primal and dual residuals lie in the correct line segment.

\vspace{0.02in}\noindent\textbf{Construction of $\vb$.} There are many choices for $\vb$ satisfying eqn.~\eqref{eq:addl}. 
%\haim{It might be unclear to the reader how the pseudo inverse pops here. Perhaps say that from the theory presented later, it is desirable for $\vb$ to have a small norm?} 
To prove convergence, it is desirable for $\vb$ to have a small norm, hence a general choice is
$$\vb=(\Ab\Sb^{-1})^{\dagger}(\Ab\Db^2\Ab^\ts\hat{\Delta\yb}-\pb),$$
which involves the computation of the pseudoinverse $(\Ab\Sb^{-1})^{\dagger}$, which is expensive, taking time $\Ocal(\dimone^2\dimtwo)$. Instead, we propose
to construct $\vb$ using the sketching matrix $\Wb$ of Section~\ref{sxn:background}. More precisely, we construct the perturbation vector
\begin{flalign}\label{eq:compv}
	\vb=(\Xb\Sb)^{\nicefrac{1}{2}}\Wb(\Ab\Db\Wb)^{\dagger}(\Ab\Db^2\Ab^\ts\hat{\Delta\yb}-\pb).
\end{flalign}
The following lemma proves that the proposed $\vb$ satisfies eqn.~(\ref{eq:addl}).
%see Appendix~\ref{app:proof1} for the proof.
%
\begin{lemma}\label{lem:fullrankR}
	Let $\Wb\in\RR{\dimtwo}{w}$ be the sketching matrix of Section~\ref{sxn:background} and $\vb$ be the perturbation vector of eqn.~(\ref{eq:compv}). Then, with probability at least $1-\delta$, $\rank(\Ab\Db\Wb)=\dimone$ and
	$\vb$ satisfies eqn.~\eqref{eq:addl}.
\end{lemma}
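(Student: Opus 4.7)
The plan is to verify the two claims separately. First I would establish the rank claim using the sketching property already invoked in the proof of Lemma~\ref{lem:cond3}, and then verify the invariant by direct substitution, exploiting the fact that $\Xb,\Sb,\Db$ are diagonal and commute.

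For the rank claim, recall that by the construction of $\Wb$ from Section~\ref{sxn:background}, with probability at least $1-\delta$ we have the sketching guarantee
$\|\Vb^\ts \Wb\Wb^\ts \Vb - \Ib_m\|_2 \leq \zeta/2$,
where $\Ab\Db = \Ub\Sigmab\Vb^\ts$ is the thin SVD. Since $\zeta\le 1$, all eigenvalues of $\Vb^\ts\Wb\Wb^\ts\Vb$ are strictly positive, so $\Vb^\ts\Wb\in\RR{m}{w}$ has rank $m$. Because $\Ub\Sigmab\in\RR{m}{m}$ is nonsingular (as $\Ab\Db$ has full row rank $m$), multiplication on the left preserves rank, giving $\rank(\Ab\Db\Wb)=\rank(\Ub\Sigmab\Vb^\ts\Wb)=m$. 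This is exactly the argument used in the derivation of eqn.~\eqref{eq:rich_4} in the proof of Lemma~\ref{lem:cond3}, so I would just cite it.

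For the invariant, I would simply substitute $\vb$ into $\Ab\Sb^{-1}\vb$. The key algebraic observation is that $\Db=\Xb^{1/2}\Sb^{-1/2}$ and $\Xb,\Sb$ are diagonal, hence they commute, so
\begin{flalign*}
\Sb^{-1}(\Xb\Sb)^{1/2} = \Sb^{-1}\Xb^{1/2}\Sb^{1/2} = \Xb^{1/2}\Sb^{-1/2} = \Db.
\end{flalign*}
Plugging into the definition of $\vb$ from eqn.~\eqref{eq:compv} yields
\begin{flalign*}
\Ab\Sb^{-1}\vb = \Ab\Db\Wb(\Ab\Db\Wb)^{\dagger}\bigl(\Ab\Db^2\Ab^\ts\hat{\Delta\yb}-\pb\bigr).
\end{flalign*}

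Finally, since $\Ab\Db\Wb\in\RR{m}{w}$ has rank $m$ (full row rank) by the first step, the pseudoinverse identity $(\Ab\Db\Wb)(\Ab\Db\Wb)^{\dagger}=\Ib_m$ holds, so the right-hand side collapses to $\Ab\Db^2\Ab^\ts\hat{\Delta\yb}-\pb$, verifying eqn.~\eqref{eq:addl}. The proof is essentially a one-line calculation once the rank fact is in place; there is no real obstacle, since all the heavy lifting has already been done to prove the sketching bound eqn.~\eqref{eq:cnd1}. The only subtlety to flag is making sure the $(\Xb\Sb)^{1/2}$ factor in $\vb$ is the ``right'' choice so that the $\Sb^{-1}$ on the left combines with it to recreate $\Db$ inside the pseudoinverse identity; this is precisely why the construction in eqn.~\eqref{eq:compv} was engineered this way.
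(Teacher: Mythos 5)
Your proposal is correct and follows essentially the same route as the paper: the rank claim is inherited from the argument in the proof of Lemma~\ref{lem:cond3}, and the invariant is verified by the substitution $\Ab\Sb^{-1}(\Xb\Sb)^{\nicefrac{1}{2}}\Wb=\Ab\Db\Wb$ together with the full-row-rank identity $\Ab\Db\Wb(\Ab\Db\Wb)^{\dagger}=\Ib_\dimone$. No gaps.
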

\begin{proof}
	Let $\Ab\Db=\Ub\Sigmab\Vb^\ts$ be the thin SVD representation of $\Ab\Db$.
	We use the exact same $\Wb$ as discussed in Section~\ref{sxn:PCG}.  Therefore, eqn.~\eqref{eq:cnd1} holds with probability $1-\delta$ and it directly follows from the proof of Lemma~\ref{lem:cond3} that $\rank(\Ab\Db\Wb)=\dimone$.
Recall that $\Ab\Db\Wb$ has full \emph{row-rank} and thus $\Ab\Db\Wb\,(\Ab\Db\Wb)^\dagger=\Ib_\dimone$. Therefore, taking $\vb=(\Xb\Sb)^{\nicefrac{1}{2}}\Wb(\Ab\Db\Wb)^{\dagger}(\Ab\Db^2\Ab^\ts\hat{\Delta\yb}-\pb)$, we get
	\begin{flalign*}
	\Ab\Sb^{-1}\,\vb=&~\Ab\Sb^{-1}(\Xb\Sb)^{\nicefrac{1}{2}}\Wb(\Ab\Db\Wb)^{\dagger}(\Ab\Db^2\Ab^\ts\hat{\Delta\yb}-\pb)\nonumber\\
	=&~\Ab\Db\Wb(\Ab\Db\Wb)^{\dagger}(\Ab\Db^2\Ab^\ts\hat{\Delta\yb}-\pb)\nonumber\\
	=&~\Ab\Db^2\Ab^\ts\hat{\Delta\yb}-\pb\,,
	\end{flalign*}
	where the second equality follows from $\Db = \Xb^{1/2}\Sb^{-1/2}$.
\end{proof}
 \noindent We emphasize here that we use the same exact sketching matrix $\Wb \in \mathbb{R}^{n \times w}$ to form the preconditioner used in the CG algorithm of Section~\ref{sxn:PCG} \textit{as well as} the vector $\vb$ in eqn.(\ref{eq:compv}). This allows us to sketch $\Ab \Db$  only once, thus saving time in practice. Next, we present a bound for the two-norm of the perturbation vector $\vb$ of eqn.~(\ref{eq:compv}).

\begin{lemma}\label{lem:v}
	%Let the condition in eqn.~\eqref{eq:pdcond1} is satisfied.	Then, after $t$ iterations of the PCG solver in Algorithm~\ref{algo:PCG},
	With probability at least $1-\delta$, our perturbation vector $\vb$ in Lemma~\ref{lem:fullrankR} satisfies
	\begin{flalign}
		\|\vb\|_2\le\sqrt{3n\mu}\,\|\tilde{\fb}^{(t)}\|_2,
	\end{flalign}
	with $\tilde{\fb}^{(t)}=\Qb^{-\nicefrac{1}{2}}\Ab\Db^2\Ab^\ts\Qb^{-\nicefrac{1}{2}}\tilde{\zb}^t-\Qb^{-\nicefrac{1}{2}}\pb$.
\end{lemma}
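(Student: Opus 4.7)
The plan is to rewrite $\vb$ so that the residual $\tilde{\fb}^{(t)}$ appears explicitly, reduce the problem to bounding the spectral norm of a sketched diagonal matrix, and then re-invoke the approximate matrix multiplication bound eqn.~\eqref{eqn:pdprec} a second time, now applied to the diagonal matrix $(\Xb\Sb)^{\nicefrac{1}{2}}$ instead of $\Vb^\ts$.

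First I would substitute $\hat{\Delta\yb}=\Qb^{-\nicefrac{1}{2}}\tilde{\zb}^t$ into the definition of the residual to obtain $\Ab\Db^2\Ab^\ts\hat{\Delta\yb}-\pb=\Qb^{\nicefrac{1}{2}}\tilde{\fb}^{(t)}$. On the event provided by Lemma~\ref{lem:fullrankR}, $\Ab\Db\Wb$ has full row rank, so, using $\Qb=(\Ab\Db\Wb)(\Ab\Db\Wb)^\ts$, the pseudoinverse collapses to $(\Ab\Db\Wb)^\dagger=(\Ab\Db\Wb)^\ts\Qb^{-1}=\Wb^\ts\Db\Ab^\ts\Qb^{-1}$. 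Plugging this into eqn.~\eqref{eq:compv} yields
\begin{flalign*}
\vb=(\Xb\Sb)^{\nicefrac{1}{2}}\Wb\Wb^\ts\Db\Ab^\ts\Qb^{-\nicefrac{1}{2}}\tilde{\fb}^{(t)}.
\end{flalign*}
Setting $\tb\defeq\Wb^\ts\Db\Ab^\ts\Qb^{-\nicefrac{1}{2}}\tilde{\fb}^{(t)}\in\R{w}$, the identity $\Ab\Db\Wb\Wb^\ts\Db\Ab^\ts=\Qb$ gives $\|\tb\|_2^2=\tilde{\fb}^{(t)\,\ts}\Qb^{-\nicefrac{1}{2}}\Qb\Qb^{-\nicefrac{1}{2}}\tilde{\fb}^{(t)}=\|\tilde{\fb}^{(t)}\|_2^2$, i.e.\ the rewriting is isometric.

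Since $\vb=(\Xb\Sb)^{\nicefrac{1}{2}}\Wb\tb$ and $\Xb\Sb$ is diagonal positive definite, a Rayleigh quotient argument together with the fact that $M^\ts M$ and $MM^\ts$ have the same nonzero spectrum (with $M=(\Xb\Sb)^{\nicefrac{1}{2}}\Wb$) reduces the task to bounding a single spectral norm:
\begin{flalign*}
\|\vb\|_2^2=\tb^\ts\Wb^\ts\Xb\Sb\Wb\tb\le\nbr{(\Xb\Sb)^{\nicefrac{1}{2}}\Wb\Wb^\ts(\Xb\Sb)^{\nicefrac{1}{2}}}_2\,\|\tilde{\fb}^{(t)}\|_2^2.
\end{flalign*}
To control this spectral norm, I would re-apply eqn.~\eqref{eqn:pdprec} with $\Zb=(\Xb\Sb)^{\nicefrac{1}{2}}$ and $r=1$. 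Because the Cohen construction of Section~\ref{sxn:background} is oblivious, a union bound with the event already used in Lemma~\ref{lem:fullrankR} preserves a $1-\delta$ success probability (after absorbing constants into $\delta$), yielding
\begin{flalign*}
\nbr{(\Xb\Sb)^{\nicefrac{1}{2}}\Wb\Wb^\ts(\Xb\Sb)^{\nicefrac{1}{2}}-\Xb\Sb}_2\le\frac{\zeta}{4}\rbr{\nbr{(\Xb\Sb)^{\nicefrac{1}{2}}}_2^2+\nbr{(\Xb\Sb)^{\nicefrac{1}{2}}}_F^2}.
\end{flalign*}
Using $\nbr{(\Xb\Sb)^{\nicefrac{1}{2}}}_F^2=\xb^\ts\sbb=n\mu$ and $\nbr{(\Xb\Sb)^{\nicefrac{1}{2}}}_2^2=\max_i x_is_i\le n\mu$, the right-hand side is at most $\zeta n\mu/2$. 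Combining with the triangle inequality and $\nbr{\Xb\Sb}_2\le n\mu$ produces $\nbr{(\Xb\Sb)^{\nicefrac{1}{2}}\Wb\Wb^\ts(\Xb\Sb)^{\nicefrac{1}{2}}}_2\le n\mu+\zeta n\mu/2\le 3n\mu/2\le 3n\mu$ for $\zeta\in[0,1]$, and a final square root completes the claim.

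The one delicate point I expect is the reuse of a single realization of $\Wb$ for two different targets: $\Vb^\ts$ (used to build $\Qb$ and to establish the full row rank of $\Ab\Db\Wb$) and $(\Xb\Sb)^{\nicefrac{1}{2}}$ (needed in the new application of eqn.~\eqref{eqn:pdprec}). This rides entirely on the obliviousness of the Cohen sketch plus a union bound over the two events; once this is granted, the remainder is a purely algebraic reduction with no further probabilistic content.
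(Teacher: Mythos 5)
Your proof is correct and takes essentially the same route as the paper's: both reduce the claim to bounding $\nbr{(\Xb\Sb)^{\nicefrac{1}{2}}\Wb}_2^2$ after showing that the pseudoinverse factor acts isometrically on $\tilde{\fb}^{(t)}$ (you via $(\Ab\Db\Wb)^{\dagger}=\Wb^\ts\Db\Ab^\ts\Qb^{-1}$, the paper via the SVD identity $(\Ab\Db\Wb)^{\dagger}\Qb^{\nicefrac{1}{2}}=\widehat{\Vb}\Ub_\Qb^\ts$), and both then apply eqn.~\eqref{eqn:pdprec} to $\Zb=(\Xb\Sb)^{\nicefrac{1}{2}}$ with $\|(\Xb\Sb)^{\nicefrac{1}{2}}\|_F^2=n\mu$ to obtain the $3n\mu$ bound.
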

\begin{proof}
Recall that $\Qb=\Ab\Db\Wb(\Ab\Db\Wb)^\ts=\Ub_\Qb\Sigmab_\Qb\Ub_\Qb^\ts$. Also, $\Ub_\Qb$ and $\Sigmab_\Qb^{\nicefrac{1}{2}}$ are (respectively) the matrices of the left singular vectors and the singular values of $\Ab\Db\Wb$. Now, let $\widehat{\Vb}$ be the right singular vector of $\Ab\Db\Wb$. Therefore, $\Ab\Db\Wb=\Ub_\Qb\Sigmab_\Qb^{\nicefrac{1}{2}}\widehat{\Vb}^\ts$ is the thin SVD representation of $\Ab\Db\Wb$. Also, from Lemma~\ref{lem:cond3}, we know that $\Qb$ has full rank. Therefore, $\Qb^{\nicefrac{1}{2}}\Qb^{\nicefrac{-1}{2}}=\Ib_m$. Next, we bound $\|\vb\|_2$:
\begin{flalign}
	\|\vb\|_2=&~\|(\Xb\Sb)^{\nicefrac{1}{2}}\Wb(\Ab\Db\Wb)^{\dagger}(\Ab\Db^2\Ab^\ts\hat{\Delta\yb}-\pb)\|_2\nonumber\\
	=&~\|(\Xb\Sb)^{\nicefrac{1}{2}}\Wb(\Ab\Db\Wb)^{\dagger}\Qb^{\nicefrac{1}{2}}\Qb^{\nicefrac{-1}{2}}(\Ab\Db^2\Ab^\ts\hat{\Delta\yb}-\pb)\|_2\nonumber\\
	\le&~\|(\Xb\Sb)^{\nicefrac{1}{2}}\Wb(\Ab\Db\Wb)^{\dagger}\Qb^{\nicefrac{1}{2}}\|_2\,\|\tilde{\fb}^{(t)}\|_2\label{eq:s1}.
\end{flalign}
In the above we used $\Qb^{\nicefrac{-1}{2}}(\Ab\Db^2\Ab^\ts\hat{\Delta\yb}-\pb)=\tilde{\fb}^{(t)}$. Using the SVD of $\Ab\Db\Wb$ and $\Qb$, we get $(\Ab\Db\Wb)^{\dagger}\Qb^{\nicefrac{1}{2}}=\widehat{\Vb}\Sigmab_\Qb^{-1/2}\Ub_\Qb^\ts\,\Ub_\Qb\Sigmab_\Qb^{1/2}\Ub_\Qb^\ts=\widehat{\Vb}\Ub_\Qb^\ts$. Now, note that $\Ub_\Qb\in\RR{m}{m}$ is an orthogonal matrix and $\|\widehat{\Vb}\|_2=1$. Therefore, combining with eqn.~\eqref{eq:s1} yields
	\begin{flalign}
	\|\vb\|_2\le&~\|(\Xb\Sb)^{\nicefrac{1}{2}}\Wb\widehat{\Vb}\Ub_\Qb^\ts\|_2\|\tilde{\fb}^{(t)}\|_2=\|(\Xb\Sb)^{\nicefrac{1}{2}}\Wb\widehat{\Vb}\|_2\|\tilde{\fb}^{(t)}\|_2\nonumber\\
	\le&\|(\Xb\Sb)^{\nicefrac{1}{2}}\Wb\|_2\|\tilde{\fb}^{(t)}\|_2.\label{eq:semi}
	\end{flalign}
The first equality follows from the unitary invariance property of the spectral norm and the second inequality follows from the sub-multiplicativity of the spectral norm and $\|\widehat{\Vb}\|_2=1$.
Our construction for $\Wb$ implies that eqn.~\eqref{eqn:pdprec} holds for any matrix $\Zb$ and, in particular, for $\Zb=(\Xb\Sb)^{\nicefrac{1}{2}}$. Eqn.~\eqref{eqn:pdprec} implies that
	\begin{flalign}\label{eq:w2}
	\nbr{(\Xb\Sb)^{\nicefrac{1}{2}} \Wb \Wb^\ts(\Xb\Sb)^{\nicefrac{1}{2}} - (\Xb\Sb)}_2\le \frac{\zeta}{4} \left(\|(\Xb\Sb)^{\nicefrac{1}{2}}\|_2^2+\frac{\|(\Xb\Sb)^{\nicefrac{1}{2}}\|_F^2}{m}\right)
	\end{flalign}
	holds with probability at least $1-\delta$. Applying Weyl's inequality on the left hand side of the  eqn.~\eqref{eq:w2}, we get
	\begin{flalign}\label{eq:semi2}
	\abs{\nbr{(\Xb\Sb)^{\nicefrac{1}{2}} \Wb}_2^2-\nbr{(\Xb\Sb)^{\nicefrac{1}{2}}}_2^2}\le  \frac{\zeta}{4} \left(\|(\Xb\Sb)^{\nicefrac{1}{2}}\|_2^2+\frac{\|(\Xb\Sb)^{\nicefrac{1}{2}}\|_F^2}{m}\right).
	\end{flalign}
	Using $\zeta\le 1$ and $\|(\Xb\Sb)^{\nicefrac{1}{2}}\|_2\le \|(\Xb\Sb)^{\nicefrac{1}{2}}\|_F \le \xb^\ts\sbb=n\mu$, we get\footnote{The constant 3 in eqn.~(\ref{eq:semi3}) could be slightly improved to \nicefrac{3}{2}; we chose to keep the suboptimal constant to make our results compatible with the work in~\cite{Mon03} and avoid reproving theorems from~\cite{Mon03}. The better constant does not result in any significant improvements in the number of iterations of the Algorithm~\ref{algo:iipm}.}
	\begin{flalign}\label{eq:semi3}
	\nbr{(\Xb\Sb)^{\nicefrac{1}{2}} \Wb}_2^2\le 3\|(\Xb\Sb)^{\nicefrac{1}{2}}\|_F^2=3n \mu.
	\end{flalign}
	Finally, combining eqns.~\eqref{eq:semi} and \eqref{eq:semi3}, we conclude
	\begin{flalign*}
	\|\vb\|_2\le\sqrt{3n\mu}\|\tilde{\fb}^{(t)}\|_2.
	\end{flalign*}
\end{proof}
\noindent The above result is particularly useful in proving the convergence of Algorithm~\ref{algo:iipm}. More precisely, combining a result from~\cite{Mon03} with our preconditioner $\Qb^{\nicefrac{-1}{2}}$, we can prove that
$\|\Qb^{\nicefrac{-1}{2}}\pb\|_2\le \Ocal(n)\sqrt{\mu}$.
This bound allows us to prove that if we run Algorithm~\ref{algo:PCG} for $\Ocal(\log n)$ iterations, then
$$\|\tilde{\fb}^{(t)}\|_2\le\frac{\gamma\sigma}{4\sqrt{\dimtwo}}\sqrt{\mu}\ \mbox{and}\ \|\vb\|_2\le\frac{\gamma\sigma}{4}\mu.$$
The last two inequalities are critical in the convergence analysis of Algorithm~\ref{algo:iipm}; see Appendix~\ref{app:convergence} for details.

We are now ready to present the infeasible IPM algorithm. We will need the following definition for the neighborhood
{\small\begin{flalign}
		&\mathcal{N}(\gamma)=\Big\{(\xb^k,\yb^k,\sbb^k):x_i^k s_i^k\ge(1-\gamma)\mu\,~\text{and}~\,\frac{\|\rb^k\|_2}{\|\rb^0\|_2} \leq \frac{\mu_k}{\mu_0},\Big\}.\nonumber
\end{flalign}}
Here $\gamma \in (0,1)$ and we note that the duality measure $\mu_k$ steadily reduces at each iteration.

\begin{algorithm}[H]
	\caption{Infeasible IPM}\label{algo:iipm}%
	\begin{algorithmic}
		
		\State \textbf{Input:}
		$\Ab\in\RR{\dimone}{\dimtwo}$,  $\bb\in\R{\dimone}$, $\cbb\in\R{\dimtwo}$, $\gamma \in (0,1)$, tolerance $\epsilon> 0$, centering parameter $\sig\in (0,\nicefrac{4}{5})$;
		
		\vspace{1mm}
		\State \textbf{Initialize:} $k\gets 0$; initial point $(\xb^{0},\yb^{0},\sbb^{0})$;
		
		%------------------------------------------------------
		\vspace{1mm}
		\While{$\mu_k > \epsilon$}
		%\begin{enumerate}
		%[(a)]
		%\State (a) $(\xb,\yb,\sbb):=(\xb^{k},\yb^{k},\sbb^{k});$ %$\rb=(\rb_p,\rb_d)=(\Ab\xb-\bb,\Ab^\ts\yb+\sbb-\cbb)$ and
		%and choose $\sig \in (0,\nicefrac{4}{5})$.
		%\item Let $\Db= \Xb^{\nicefrac{1}{2}}\Sb^{-\nicefrac{1}{2}} \eb$ and $\rb=(\rb_p,\rb_d)=(\Ab\xb-\bb,\Ab^\ts\yb+\sbb-\cbb)$.
		%                 \begin{flalign}
		%                 \rb_p &= -\Ab \xb - b \\
		%                 \rb_d & = \Ab^T \yb+\sbb - c \\
		%                 \rb & =  (\rb_p,\rb_d)
		%                 \end{flalign}
		
		\vspace{1mm}
		\State (a)~Compute sketching matrix $\Wb \in \mathbb{R}^{n \times w}$ (Section~\ref{sxn:background}) with $\zeta=1/2$ and $\delta = O(n^{-2})$;
		\vspace{1mm}
		\State (b)~Compute $\rb_p^k=\Ab\xb^k-\bb$; $\rb_d^k=\Ab^\ts\yb^k+\sbb^k-\cbb$; and $\pb^k$ from eqn.~(\ref{eqn:pdef});
		\vspace{1mm}
		\State (c)~Solve the linear system of eqn.~\eqref{eq:precond_alt} for $\zb$ using Algorithm~\ref{algo:PCG} with $\Wb$ from step (a) and $t=\Ocal(\log n)$. Compute $\hat{\Delta \yb}=\Qb^{\nicefrac{-1}{2}}{\zb}$;
		\vspace{1mm}
		\State (d)~Compute $\vb$ using eqn.~\eqref{eq:compv} with $\Wb$ from step (a); $\hat{\Delta\sbb}$ using eqn.~\eqref{eq:dels}; $\hat{\Delta\xb}$ using eqn.~\eqref{eq:delxhat};
		\vspace{1mm}
		\State (e)~\label{stepalpha1} Compute $\alphamax = \argmax\{ \alpha \in [0,1] : (\xb^k,\yb^k,\sbb^k) + \alpha (\hat{\Delta \xb}^k,\hat{\Delta \yb}^k,\hat{\Delta \sbb}^k)  \in \neigh\}$.
		\vspace{1mm}
		\State (f)~\label{stepalpha2}Compute $\alphaused = \argmin\{\alpha \in [0, \alphamax]: (\xb^k + \alpha \hat{\Delta \xb}^k)^\ts (\sbb^k + \alpha \hat{\Delta\sbb}^k)\}$.
		\vspace{1mm}
		\State (g)~Compute $(\xb^{k+1}, \yb^{k+1}, \sbb^{k+1}) = (\xb^k,\yb^k,\sbb^k) + \alphabar (\hat{\Delta \xb}^k,\hat{\Delta \yb}^k,\hat{\Delta \sbb}^k)$; set $k \gets k + 1$;
		
		\EndWhile
	\end{algorithmic}
\end{algorithm}

\vspace{0.02in}\noindent\textbf{Running time.} We start by discussing the running time to compute $\vb$. As discussed in Section~\ref{sxn:PCG},
$(\Ab\Db\Wb)^{\dagger}$ can be computed in $\Ocal(\nnz{\Ab}\cdot \log(m/\delta)+m^3\log(m/\delta))$ time.  Now, as $\Wb$ has $\Ocal(\log(m/\delta))$ non-zero entries per row, pre-multiplying by $\Wb$ takes $\Ocal(\nnz{\Ab}\log(m/\delta))$ time (assuming $\nnz{A}\ge n$). Since $\Xb$ and $\Sb$ are diagonal matrices, computing $\vb$ takes  $\Ocal(\nnz{\Ab}\cdot \log(m/\delta)+m^3\log(m/\delta))$ time, which is asymptotically the same as computing $\Qb^{\nicefrac{-1}{2}}$ (see eqn.~(\ref{eqn:svdQ})).

We now discuss the overall running time of Algorithm~\ref{algo:iipm}. At each iteration, with failure probability $\delta$, the preconditioner $\Qb^{\nicefrac{-1}{2}}$ and the vector $\vb$ can be computed in
$\Ocal(\nnz{\Ab}\cdot \log(m/\delta)+m^3\log(m/\delta))$ time. In addition, for $t=\Ocal(\log n)$ iterations of Algorithm~\ref{algo:PCG}, all the matrix-vector products in the CG solver can be computed in $\Ocal(\nnz{\Ab}\cdot \log n)$ time. Therefore, the computational time for steps (a)-(d) is given by $\Ocal(\nnz{\Ab}\cdot(\log n+ \log(m/\delta))+m^3\log(m/\delta))$. Finally, taking a union bound over all iterations with $\delta=\Ocal(n^{-2})$ (ignoring constant factors), Algorithm~\ref{algo:iipm} converges with probability at least 0.9. The running time at each iteration is given by
$\Ocal((\nnz{\Ab}+m^3)\log n)$.

%!TEX root = IPM_arxiv.tex

\section{Extensions}\label{sxn:extensions}

We briefly discuss extensions of our work. First, there is nothing special about using a CG solver for solving eqn.~\eqref{eq:precond_alt}. We analyze two more solvers that could replace the proposed CG solver without
any loss in accuracy or any increase in the number of iterations for the long-step infeasible IPM Algorithm~\ref{algo:iipm} of Section~\ref{sxn:IIPM}. In Appendix~\ref{sxn:rchardson}, we analyze the performance of the
preconditioned Richardson Iteration and in Appendix~\ref{sxn:psd}, we analyze the performance of the preconditioned Steepest Descent.
%%
%%
%In both cases, we construct the same preconditioner that we proposed in
%Section~\ref{sxn:PCG} and combine it with the respective method. In both cases, we need to argue that the
%\textit{residual error} of the respective iterative method satisfies a condition similar to the one in
%eqn.~(\ref{eq:pdcond2}). We again note that existing literature on these two methods has mostly focused on
%the behavior of the energy norm of the actual solution vectors and not on the behavior of the residual errors.
%We prove the necessary conditions in Appendix~\ref{sxn:rchardson} and \ref{sxn:psd}.
%%
In both cases, if the respective preconditioned solver (with the preconditioner of Section~\ref{sxn:PCG}) runs for $t = \Ocal(\log n)$ steps,  Theorem~\ref{thm:1} still holds, with small differences in the constant terms. While preconditioned Richardson iteration and preconditioned Steepest Descent are interesting from a theoretical perspective, they are not particularly practical.
%\haim{Did we remove the sentence on Chebyshev semi-iterative method from the ICML version due to space constraints? If so, maybe we should return them in the arxiv version?}
%
In future work, we will also consider the preconditioned
Chebyshev semi-iterative method, which offers practical advantages compared to PCG in parallel settings.

Second, recall that our approach focused on full rank input matrices $\Ab \in \mathbb{R}^{m \times n}$ with $m \ll n$. Our overall approach still works if $\Ab$ in any $m \times n$ matrix that is low-rank, e.g., $\rank(\Ab)=k\ll \min\{\dimone,\dimtwo\}$. In that case, using the thin SVD of $\Ab$, we can rewrite the linear constraints as follows
%
%\begin{flalign*}
$\Ub_\Ab\Sigmab_\Ab\Vb_\Ab^\ts\xb=\bb$,
%\end{flalign*}
%
where $\Ub_\Ab\in\RR{\dimone}{k}$ and $\Vb_\Ab\in\RR{\dimtwo}{k}$ are the matrices of left and right singular vecors of $\Ab$ respectively; $\Sigmab_\Ab\in\RR{k}{k}$ is the diagonal matrix with the $k$ non-zero singular values of $\Ab$ as its diagonal elements. The LP of eqn.~\eqref{eq:primal} can be restated as
\begin{flalign}
\min\,\cbb^\ts\xb\,,\text{ subject to }\Vb_\Ab^\ts\xb=\widetilde{\bb}\,,\xb\ge \zero\,,\label{eq:primal2}
\end{flalign}
where $\widetilde{\bb}=\Sigmab_\Ab^{-1}\Ub_\Ab^\ts\bb$. Note that, $\rank(\Vb_\Ab)=k\ll \dimtwo$ and therefore eqn.~\eqref{eq:primal2} can be solved using our framework. The matrices $\Ub_\Ab$, $\Vb_\Ab$, and $\Sigmab_\Ab$ can be approximately recovered using the fast SVD algorithms of \cite{Halko2011,BouDriMag14,clarkson2017low}. However, the accuracy of the final solution will depend on the accuracy of the approximate SVD and we defer this analysis to future work.
%\haim{It is unclear how the quality of the approximation affects the quality of the final solution. Maybe we should add a remark about that?} \agniva{We added a line in blue. Is that enough? We do not have any explicit guarantee on how such approximation will affect the quality of the final solution.} \haim{Yes, this is fine.}

Third, even though we chose to use the Count-Min sketch and its analysis from~\cite{Cohen2016} (Section~\ref{sxn:background}), there are many other alternative sketching matrix constructions that would lead to similar results. A particularly simple one is the Gaussian sketching matrix $\Wb_G \in \mathbb{R}^{n \times w}$, where every entry is a $\mathcal{N}(0,1)$ random variable. Setting $w=\Ocal\left(\nicefrac{m+\log (1/ \delta)}{\zeta^{2}}\right)$
would result in the same accuracy guarantees as the sketching matrix of Section~\ref{sxn:background}. However, the (theoretical) running time needed to compute $\Ab \Db \Wb$ increases to $\Ocal (m \cdot\nnz{\Ab} )$. In practice, at least for relatively small matrices, using Gaussian sketching matrices is a reasonable alternative; see the discussion in~\cite{Meng2014SISC} which argued that the Gaussian matrix sketching-based solvers are considerably better than direct solvers. We also opted to use Gaussian matrices in our empirical evaluation, since we primarily interested in measuring the accuracy of the final solution as a function of the number of iterations of the solver and the IPM algorithm. Other known constructions of sketching matrices that are also applicable in our setting include (any) sub-gaussian sketching matrix; the Subsampled Randomized Hadamard transform (SRHT); and any of the Sparse Subspace Embeddings of~\cite{ClaWoo13,nelson2013osnap,meng2013low,cohen2016nearly}.
%

%!TEX root = IPM_arxiv.tex

\begin{figure}[t]
	\centering
	\subfigure{
		%\label{fig:ARCENE_v} %% label for second subfigure
		\includegraphics[width=2.8in]{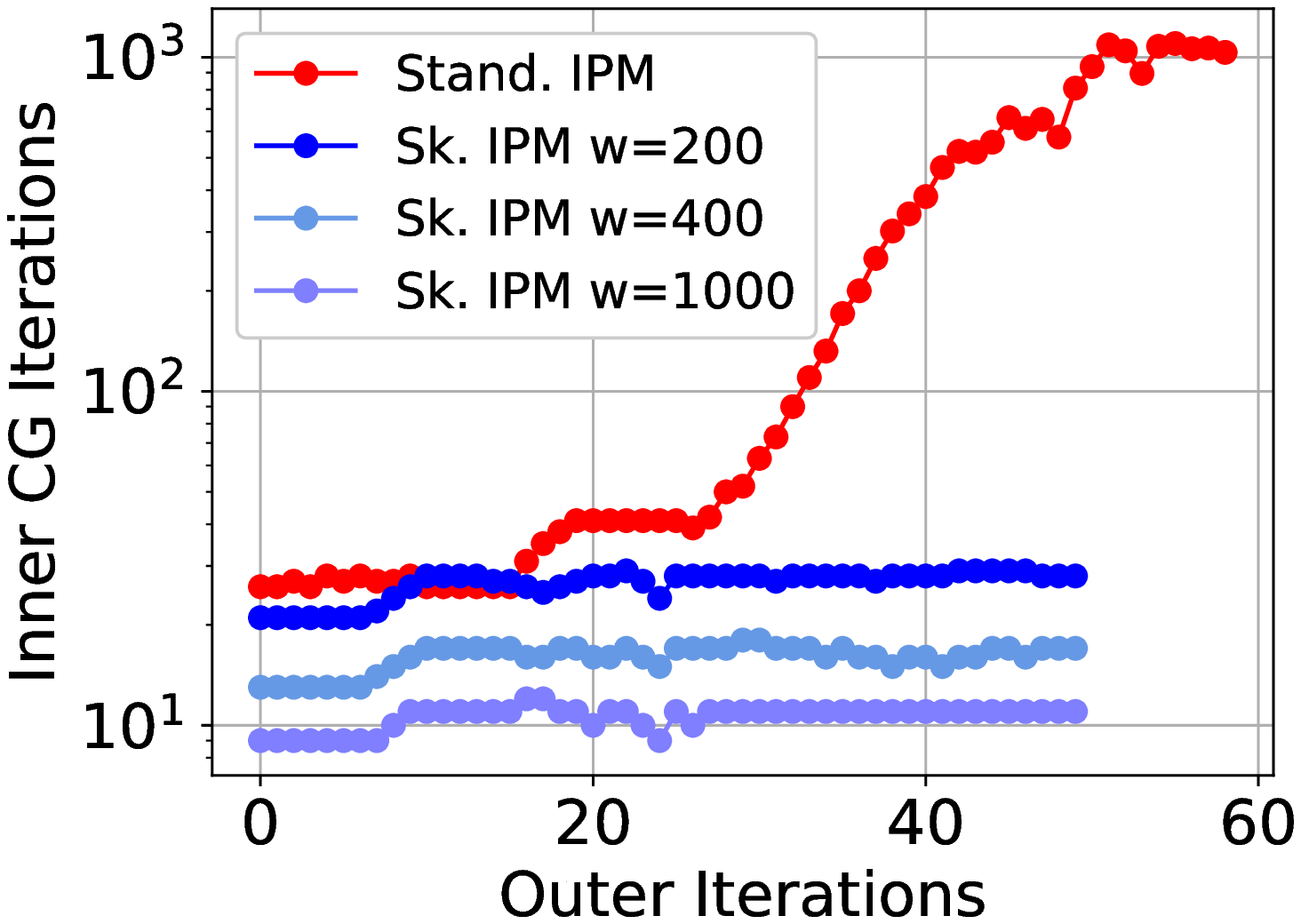}}
	\subfigure{
		%\label{fig:ARCENE_v} %% label for second subfigure
		\includegraphics[width=2.8in]{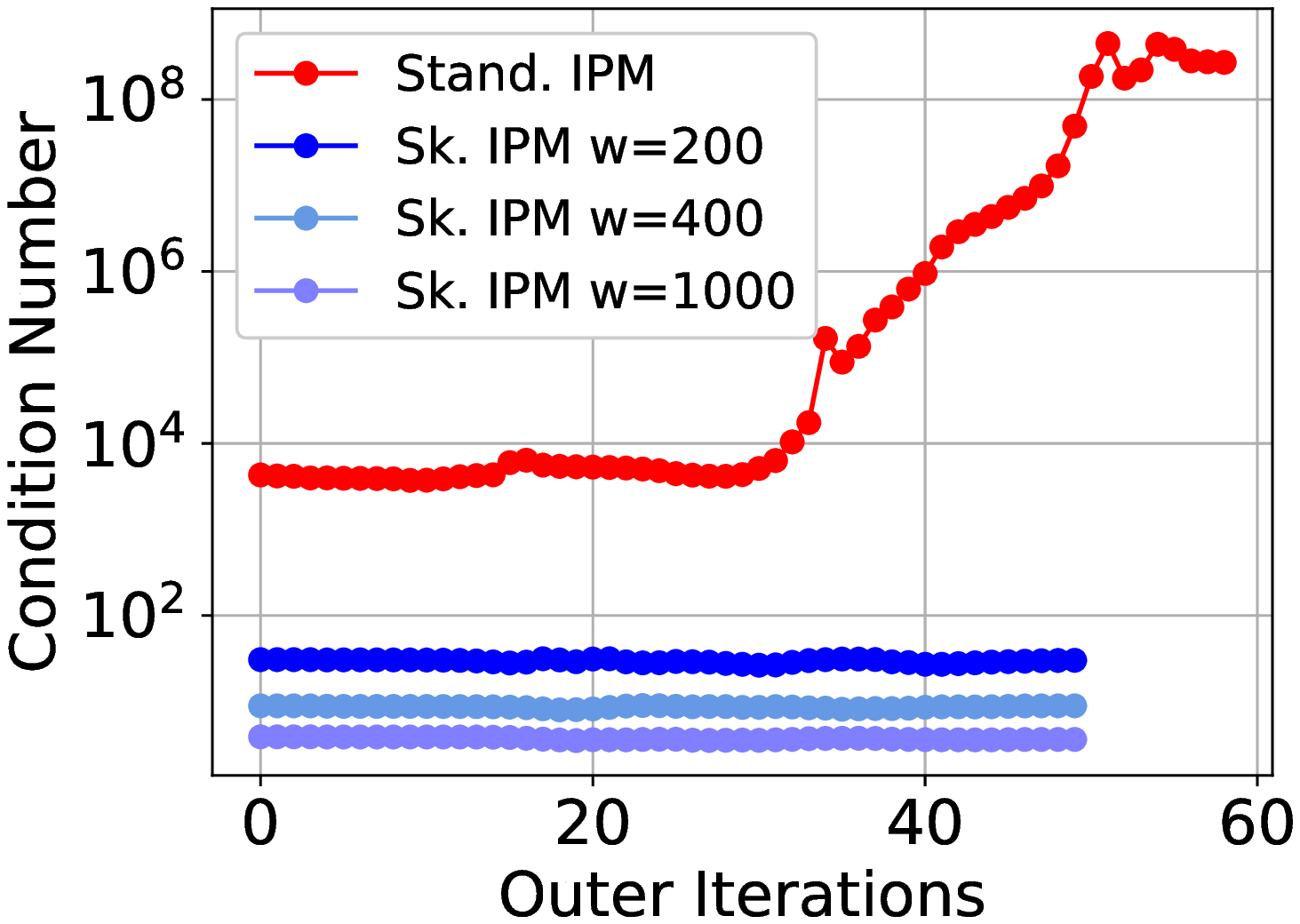}}
%	\caption{\emph{ARCENE data set:} Our Algorithm~\ref{algo:iipm} (Sk. IPM) requires an order of magnitude fewer (a) inner iterations than the Standard IPM with CG, at each outer iteration, due to the improved (b) conditioning of \precNormal~compared to $\Ab \Db^2 \Ab^T$.}
	%\label{fig:iter_ARCENE}
\vspace{-5pt}
\\
\addtocounter{subfigure}{-2}
\subfigure[ ]{
		%\label{fig:ARCENE_v} %% label for second subfigure
		\includegraphics[width=2.8in]{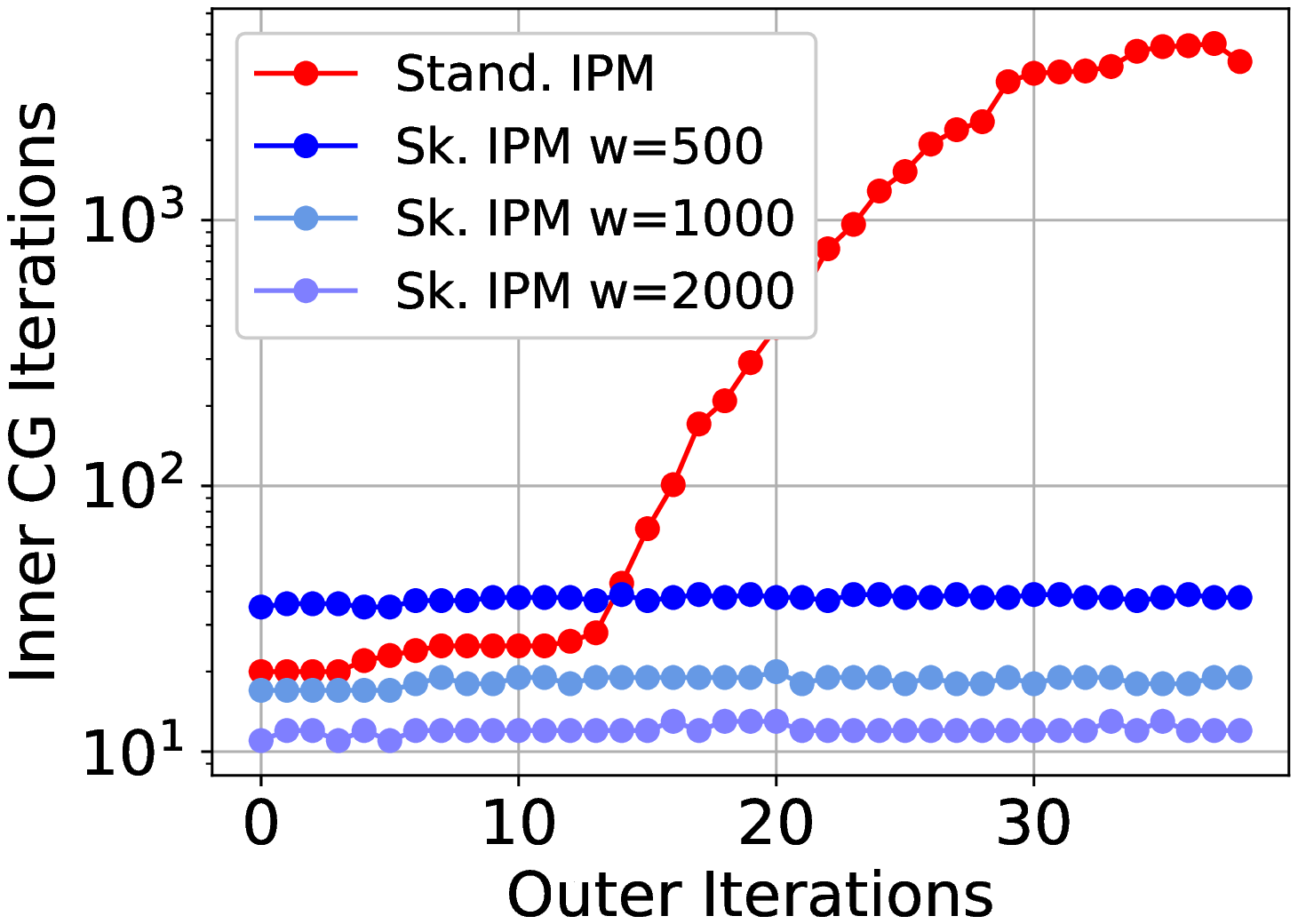}}
	\subfigure[]{
		%\label{fig:ARCENE_v} %% label for second subfigure
		\includegraphics[width=2.8in]{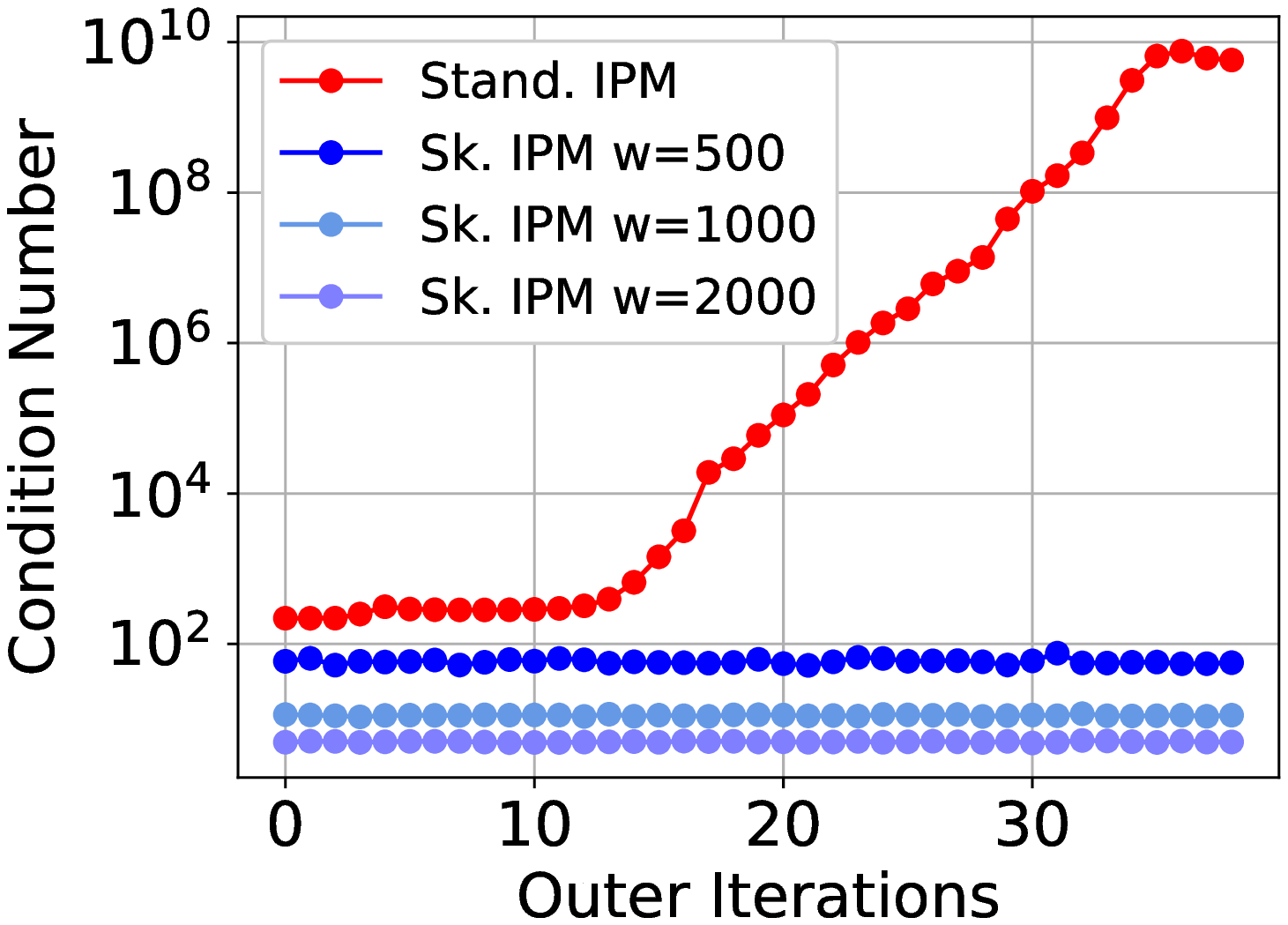}}
	\caption{\emph{ARCENE (top row) and DEXTER (bottom row) data sets}: Our algorithm (Sk. IPM) requires an order of magnitude fewer inner iterations than the Standard IPM with CG at each outer iteration, as demonstrated in (a). This is possibly due to the improved conditioning of \precNormal~compared to $\Ab \Db^2 \Ab^T$, as shown in (b).
		For all experiments \tolCG~$ = 10^{-5}$ and \tolOuterRes~$ = 10^{-9}$.}
	\label{fig:iter_ARCENE}

\end{figure}

\begin{figure}[t]
	\centering
	\subfigure[Max. Inner CG Iterations.]{
		%\label{fig:ARCENE_iter} %% label for first subfigure
		\includegraphics[width=2.57in]{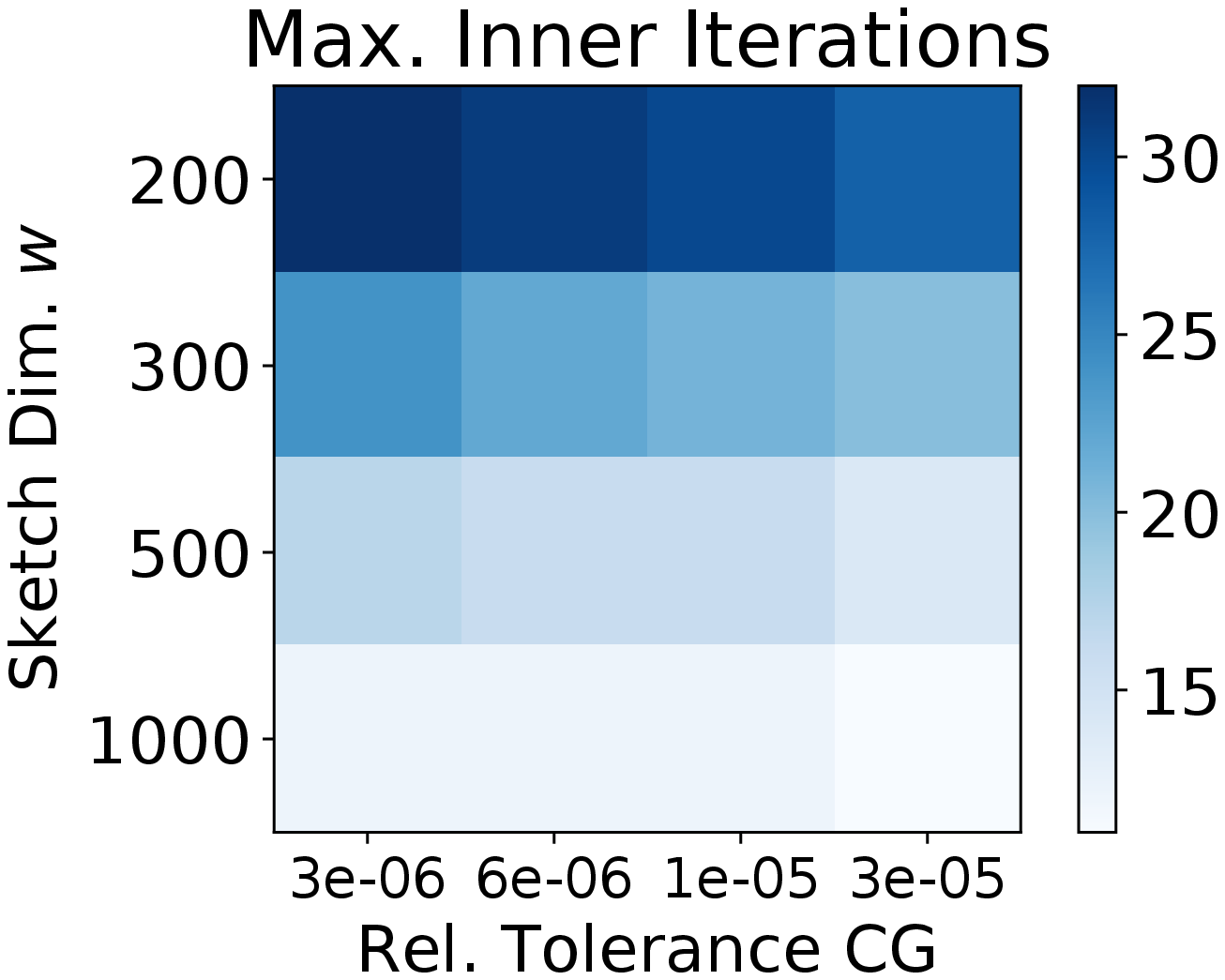}}
	\subfigure[Max. Condition Number.]{
		\label{fig:ARCENE_iter} %% label for first subfigure
		\includegraphics[width=2.57in]{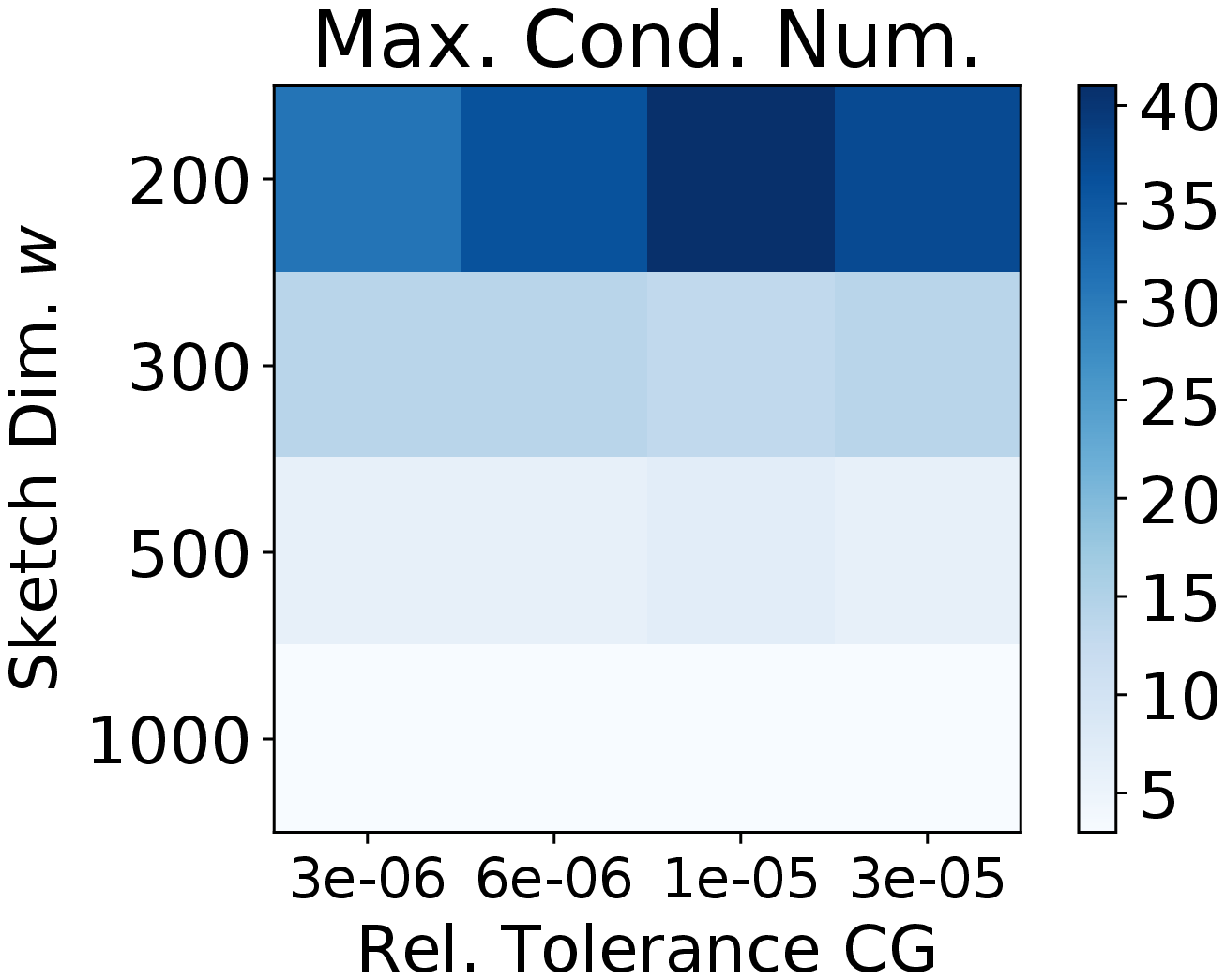}}
	% \subfigure[Outer iterations for various ($w$, \tolCG) settings.]{
	%     \label{fig:ARCENE_iter} %% label for first subfigure
	%     \includegraphics[width=1.57in]{figures/svm_ARCENE_m100_n20001_5_im_iter.pdf}}
	% \subfigure[$\| v \|_2$ for various ($w$, \tolCG) settings.]{
	%     \label{fig:ARCENE_v} %% label for second subfigure
	%     \includegraphics[width=1.57in]{figures/svm_ARCENE_m100_n20001_4_im_v.pdf}}
	\caption{\emph{ARCENE data set}: for various ($w$, \tolCG) settings,
		(a)~the maximum number of inner iterations used by our algorithm and (b)~the maximum condition number of \precNormal, across outer iterations. The standard IPM, across all settings, needed on the order of 1,000 iterations and  $\kappa(\Ab \Db^2\Ab^T)$ was on the order of $10^{8}$.
		The relative error was fixed to $0.04\%$. }
	\label{fig:ARCENE} %% label for entire figure
\end{figure}

\section{Experiments}\label{sec:exp}

Here we demonstrate the empirical performance of our algorithm on a variety of real-world data sets from the UCI ML Repository~\cite{Dua2019}, such as ARCENE, DEXTER~\cite{guyon2005result}, DrivFace~\cite{diaz2016reduced},
%a problem concerned with identifying the gaze direction in photos of human subjects taken while driving.
%In addition, we also consider a
and a gene expression cancer RNA-Sequencing dataset that is part of the  PANCAN dataset~\cite{Weinstein2013}. See Table~\ref{tableSVM} for a description of the datasets.
%and we also perform experiments on synthetic data sets (see Appendix~\ref{label} for details).
%, accessible on the UCI ML Repository, which is part of the RNA-Seq (HiSeq) PANCAN data set \cite{Weinstein2013}. more details about the data set are found in Appendix~\ref{app:real}).
The experiments were implemented in Python and we observed that the results for both synthetic data (generated as described in Appendix~\ref{app:rand}) and real-world data were qualitatively similar. 
%\haim{Last sentence: in arxiv, there is no reason to delegate to appendix, right?} 
Thus, we highlight results on several representative real datasets. 
%The experiments were implemented in Python and run on a server with Intel E5-2623V3@3.0GHz 8 cores and 64GB RAM. \haim{Are we reporting running times? If not, there is no reason to report machine capabilities.}

As an application, we consider $\ell_1$-regularized SVMs. All of the data sets are concerned with binary classification with $m \ll n$, where $n$ is the number of features.
%The SVM problem is a core model in machine learning that is crucial for applications in both regression and classification.
%While there are many variations of SVMs, we use the classical version of SVMs with an $\ell_1$ regularizer to illustrate the application of our algorithm.
In Appendix~\ref{app:svm}, we describe the $\ell_1$-SVM problem and how it can be formulated as an LP. Here, $m$ is the number of training points, $n$ is the feature dimension, and the size of the constraint matrix in the LP becomes $m \times (2n +1)$.
%
%We demonstrate the empirical performance of our algorithm. We ran our algorithm on a variety of real-world and synthetic data sets, and find that the results for both synthetic data (generated as described in Appendix~\ref{app:rand}) and real-world data were qualitatively similar. In this paper, we highlight results on several representative real data sets. The experiments were implemented in Python and run on a server with Intel E5-2623V3@3.0GHz 8 cores and 64GB RAM.

%As an application, we consider linear support vector machines. The SVM problem is a core model in machine learning that is crucial for applications in both regression and classification. While there are many variations of SVMs, we use the classical version of SVMs with an $\ell_1$ regularizer to illustrate the application of our algorithm. In Appendix~\ref{app:svm}, we describe the $\ell_1$-SVM problem and how it can be formulated as an LP. In this setting, $m$ is the number of training points, $N$ is the feature dimension, and the size of the size of the constraint matrix in the LP becomes $( m \times (2N +1))$.
%-------------------------

\vspace{0.02in}\noindent\textbf{Comparisons and Metrics}.
We compare our Algorithm~\ref{algo:iipm} with a standard IPM (see Chapter 10,~\cite{NumericalRecipes}) using CG, and a standard IPM using a direct solver.
%\palma{cite numerical recipes, change .bib once get token back}
We also use CVXPY as a benchmark to compare the accuracy of the solutions; we define the \emph{relative error} $\nicefrac{\|\hat{\xb} - \xb^\star\|_2}{\|\xb^\star\|_2}$, where $\hat{\xb}$ is our solution and $\xb^\star$ is the solution generated by CVXPY. We also consider the number of \emph{outer iterations}, namely the number of iterations of the IIPM algorithm, as well as the  number of \emph{inner iterations}, namely the number of iterations of the CG solver.
%The inner iterations is highly dependent on the condition number of  the matrices in the normal equation (\eqref{eq:normal} or \eqref{eq:precond}), which we also report.
We denote the relative stopping tolerance for CG by \tolCG~ and we denote the
outer iteration residual error by \tolOuterRes. If not specified: \tolOuterRes~$= 10^{-9}$, \tolCG~$ = 10^{-5}$, and $\sigma = 0.5$.
We evaluated a Gaussian sketching matrix, and the initial triplet $(\xb, \yb, \sbb)$ for all IPM algorithms was set to be all ones.

\begin{table*}[ht]
	\caption{Comparison of (our) sketched IPM with CG, standard IPM with CG, and Standard IPM with a direct solver, for the $\ell_1$-SVM problem on UCI Machine Learning Repository~\cite{Dua2019} data sets. Across all, $\tau = 10^{-9}$ and a relative error of $10^{-3}$ or less was achieved. We define $\kappa_{\text{Sk}} = $ \kprecNormal~and $\kappa_{\text{Stan}} = \kappa(\Ab \Db^2\Ab^T)$.} \label{tableSVM}
	\begin{center}
		\resizebox{\textwidth}{!}{\begin{tabular}{|c|c|c|c|c|c|c|c|c|c|}
			\hline
			\multicolumn{1}{|c|}{\textbf{Problem}} &
			\multicolumn{1}{|c|}{\textbf{Size}} &
			\multicolumn{4}{|c|}{\textbf{Sketch IPM w/ Precond. CG}} &
			\multicolumn{3}{|c|}{\textbf{Stand. IPM w/ Unprec. CG}} &
			\multicolumn{1}{|c|}{\textbf{IPM w/ Dir.}} \\
			& \multicolumn{1}{c}{$(m \times N )$} & \multicolumn{1}{|c}{$w$} & \multicolumn{1}{c}{In. It.}  & \multicolumn{1}{c}{Out. It.} & \multicolumn{1}{c|}{$\kappa_{\text{Sk}}$}
			& \multicolumn{1}{c}{In. It.}  & \multicolumn{1}{c}{Out. It.} & \multicolumn{1}{c|}{$\kappa_{\text{Stan}}$}  & \multicolumn{1}{c|}{Out. It.}\\
			\hline
			ARCENE &$(100 \times 10K)$ & $200$ & $\mathbf{30}$ & $50$ & $38.09$
			& $\mathbf{1.1 K}$ & $59$ & $4.4 \times 10^{8}$ & $50$\\
			DEXTER &$(300 \times 20K)$ & $500$ & $\mathbf{39}$ & $39$ & $75.42$
			& $\mathbf{4.6K}$ & $39$ & $7.6 \times 10^{9}$ & $39$\\
			%       &$(300 \times 20K)$ & $1000$ & $\mathbf{19}$ & $39$ & $11.79$
			%                                    & $\mathbf{4.6K}$ & $39$ & $7.6 \times 10^{9}$\\
			
			DrivFace &$(606 \times 6400)$ & $1000$ & $\mathbf{50}$ & $42$ & $68.87$
			& $\mathbf{139K}$ & $43$ & $17 \times 10^{12}$ & $42$\\
			Gene RNA &$(801 \times 20531)$ & $2000$ & $\mathbf{27}$ & $44$ & $20.03$
			& $\mathbf{101K}$ & $208$ & $4.7 \times 10^{12}$ & $44$\\
			\hline
		\end{tabular}}
	\end{center}
	\label{tab:multicol}
\end{table*}

%--------------------------
%\textbf{Real Data Sets for $\ell_1$ SVM Classification.}
%We ran our algorithm on a variety of real world data sets. Here we present results on data sets from the UCI Machine Learning Repository \cite{Dua2019}. All of the datasets are concerned with the binary classification $\ell_1$-SVM problem. A description of the data sets is found in Table~\ref{tableSVM}. We consider two problems that were part of the NeurIPS 2003 feature selection challenge: ARCENE, for which the task so to distinguish between cancer and normal patterns from mass-spectrometric data, and DEXTER, a text classification problem. We consider DrivFace, a problem concerned with identifying the gaze direction in photos of human subjects taken while driving. We also consider a gene expression cancer RNA-Sequencing data set, accessible on the UCI ML Repository, which is part of the RNA-Seq (HiSeq) PANCAN data set \cite{Weinstein2013}. more details about the data set are found in Appendix~\ref{app:real}).

%We also consider a gene expression cancer RNA-Sequencing data set, accessible on the UCI ML Repository, which is part of the RNA-Seq (HiSeq) PANCAN data set \cite{Weinstein2013}. It is a random extraction of gene expressions from patients who have different types of tumors: BRCA, KIRC, COAD, LUAD and PRAD. We considered the binary classification task of identifying BRCA versus other types.  \palma{can shorten this; part of it is already in appendix}

\vspace{0.02in}\noindent\textbf{Experimental Results}.
Figure~\ref{fig:iter_ARCENE}(a) shows that our Algorithm~\ref{algo:iipm} uses an order of magnitude fewer \textit{inner} iterations than the un-preconditioned standard solver. This is due to the improved conditioning of the respective matrices in the normal equations, as demonstrated in Figure~\ref{fig:iter_ARCENE}(b).  Across various real-world and synthetic data sets, the results were qualitatively similar to those shown in Figure~\ref{fig:iter_ARCENE}. Results for several real-world data sets are summarized in Table~\ref{tableSVM}.

In general, our preconditioned CG solver used in Algorithm~\ref{algo:iipm} does not increase the total number of \textit{outer} iterations as compared to the standard IPM with CG, and the standard IPM with a direct linear solver (denoted IPM w/Dir), as seen in Table~\ref{tableSVM}. Actually, for unpreconditioned CG there is clearly more outer iterations, especially for Gene RNA, which has x5 outer iterations.
%The number of \textit{outer} iterations is unaffected by our internal approximation methods, and is generally the same for our Algorithm~\ref{algo:iipm}, the standard IPM with CG, and the standard IPM with a direct linear solver (denoted IPM w/Dir), as seen in Table~\ref{tableSVM}. \haim{Actually, for unpreconditioned CG there is clearly more outer iterations, epsecially for Gene RNA, which has x5 outer iterations.}
Figure~\ref{fig:iter_ARCENE} also demonstrates the relative insensitivity to the choice of $w$ (the sketching dimension, i.e., the number of columns of the sketching matrix $\Wb$ of Section~\ref{sxn:background}). For smaller values of $w$, our algorithm requires more inner iterations. However, across various choices of $w$, the number of inner iterations is always an order of magnitude smaller than the number required by the standard solver.

Figure~\ref{fig:ARCENE} shows the performance of our algorithm for a range of ($w$, \tolCG) pairs.
Figure~\ref{fig:ARCENE}(a) demonstrates that the number of the inner iterations is robust to the choice of \tolCG~ and $w$. The number of inner iterations varies between $15$ and $35$ for the ARCENE data set, while the standard IPM took on the order of $1,000$ iterations across all parameter settings. Across all settings, the relative error was fixed at $0.04\%$. In general, our sketched IPM is able to produce an extremely high accuracy solution across parameter settings. Thus we do not report additional numerical results for the relative error, which was consistently $10^{-3}$ or less.
Figure~\ref{fig:ARCENE}(b) demonstrates a trade-off of our approach: as both \tolCG~ and $w$ are increased, the condition number \kprecNormal~decreases, corresponding to better conditioned systems. As a result,  fewer inner iterations are required. 
%Additional experiments can be found in Appendix~\ref{app:expadditional}.
In this context, Figure~\ref{fig:ARCENE_more} shows that how the number of inner CG iterations (Figure~\ref{fig:ARCENE_v_more1}) or the condition number of $\Qb^{-1/2}\Ab\Db^2\Ab^\ts\Qb^{-1/2}$ (Figure~\ref{fig:ARCENE_v_more2}) decreases with the increase in sketching dimension $w$ for various \tolCG\,.
%\haim{Last sentence: in arxiv, there is no reason to delegate to appendix, right?}

\begin{figure}[t]
	\centering
	\subfigure[ ]{
		\label{fig:ARCENE_v_more1} %% label for second subfigure
		\includegraphics[width=2.8in]{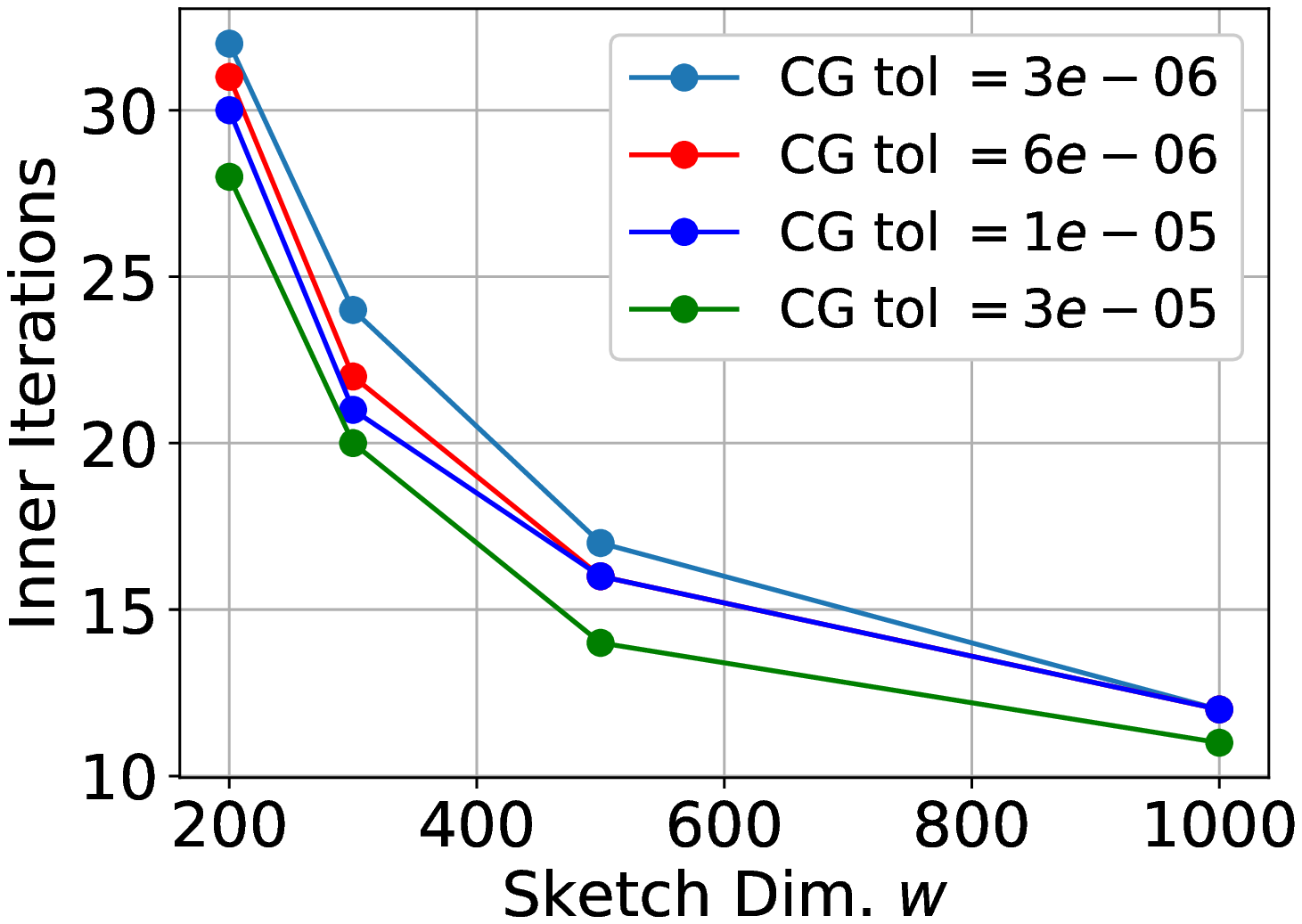}}
	\subfigure[]{
		\label{fig:ARCENE_v_more2} %% label for second subfigure
		\includegraphics[width=2.8in]{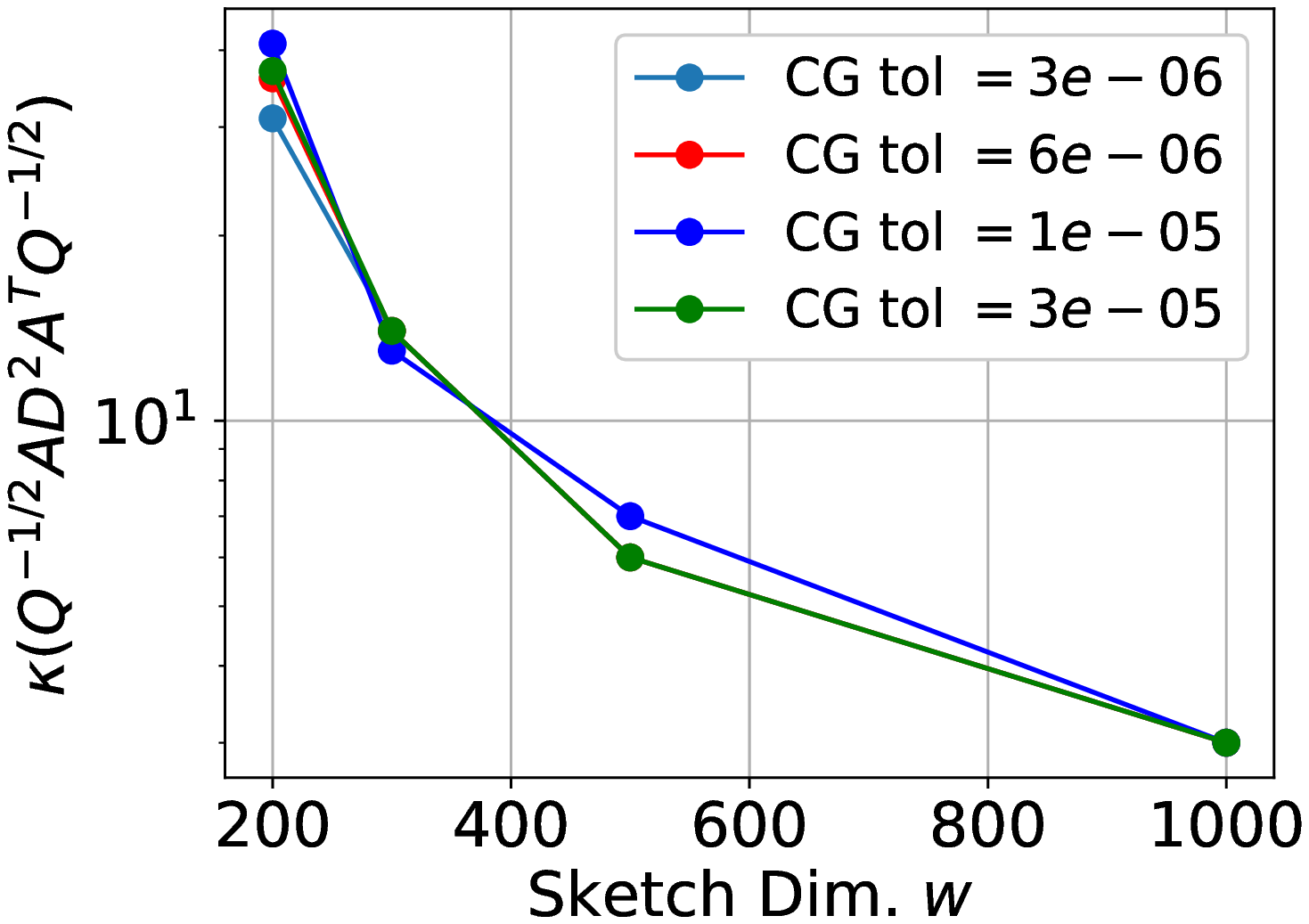}}
	\caption{\emph{ARCENE data set}: As $w$ increases, (a) the number of inner iterations decreases, and is relatively robust to \tolCG and (b) the condition number decreases as well.
	}
	\label{fig:ARCENE_more}
\end{figure}

%demonstrates a tradeoff in our approach: the number of outer iterations required varies with both the CG tolerance, \tolCG, and the sketching dimension, $w$. Across the real and synthetic datasets, our sketched IPM converges to the optimal solution within an order of $10^{-3}$ or better, given a small enough CG tolerance and large enough~$w$.

%In Figures~\ref{fig:ARCENE} and ~\ref{fig:syn_w_cg} \palma{plug in} we demonstrate how the number of outer iterations depends on $w$ and \tolCG, for fixed stopping criteria \tolOuterRes. The norm $\| v\|_2$, which captures the error produced due to the approximate CG solution, is also informative: as \tolCG~is increased and $w$ is decreased, our algorithm requires an increased number of outer iterations, and $\| v\|_2$ increases. Figures~\ref{fig:ARCENE} and ~\ref{fig:syn_w_cg} illustrate this behavior for a range of various ($w$, \tolCG) pairs.

%For example, Figure~\ref{fig:ARCENE} demonstrates this range of behavior for various ($w$, \tolCG) pairs in the case of the ARCENE dataset. For any dataset, $w$ and \tolCG~ can be optimized; Figure~\ref{fig:ARCENE} suggests that for this particular dataset, setting $w = 200$   and \tolCG $= 5\times10^{-6}$ \palma{check after plot replacement} is a reasonable setting; our sketching IPM required $42$ iterations, and the standard IPM required $41$. However, Figure~\ref{fig:ARCENE} also shows that the algorithm is relatively insensitive to the parameter settings; for each ($w$, \tolCG) pair, the relative error is $0.0\%$\palma{check after plot replacement}.

%!TEX root = IPM_arxiv.tex

\section{Conclusions}
We proposed and analyzed an infeasible IPM algorithm using a preconditioned conjugate gradient solver for the normal equations and a novel perturbation vector to correct for the error due to the approximate solver. Thus, we speed up each iteration of the IPM algorithm, without increasing the overall number of iterations. We demonstrate empirically that our IPM requires an order of magnitude fewer inner iterations within each linear solve than standard IPMs. It would be interesting to extend our work to analyze feasible IPMs. More precisely, we would like to apply Algorithm~\ref{algo:iipm} of Section~\ref{sxn:IIPM} starting with a strictly feasible point. In that case, the analysis should be simpler and the iteration complexity of the IPM algorithm should reduce to $\Ocal (n \log(1/\epsilon))$, which is the best known for feasible long-step path following IPM algorithms. We chose to present the more technically challenging approach in this paper and delegate the feasible case to future work.

%\haim{Bib fixes: Is [28] repeated?} \agniva{[28] is more general and recently added arXiv version of [26].}

\section*{Acknowledgements}

AC and PD were partially supported by NSF FRG 1760353 and NSF CCF BSF 1814041. HA was partially supported by BSF grant 2017698. PL was supported by an Amazon Graduate Fellowship in Artificial Intelligence.
%\clearpage

\setlength{\bibsep}{6pt}
\bibliographystyle{plain}
{
% \small
\bibliography{bibliography}
}

\newpage

\begin{appendices}

\newpage

%!TEX root = IPM_arxiv.tex

\section{Richardson Iteration}\label{sxn:rchardson}

Here, we show that all our analyses still hold if we replace Step 4 of Algorithm~\ref{algo:PCG} (CG solver) with Richardson's iteration. Basically, all we need to show is that the condition of eqn.~\eqref{eq:pdcond2} holds. Note that the condition of eqn.~\eqref{eq:pdcond1} already holds from Lemma~\ref{lem:cond3}, as we use the sketching matrix $\Wb\in\RR{n}{w}$ discussed in Section~\ref{sxn:PCG}.

\begin{algorithm}[H]
	\caption{Richardson Iteration Solver}\label{algo:iterative_solver}
	\begin{algorithmic}
		\State \textbf{Input:}
		$\Ab\Db\in\RR{\dimone}{\dimtwo}$, $\pb\in\R{\dimone}$;
		number of iterations $t>0$;
		sketching matrix $\Wb \in \mathbb{R}^{\dimtwo\times w}$;
		\State \textbf{Initialize:}
		$\tilde{\zb}^{0} \gets \zero_\dimone $;
		
		\For{$j=1$ \textbf{to} $t$}
		\State $\tilde{\zb}^{j} \gets \tilde{\zb}^{j-1}+\Qb^{-\nicefrac{1}{2}}(\pb-\Ab\Db^2\Ab^\ts\Qb^{-\nicefrac{1}{2}}\tilde{\zb}^{j-1})$;
		\EndFor
		
		\State \textbf{Output:} return $\tilde{\zb}^{t}$;
	\end{algorithmic}
\end{algorithm}

\noindent Our first result expresses the residual vector $\tilde{\fb}^{(j)}$ in terms of $\tilde{\fb}^{(j-1)}$ for $j=1\ldots t$.
\begin{lemma}\label{lem:recursive}
	Let $\tilde{\fb}^{(j)}$, $j=1\ldots t$ be the residual vectors at each iteration.Then,
	\begin{flalign}
	\tilde{\fb}^{(j)}=~\left(\Ib_n-\Qb^{-\nicefrac{1}{2}}\Ab\Db^2\Ab^\ts\Qb^{-\nicefrac{1}{2}}\right)\tilde{\fb}^{(j-1)}\label{eq:error_recursive2}\,.
	\end{flalign}
	Recall that $\Qb=\Ab\Db\Wb\Wb^\ts\Db\Ab^\ts$ and $\tilde{\fb}^{(j)}=\Qb^{-\nicefrac{1}{2}}(\Ab\Db^2\Ab^\ts\Qb^{-\nicefrac{1}{2}}\tilde{\zb}^j-\pb)$.
\end{lemma}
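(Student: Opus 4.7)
The proof is essentially a direct algebraic verification: substitute the Richardson update into the definition of the residual and simplify. The plan is to first introduce compact notation that absorbs the preconditioner, so that the Richardson step looks like the ``textbook'' residual update; then the claimed recursion falls out in two lines.

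Concretely, I would set $\Mb := \Qb^{-\nicefrac{1}{2}}\Ab\Db^2\Ab^\ts\Qb^{-\nicefrac{1}{2}}$ and $\cb := \Qb^{-\nicefrac{1}{2}}\pb$, so that eqn.~\eqref{eq:precond_alt} reads $\Mb\zb = \cb$ and the residual in the lemma is simply $\tilde{\fb}^{(j)} = \Mb\tilde{\zb}^{j}-\cb$. Because $\Qb^{-\nicefrac{1}{2}}(\pb - \Ab\Db^2\Ab^\ts\Qb^{-\nicefrac{1}{2}}\tilde{\zb}^{j-1}) = \cb - \Mb\tilde{\zb}^{j-1} = -\tilde{\fb}^{(j-1)}$, the Richardson update in Algorithm~\ref{algo:iterative_solver} becomes the clean expression
\begin{flalign*}
\tilde{\zb}^{j} = \tilde{\zb}^{j-1} - \tilde{\fb}^{(j-1)}.
\end{flalign*}

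Next, I would apply $\Mb$ to both sides and subtract $\cb$ to obtain the residual at step $j$:
\begin{flalign*}
\tilde{\fb}^{(j)} = \Mb\tilde{\zb}^{j} - \cb = (\Mb\tilde{\zb}^{j-1} - \cb) - \Mb\tilde{\fb}^{(j-1)} = \tilde{\fb}^{(j-1)} - \Mb\tilde{\fb}^{(j-1)} = (\Ib - \Mb)\tilde{\fb}^{(j-1)},
\end{flalign*}
which is precisely eqn.~\eqref{eq:error_recursive2} (where the identity is understood as $\Ib_m$, since $\Mb$ is $m\times m$; the statement's $\Ib_n$ appears to be a typographical inconsistency that does not affect the argument).

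There is no real obstacle here: the lemma is a mechanical consequence of the Richardson update together with the definition of the residual. The only thing to be careful about is sign bookkeeping and the placement of the two $\Qb^{-\nicefrac{1}{2}}$ factors, but once one observes that the Richardson correction term is exactly $-\tilde{\fb}^{(j-1)}$, the recursion is immediate. This recursion is then exactly what is needed downstream to iterate the bound from Lemma~\ref{lem:cond3} (applied to $\Ib - \Mb$) and obtain the residual decay analogous to eqn.~\eqref{eq:pdcond2}.
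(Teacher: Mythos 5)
Your proof is correct and is essentially identical to the paper's: both substitute the Richardson update into the definition of the residual and simplify, with your $\Mb$, $\cb$ notation merely compressing the same algebra. Your remark that the correction term equals $-\tilde{\fb}^{(j-1)}$ and that the identity should be $\Ib_m$ (the paper's own proof indeed uses $\Ib_\dimone$) are both accurate.
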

\begin{proof}
	Using Algorithm~\ref{algo:iterative_solver}, we express $\tilde{\fb}^{(j)}$ as
	\begin{flalign}
	\tilde{\fb}^{(j)}=&~\Qb^{-\nicefrac{1}{2}}\Ab\Db^2\Ab^\ts\Qb^{-\nicefrac{1}{2}}\tilde{\zb}^{j}-\Qb^{-\nicefrac{1}{2}}\pb\nonumber\\
	=&~\Qb^{-\nicefrac{1}{2}}\Ab\Db^2\Ab^\ts\Qb^{-\nicefrac{1}{2}}\left(\tilde{\zb}^{j-1}+\Qb^{-\nicefrac{1}{2}}(\pb-\Ab\Db^2\Ab^\ts\Qb^{-\nicefrac{1}{2}}\tilde{\zb}^{j-1})\right)-\Qb^{-\nicefrac{1}{2}}\pb\nonumber\\
	=&~\left(\Qb^{-\nicefrac{1}{2}}\Ab\Db^2\Ab^\ts\Qb^{-\nicefrac{1}{2}}\tilde{\zb}^{j-1}-\Qb^{-\nicefrac{1}{2}}\pb\right)\nonumber\\
	&~~~~~~~~~~~~~~~~~~~~~-\Qb^{-\nicefrac{1}{2}}\Ab\Db^2\Ab^\ts\Qb^{-\nicefrac{1}{2}}\left(\Qb^{-\nicefrac{1}{2}}\Ab\Db^2\Ab^\ts\Qb^{-\nicefrac{1}{2}}\tilde{\zb}^{j-1}-\Qb^{-\nicefrac{1}{2}}\pb\right)\nonumber\\
	=&~\left(\Ib_\dimone-\Qb^{-\nicefrac{1}{2}}\Ab\Db^2\Ab^\ts\Qb^{-\nicefrac{1}{2}}\right)\left(\Qb^{-\nicefrac{1}{2}}\Ab\Db^2\Ab^\ts\Qb^{-\nicefrac{1}{2}}\tilde{\zb}^{j-1}-\Qb^{-\nicefrac{1}{2}}\pb\right)\nonumber\\
	=&~\left(\Ib_\dimone-\Qb^{-\nicefrac{1}{2}}\Ab\Db^2\Ab^\ts\Qb^{-\nicefrac{1}{2}}\right)\tilde{\fb}^{(j-1)}\nonumber\,,
	\end{flalign}
	which concludes the proof.
\end{proof}

\noindent In the next result, we show that the spectral norm of $\Ib_\dimone-\Qb^{-\nicefrac{1}{2}}\Ab\Db^2\Ab^\ts\Qb^{-\nicefrac{1}{2}}$ is upper bounded by $\zeta$.
\begin{lemma}\label{lem:normbound}
	Let the condition of eqn.~\eqref{eq:pdcond1} holds for the sketching matrix $\Wb\in\RR{\dimtwo}{w}$. Then
	\begin{flalign}
	\|\Qb^{-\nicefrac{1}{2}}\Ab\Db^2\Ab^\ts\Qb^{-\nicefrac{1}{2}}-\Ib_\dimone\|_2\le\zeta\,.\nonumber
	\end{flalign}
\end{lemma}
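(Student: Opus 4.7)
The plan is to use the fact that $\Qb^{-1/2}\Ab\Db^2\Ab^\ts\Qb^{-1/2}$ is symmetric (in fact positive definite) together with the singular value bounds provided by eqn.~\eqref{eq:pdcond1}, and then reduce the spectral norm on the left-hand side to the largest absolute eigenvalue.

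First I would observe that
\[
\Qb^{-\nicefrac{1}{2}}\Ab\Db^2\Ab^\ts\Qb^{-\nicefrac{1}{2}} = \bigl(\Qb^{-\nicefrac{1}{2}}\Ab\Db\bigr)\bigl(\Qb^{-\nicefrac{1}{2}}\Ab\Db\bigr)^\ts,
\]
so this matrix is symmetric and its eigenvalues are exactly the squared singular values $\sigma_i^2(\Qb^{-\nicefrac{1}{2}}\Ab\Db)$ for $i=1,\dots,m$. By the hypothesis eqn.~\eqref{eq:pdcond1}, each such eigenvalue lies in the interval $[\tfrac{2}{2+\zeta},\tfrac{2}{2-\zeta}]$. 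Consequently, the eigenvalues of the symmetric matrix $\Qb^{-\nicefrac{1}{2}}\Ab\Db^2\Ab^\ts\Qb^{-\nicefrac{1}{2}}-\Ib_m$ lie in
\[
\left[\tfrac{2}{2+\zeta}-1,\;\tfrac{2}{2-\zeta}-1\right] \;=\; \left[-\tfrac{\zeta}{2+\zeta},\;\tfrac{\zeta}{2-\zeta}\right].
\]

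Since the spectral norm of a symmetric matrix equals its largest eigenvalue in absolute value, I would then conclude that
\[
\bigl\|\Qb^{-\nicefrac{1}{2}}\Ab\Db^2\Ab^\ts\Qb^{-\nicefrac{1}{2}}-\Ib_m\bigr\|_2 \;=\; \max\!\left\{\tfrac{\zeta}{2+\zeta},\,\tfrac{\zeta}{2-\zeta}\right\} \;=\; \tfrac{\zeta}{2-\zeta}.
\]
Finally, I would use the standing assumption $\zeta\in[0,1]$ to note that $2-\zeta\ge 1$, hence $\tfrac{\zeta}{2-\zeta}\le\zeta$, which yields the claimed bound. No step here presents a genuine obstacle; the argument is essentially a one-line consequence of Lemma~\ref{lem:cond3} (equivalently, of eqn.~\eqref{eq:pdcond1}) once the identification of eigenvalues with squared singular values is made. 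The only detail worth stating carefully is that the positive tail $\tfrac{\zeta}{2-\zeta}$ dominates the negative tail $\tfrac{\zeta}{2+\zeta}$, so the max is attained at the upper end and that quantity is what must be compared to $\zeta$.
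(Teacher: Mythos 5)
Your proof is correct and follows essentially the same route as the paper: the paper rederives the Loewner-order bound of eqn.~\eqref{eq:normbound} from eqn.~\eqref{eq:pdcond1} (which is exactly your identification of the eigenvalues of $\Qb^{-\nicefrac{1}{2}}\Ab\Db^2\Ab^\ts\Qb^{-\nicefrac{1}{2}}$ with the squared singular values of $\Qb^{-\nicefrac{1}{2}}\Ab\Db$), subtracts the identity, and bounds the spectral norm by $\frac{\zeta}{2-\zeta}\le\zeta$ exactly as you do. The only cosmetic remark is that the spectral norm is \emph{at most} $\max\{\frac{\zeta}{2+\zeta},\frac{\zeta}{2-\zeta}\}$ rather than equal to it, but this does not affect the argument.
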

\begin{proof}
	As the condition in eqn.~\eqref{eq:pdcond1} holds,
	%sketching matrix $\Wb\in\RR{\dimtwo}{w}$ is constructed in the same way discussed in Section~\ref{sxn:background}, therefore
	we can go backwards in the proof of Lemma~\ref{lem:cond3} and note that eqn.~\eqref{eq:normbound}  holds. So, we subtract $\Ib_\dimone$ from each side of eqn.~\eqref{eq:normbound} to get
	\begin{flalign}
	&~\left(\frac{2}{2+\zeta}-1\right)\Ib_\dimone\preccurlyeq\Qb^{-1/2}\Ab\Db^2\Ab^\ts\Qb^{-1/2}-\Ib_\dimone\preccurlyeq\left(\frac{2}{2-\zeta}-1\right)\Ib_\dimone\nonumber\\
	\Leftrightarrow&~-\frac{\zeta}{2+\zeta}\Ib_\dimone\preccurlyeq\Qb^{-1/2}\Ab\Db^2\Ab^\ts\Qb^{-1/2}-\Ib_\dimone\preccurlyeq\frac{\zeta}{2-\zeta}\Ib_\dimone\nonumber\\
	\Rightarrow&~-\frac{\zeta}{2-\zeta}\Ib_\dimone\preccurlyeq\Qb^{-1/2}\Ab\Db^2\Ab^\ts\Qb^{-1/2}-\Ib_\dimone\preccurlyeq\frac{\zeta}{2-\zeta}\Ib_\dimone\label{eq:ineq5}\\
	\Leftrightarrow&~\|\Qb^{-1/2}\Ab\Db^2\Ab^\ts\Qb^{-1/2}-\Ib_\dimone\|_2\le\frac{\zeta}{2-\zeta}\le\zeta\label{eq:ineq6}.
	\end{flalign}
Eqn.~\eqref{eq:ineq5} holds as $\frac{\zeta}{2+\zeta}\le\frac{\zeta}{2-\zeta}$ and the last inequality of eqn.~\eqref{eq:ineq6} follows from $\zeta<1$.
\end{proof}

\vspace{0.02in}\noindent\textbf{Satisfying eqn.~\eqref{eq:pdcond1}.} Note that the condition in eqn.~\eqref{eq:pdcond1} already holds from Lemma~\ref{lem:cond3}, as we use the exact same sketching matrix $\Wb\in\RR{n}{w}$.

\vspace{0.02in}\noindent\textbf{Satisfying eqn.~\eqref{eq:pdcond2}.} Using Lemma~\ref{lem:normbound} and applying Lemma~\ref{lem:recursive} recursively, we get
\begin{flalign}
\|\tilde{\fb}^{(t)}\|_2\le\zeta\|\tilde{\fb}^{(t-1)}\|_2\le\dots\le\zeta^t\|\tilde{\fb}^{(0)}\|_2=\zeta^t\|\Qb^{-\nicefrac{1}{2}}\pb\|_2\nonumber\,.
\end{flalign}

\section{Steepest Descent}\label{sxn:psd}

We now replace Step 4 of Algorithm~\ref{algo:PCG} (our proposed PCG solver) by preconditioned steepest descent. We again prove that our analysis of the proposed infeasible long-step IPM remains essentially the same.

First, we construct the sketching matrix $\Wb$ as discussed in Section~\ref{sxn:background}, with a slightly more stringent accuracy guarantee. More specifically, we necessitate that
\begin{flalign}
\nbr{\Vb^\ts \Wb \Wb^\ts \Vb - \Ib_{\dimone}}_2\le \frac{\zeta(1-\zeta)}{2}\label{eq:sd_struct}
\end{flalign}
holds with probability at least $1-\delta$ for a constant $\zeta \in [0,1]$. Notice that the sketching dimension $w=\Ocal(m\log(\nicefrac{m}{\delta}))$ and the running time needed to compute $\Qb^{\nicefrac{-1}{2}}$ (which is $\Ocal(\nnz{\Ab}\cdot \log(m/\delta)+m^3\log(m/\delta))$) remain, asymptotically, the same.
In the case of steepest descent, it turns out that at each iteration the search direction is the negative of the gradient, which is equal to the residual $\tilde{\fb}^{(j)}$.  Moreover, the step size $\alpha_j$ is determined by an exact \emph{line search} that minimizes the underlying quadratic function:
$$\alpha_j=\frac{\tilde{\fb}^{(j)^\ts}\tilde{\fb}^{(j)}}{\tilde{\fb}^{(j)^\ts}\Qb^{-\nicefrac{1}{2}}\,\Ab\Db^2\Ab^\ts\Qb^{-\nicefrac{1}{2}}\tilde{\fb}^{(j)}}.$$
For this choice of $\alpha_j$, it is easy to verify that the current gradient is orthogonal to the previous one. 

\begin{algorithm}[H]
	\caption{Steepest Descent Solver}\label{algo:sd_solver}
	\begin{algorithmic}
		\State \textbf{Input:}
		$\Ab\Db\in\RR{\dimone}{\dimtwo}$, $\pb\in\R{\dimone}$;
		number of iterations $t>0$;
		sketching matrix $\Wb \in \mathbb{R}^{\dimtwo\times w}$;
		\State \textbf{Initialize:}
		$\tilde{\zb}^{0} \gets \zero_\dimone $;
		
		\For {$j=0$ \textbf{to} $t-1$}
		
		\vspace{2mm}
		\State $\alpha_j=\frac{\tilde{\fb}^{(j)^\ts}\tilde{\fb}^{(j)}}{\tilde{\fb}^{(j)^\ts}\Qb^{-\nicefrac{1}{2}}\,\Ab\Db^2\Ab^\ts\Qb^{-\nicefrac{1}{2}}\tilde{\fb}^{(j)}}$;
		
		\vspace{1.5mm}
		\State $\tilde{\zb}^{j+1} \gets \tilde{\zb}^{j}-\alpha_j\tilde{\fb}^{(j)}$;
		\EndFor
		
		\vspace{1.5mm}
		\State \textbf{Output:} return $\tilde{\zb}^{t}$;
	\end{algorithmic}
\end{algorithm}

\noindent Similar to Lemma~\ref{lem:recursive}, our next result reveals a recursive relation between the search directions which will be instrumental in bounding $\tilde{\fb}^{(t)}$.
\begin{lemma}\label{lem:sd_recursive}
	Let $\tilde{\fb}^{(j)}$, $j=1\ldots t$ be the residual vectors at each iteration and let $\alpha_j$ be as in Algorithm~\ref{algo:sd_solver}. Then,
	\begin{flalign}
	\tilde{\fb}^{(j+1)}=~\left(\Ib_n-\alpha_j\Qb^{-\nicefrac{1}{2}}\Ab\Db^2\Ab^\ts\Qb^{-\nicefrac{1}{2}}\right)\tilde{\fb}^{(j)}\label{eq:err_sd}.
	\end{flalign}
	Recall that $\Qb=\Ab\Db\Wb\Wb^\ts\Db\Ab^\ts$ and $\tilde{\fb}^{(j)}=\Qb^{-\nicefrac{1}{2}}(\Ab\Db^2\Ab^\ts\Qb^{-\nicefrac{1}{2}}\tilde{\zb}^j-\pb)$\,.
\end{lemma}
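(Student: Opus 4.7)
The proof is essentially a direct algebraic verification that parallels the derivation of Lemma~\ref{lem:recursive} for Richardson iteration, the only difference being that the identity step in Richardson is replaced by the data-dependent step size $\alpha_j$. So my plan is to substitute the steepest descent update rule into the definition of the residual and simplify.

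Concretely, I will introduce the shorthand $\Mb \defeq \Qb^{-\nicefrac{1}{2}}\Ab\Db^2\Ab^\ts\Qb^{-\nicefrac{1}{2}}$ and $\rb \defeq \Qb^{-\nicefrac{1}{2}}\pb$, so that the residual at iteration $j$ satisfies $\tilde{\fb}^{(j)} = \Mb\,\tilde{\zb}^j - \rb$. From Algorithm~\ref{algo:sd_solver}, the iterate update reads $\tilde{\zb}^{j+1} = \tilde{\zb}^j - \alpha_j\,\tilde{\fb}^{(j)}$. Substituting this into the definition of $\tilde{\fb}^{(j+1)}$ gives
\begin{flalign*}
\tilde{\fb}^{(j+1)} \;=\; \Mb\,\tilde{\zb}^{j+1} - \rb \;=\; \Mb\bigl(\tilde{\zb}^j - \alpha_j\,\tilde{\fb}^{(j)}\bigr) - \rb \;=\; \bigl(\Mb\,\tilde{\zb}^j - \rb\bigr) - \alpha_j\,\Mb\,\tilde{\fb}^{(j)} \;=\; \tilde{\fb}^{(j)} - \alpha_j\,\Mb\,\tilde{\fb}^{(j)},
\end{flalign*}
which is exactly $\tilde{\fb}^{(j+1)} = \bigl(\Ib_n - \alpha_j \Mb\bigr)\tilde{\fb}^{(j)}$, establishing eqn.~\eqref{eq:err_sd}.

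There is no real obstacle here: the identity is purely mechanical and does not depend on the specific value of $\alpha_j$ chosen by the exact line search. The reason the optimal $\alpha_j$ formula is relevant is not for proving the recursion itself, but rather for the subsequent convergence rate argument (analogous to how Lemma~\ref{lem:normbound} is combined with Lemma~\ref{lem:recursive} in the Richardson case). That is where the sharper sketching guarantee in eqn.~\eqref{eq:sd_struct} and the $A$-orthogonality of successive gradients will be used to bound the contraction factor per step; the present lemma simply sets up the recursion by unfolding one iteration of the update.
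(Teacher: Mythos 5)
Your proof is correct and follows exactly the same route as the paper: substitute the update $\tilde{\zb}^{j+1}=\tilde{\zb}^{j}-\alpha_j\tilde{\fb}^{(j)}$ into the definition of the residual and simplify, noting that the identity is purely mechanical and independent of the particular value of $\alpha_j$. Nothing is missing.
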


\begin{proof}
	In Algorithm~\ref{algo:sd_solver}, we pre-multiply $\tilde{\zb}^{j+1}$  by $\Qb^{-\nicefrac{1}{2}}\Ab\Db^2\Ab^\ts\Qb^{-\nicefrac{1}{2}}$ and then subtract $\Qb^{-\nicefrac{1}{2}}\pb$ to get
	\begin{flalign}
	\tilde{\fb}^{(j+1)}=&~\Qb^{-\nicefrac{1}{2}}\Ab\Db^2\Ab^\ts\Qb^{-\nicefrac{1}{2}}\tilde{\zb}^{j+1}-\Qb^{-\nicefrac{1}{2}}\pb\nonumber\\=&~ \Qb^{-\nicefrac{1}{2}}\Ab\Db^2\Ab^\ts\Qb^{-\nicefrac{1}{2}}\tilde{\zb}^{j}-\Qb^{-\nicefrac{1}{2}}\pb-\alpha_j\Qb^{-\nicefrac{1}{2}}\Ab\Db^2\Ab^\ts\Qb^{-\nicefrac{1}{2}}\tilde{\fb}^{(j)}\nonumber\\
	=&~\tilde{\fb}^{(j)}-\alpha_j\Qb^{-\nicefrac{1}{2}}\Ab\Db^2\Ab^\ts\Qb^{-\nicefrac{1}{2}}\tilde{\fb}^{(j)}=~\left(\Ib_{\dimone}-\alpha_j\Qb^{-\nicefrac{1}{2}}\Ab\Db^2\Ab^\ts\Qb^{-\nicefrac{1}{2}}\right)\tilde{\fb}^{(j)}\nonumber\,,
	\end{flalign}
	which concludes the proof.
\end{proof}

\noindent Next, using this new condition in eqn.~\eqref{eq:sd_struct}, we  bound $\nbr{\Ib_{\dimone}-\alpha_j\Qb^{-\nicefrac{1}{2}}\Ab\Db^2\Ab^\ts\Qb^{-\nicefrac{1}{2}}}_2$.
\begin{lemma}\label{lem:alpha}
	If eqn.~\eqref{eq:sd_struct} is satisfied, then
	$\abs{\alpha_j-1}\le\frac{\zeta(1-\zeta)}{2}$.
\end{lemma}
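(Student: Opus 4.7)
The plan is to mimic the chain of implications used in the proof of Lemma~\ref{lem:cond3}, but with the stronger bound $\zeta(1-\zeta)/2$ from eqn.~\eqref{eq:sd_struct} in place of $\zeta/2$. First I would translate eqn.~\eqref{eq:sd_struct} into a spectral sandwich on $\Qb$ by pre- and post-multiplying by $\Ub\Sigmab$ and $\Sigmab\Ub^\ts$ (using the thin SVD $\Ab\Db=\Ub\Sigmab\Vb^\ts$), exactly as in eqns.~\eqref{eq:rich_3}--\eqref{eq:rich_4} but with $\zeta/2$ replaced by $\zeta(1-\zeta)/2$. Conjugating by $\Qb^{-1/2}$ then yields, in analogy with eqn.~\eqref{eq:normbound},
\begin{flalign*}
\left(1+\tfrac{\zeta(1-\zeta)}{2}\right)^{-1}\Ib_m \preccurlyeq \Qb^{-1/2}\Ab\Db^2\Ab^\ts\Qb^{-1/2} \preccurlyeq \left(1-\tfrac{\zeta(1-\zeta)}{2}\right)^{-1}\Ib_m.
\end{flalign*}

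Next, denote $\Mb = \Qb^{-1/2}\Ab\Db^2\Ab^\ts\Qb^{-1/2}$. The above display bounds the eigenvalues of $\Mb$ from below by $(1+\zeta(1-\zeta)/2)^{-1}$ and from above by $(1-\zeta(1-\zeta)/2)^{-1}$. Now apply the Rayleigh--Ritz inequality to the quadratic form in the denominator of $\alpha_j$:
\begin{flalign*}
\lambda_{\min}(\Mb)\,\|\tilde{\fb}^{(j)}\|_2^2 \le \tilde{\fb}^{(j)^\ts}\Mb\,\tilde{\fb}^{(j)} \le \lambda_{\max}(\Mb)\,\|\tilde{\fb}^{(j)}\|_2^2.
\end{flalign*}
Taking reciprocals and multiplying by $\|\tilde{\fb}^{(j)}\|_2^2$ gives
\begin{flalign*}
1-\tfrac{\zeta(1-\zeta)}{2} \le \frac{1}{\lambda_{\max}(\Mb)} \le \alpha_j \le \frac{1}{\lambda_{\min}(\Mb)} \le 1+\tfrac{\zeta(1-\zeta)}{2},
\end{flalign*}
which is exactly $|\alpha_j - 1| \le \zeta(1-\zeta)/2$.

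There is no real obstacle here; the proof is essentially a one-step calculation once the spectral sandwich is set up. The only minor subtlety is tracking the direction of the inequalities under inversion (since $x \mapsto 1/x$ is order-reversing on the positive reals), but this is routine. The choice of tightening the sketching guarantee from $\zeta/2$ to $\zeta(1-\zeta)/2$ is precisely what makes the resulting bound on $|\alpha_j-1|$ take the stated form, which will later feed into the analog of Lemma~\ref{lem:normbound} for steepest descent via the recursion in Lemma~\ref{lem:sd_recursive}.
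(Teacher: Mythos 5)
Your proof is correct and takes essentially the same route as the paper: both translate eqn.~\eqref{eq:sd_struct} into a two-sided positive-semidefinite bound on $\Qb^{-\nicefrac{1}{2}}\Ab\Db^2\Ab^\ts\Qb^{-\nicefrac{1}{2}}$ (via the SVD conjugation and then conjugation by $\Qb^{-\nicefrac{1}{2}}$) and then bound the Rayleigh quotient defining $\alpha_j$. The only cosmetic difference is that the paper keeps the sandwich in the form $\bigl(1-\tfrac{\zeta(1-\zeta)}{2}\bigr)\,\Qb^{-\nicefrac{1}{2}}\Ab\Db^2\Ab^\ts\Qb^{-\nicefrac{1}{2}}\preccurlyeq\Ib_{\dimone}\preccurlyeq\bigl(1+\tfrac{\zeta(1-\zeta)}{2}\bigr)\,\Qb^{-\nicefrac{1}{2}}\Ab\Db^2\Ab^\ts\Qb^{-\nicefrac{1}{2}}$ and applies the quadratic form to $\tilde{\fb}^{(j)}$ directly, whereas you pass to eigenvalue bounds and invert back via Rayleigh--Ritz; the two are equivalent.
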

\begin{proof}
	First, we rewrite eqn.~\eqref{eq:sd_struct} as follows,
	\begin{flalign*}
	-\frac{\zeta(1-\zeta)}{2}\Ib_{\dimone}\preccurlyeq\Vb^\ts\Wb\Wb^\ts\Vb-\Ib_{\dimone}\preccurlyeq\frac{\zeta(1-\zeta)}{2}\Ib_{\dimone}.
	\end{flalign*}
	Next, we pre- and post-multiply the above expression by $\Ub\Sigmab$ and $\Sigmab\Ub^\ts$ to get
	\begin{flalign}
	-\frac{\zeta(1-\zeta)}{2}\Ab\Db^2\Ab^\ts\preccurlyeq\underbrace{\Ab\Db\Wb\Wb^\ts\Db\Ab^\ts}_{\Qb}-\Ab\Db^2\Ab^\ts\preccurlyeq\frac{\zeta(1-\zeta)}{2}\Ab\Db^2\Ab^\ts.\label{eq:sd_struct2}
	\end{flalign}
	Now, pre and post-multiplying eqn.~\eqref{eq:sd_struct2} again by $\Qb^{-\nicefrac{1}{2}}$, we get
	\begin{flalign}
	&~\left(1-\frac{\zeta(1-\zeta)}{2}\right)\Qb^{-\nicefrac{1}{2}}\Ab\Db^2\Ab^\ts\Qb^{-\nicefrac{1}{2}}\preccurlyeq\Ib_{\dimone}\preccurlyeq\left(1+\frac{\zeta(1-\zeta)}{2}\right)\Qb^{-\nicefrac{1}{2}}\Ab\Db^2\Ab^\ts\Qb^{-\nicefrac{1}{2}}\nonumber\\
	\Rightarrow&~ \left(1-\frac{\zeta(1-\zeta)}{2}\right)\tilde{\fb}^{(j)^\ts}\Qb^{-\nicefrac{1}{2}}\Ab\Db^2\Ab^\ts\Qb^{-\nicefrac{1}{2}}\tilde{\fb}^{(j)}\le\tilde{\fb}^{(j)^\ts}\tilde{\fb}^{(j)}\le\left(1+\frac{\zeta(1-\zeta)}{2}\right)\tilde{\fb}^{(j)^\ts}\Qb^{-\nicefrac{1}{2}}\Ab\Db^2\Ab^\ts\Qb^{-\nicefrac{1}{2}}\tilde{\fb}^{(j)}\nonumber\\
	\Rightarrow&~
	\left(1-\frac{\zeta(1-\zeta)}{2}\right)\le\frac{\tilde{\fb}^{(j)^\ts}\tilde{\fb}^{(j)}}{\tilde{\fb}^{(j)^\ts}\Qb^{-\nicefrac{1}{2}}\Ab\Db^2\Ab^\ts\Qb^{-\nicefrac{1}{2}}\tilde{\fb}^{(j)}}\le\left(1+\frac{\zeta(1-\zeta)}{2}\right)\nonumber\\
	\Rightarrow&~\abs{\alpha_j-1}\le\frac{\zeta(1-\zeta)}{2}\,,\text{ for $j=1\ldots t$\,.}
	\end{flalign}
\end{proof}

\noindent Our next result shows that if eqn.~\eqref{eq:sd_struct} holds, then $\nbr{\Ib_{\dimone}-\alpha_j\Qb^{-\nicefrac{1}{2}}\Ab\Db^2\Ab^\ts\Qb^{-\nicefrac{1}{2}}}_2$ is upper bounded by a small quantity for $j=1\ldots t$.
\begin{lemma}\label{lem:sd_bound}
	If eqn.~\eqref{eq:sd_struct} is satisfied, then $\nbr{\Ib_{\dimone}-\alpha_j\Qb^{-\nicefrac{1}{2}}\Ab\Db^2\Ab^\ts\Qb^{-\nicefrac{1}{2}}}_2\le\zeta$\,, for $j=1\ldots t$.
\end{lemma}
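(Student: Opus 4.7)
The plan is to bound the spectral norm by analyzing the eigenvalues of $\Ib_\dimone - \alpha_j \Mb$, where $\Mb \defeq \Qb^{-\nicefrac{1}{2}}\Ab\Db^2\Ab^\ts\Qb^{-\nicefrac{1}{2}}$. Since $\Mb$ is symmetric positive definite and $\alpha_j$ is a scalar, the spectral norm equals $\max_i |1 - \alpha_j \sigma_i(\Mb)|$, so it suffices to sandwich each eigenvalue of $\Mb$ and combine with the two-sided bound on $\alpha_j$ from Lemma~\ref{lem:alpha}.

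First I would reuse the chain of implications in the proof of Lemma~\ref{lem:alpha}: starting from eqn.~\eqref{eq:sd_struct} and applying the same pre-/post-multiplication by $\Ub\Sigmab$ and by $\Qb^{-\nicefrac{1}{2}}$ as in eqns.~\eqref{eq:sd_struct2} onwards, I get the positive-definite inequality
\[
\left(1+\tfrac{\zeta(1-\zeta)}{2}\right)^{-1}\Ib_\dimone \;\preccurlyeq\; \Mb \;\preccurlyeq\; \left(1-\tfrac{\zeta(1-\zeta)}{2}\right)^{-1}\Ib_\dimone.
\]
Writing $\beta \defeq \zeta(1-\zeta)/2$, this says every eigenvalue of $\Mb$ lies in the interval $[(1+\beta)^{-1},(1-\beta)^{-1}]$. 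At the same time, Lemma~\ref{lem:alpha} gives $\alpha_j \in [1-\beta,\,1+\beta]$.

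Next, since $\alpha_j > 0$ and all $\sigma_i(\Mb) > 0$, the product $\alpha_j \sigma_i(\Mb)$ is bounded as
\[
\frac{1-\beta}{1+\beta} \;\leq\; \alpha_j\,\sigma_i(\Mb) \;\leq\; \frac{1+\beta}{1-\beta}.
\]
Therefore
\[
|1 - \alpha_j\,\sigma_i(\Mb)| \;\leq\; \max\!\left\{\frac{2\beta}{1+\beta},\,\frac{2\beta}{1-\beta}\right\} \;=\; \frac{2\beta}{1-\beta} \;=\; \frac{\zeta(1-\zeta)}{1-\zeta(1-\zeta)/2}.
\]
Taking the max over $i$ gives $\|\Ib_\dimone - \alpha_j \Mb\|_2 \leq \zeta(1-\zeta)/(1-\zeta(1-\zeta)/2)$.

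The final (straightforward) step is to check that $\zeta(1-\zeta)/(1 - \zeta(1-\zeta)/2) \leq \zeta$ for all $\zeta \in [0,1]$. Dividing by $\zeta$ (the $\zeta=0$ case is trivial), this reduces to $(1-\zeta) \leq 1 - \zeta(1-\zeta)/2$, i.e., $\zeta(1-\zeta)/2 \leq \zeta$, which holds since $(1-\zeta)/2 \leq 1$. No step looks delicate: the core content is the eigenvalue envelope for $\Mb$, which is essentially already established inside the proof of Lemma~\ref{lem:alpha}, and the rest is elementary arithmetic on a scalar interval.
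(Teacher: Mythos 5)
Your proof is correct and follows essentially the same route as the paper's: derive an eigenvalue envelope for $\Qb^{-\nicefrac{1}{2}}\Ab\Db^2\Ab^\ts\Qb^{-\nicefrac{1}{2}}$ from the sketching guarantee, combine it with the two-sided bound on $\alpha_j$ from Lemma~\ref{lem:alpha}, and read off the spectral norm of the symmetric matrix $\Ib_\dimone-\alpha_j\Qb^{-\nicefrac{1}{2}}\Ab\Db^2\Ab^\ts\Qb^{-\nicefrac{1}{2}}$. The only cosmetic difference is that you keep the tighter envelope $[(1+\beta)^{-1},(1-\beta)^{-1}]$ with $\beta=\zeta(1-\zeta)/2$ throughout (yielding the slightly sharper bound $2\beta/(1-\beta)\le\zeta$), whereas the paper first relaxes eqn.~\eqref{eq:sd_struct} to the $\nicefrac{\zeta}{2}$ version so that the final arithmetic collapses to exactly $\zeta$.
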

\begin{proof}
	We note that eqn.~\eqref{eq:sd_struct} directly implies
	\begin{flalign}
	\nbr{\Vb^\ts \Wb \Wb^\ts \Vb - \Ib_{\dimone}}_2\le \frac{\zeta}{2}.\label{eq:sd_struct1}
	\end{flalign}
	Now, as eqn.~\eqref{eq:sd_struct1} holds, from  eqn.~\eqref{eq:normbound} in the proof of Lemma~\ref{lem:cond3},we have
	\begin{flalign}
	&~\left(1+\frac{\zeta}{2}\right)^{-1}\Ib_{\dimone}\preccurlyeq\Qb^{-1/2}\Ab\Db^2\Ab^\ts\Qb^{-1/2}\preccurlyeq\left(1-\frac{\zeta}{2}\right)^{-1}\Ib_{\dimone}\nonumber\\
	\Leftrightarrow&~\left(\frac{2\alpha_j}{2+\zeta}-1\right)\Ib_{\dimone}\preccurlyeq \alpha_j\Qb^{-1/2}\Ab\Db^2\Ab^\ts\Qb^{-1/2}-\Ib_{\dimone}\preccurlyeq\left(\frac{2\alpha_j}{2-\zeta}-1\right)\Ib_{\dimone}\nonumber\\
	\Leftrightarrow&~\frac{2(\alpha_j-1)-\zeta}{2+\zeta}\,\Ib_{\dimone}\preccurlyeq \alpha_j\Qb^{-1/2}\Ab\Db^2\Ab^\ts\Qb^{-1/2}-\Ib_{\dimone}\preccurlyeq\frac{2(\alpha_j-1)+\zeta}{2-\zeta}\,\Ib_{\dimone}\label{eq:sd_bound}.
	\end{flalign}
The above expression follows by multiplying eqn.~\eqref{eq:normbound} by $\alpha_j$ and then subtracting $\Ib_{\dimone}$.
Now, from Lemma~\ref{lem:alpha}, we get $-\zeta(1-\zeta)\le2(\alpha_j-1)\le\zeta(1-\zeta)$ for $j=1\ldots t$. Using this in eqn.~\eqref{eq:sd_bound}, we further have
	\begin{flalign}
	&~-\frac{\zeta(1-\zeta)+\zeta}{2+\zeta}\,\Ib_{\dimone}\preccurlyeq \alpha_j\Qb^{-1/2}\Ab\Db^2\Ab^\ts\Qb^{-1/2}-\Ib_{\dimone}\preccurlyeq\frac{\zeta(1-\zeta)+\zeta}{2-\zeta}\,\Ib_{\dimone}\nonumber\\
	\Leftrightarrow&~-\frac{\zeta(2-\zeta)}{2+\zeta}\,\Ib_{\dimone}\preccurlyeq \alpha_j\Qb^{-1/2}\Ab\Db^2\Ab^\ts\Qb^{-1/2}-\Ib_{\dimone}\preccurlyeq\zeta\,\Ib_{\dimone}\nonumber\\
	\Rightarrow&~-\zeta\,\Ib_{\dimone}\preccurlyeq \alpha_j\Qb^{-1/2}\Ab\Db^2\Ab^\ts\Qb^{-1/2}-\Ib_{\dimone}\preccurlyeq\zeta\,\Ib_{\dimone}\label{eq:normbound2}\\
	\Rightarrow&~\nbr{\Ib_{\dimone}-\alpha_j\Qb^{-\nicefrac{1}{2}}\Ab\Db^2\Ab^\ts\Qb^{-\nicefrac{1}{2}}}_2\le\zeta\nonumber\,,
	\end{flalign}
	where eqn.~\eqref{eq:normbound2} is due to the fact that $\frac{2-\zeta}{2+\zeta}\le 1$.
\end{proof}

\vspace{0.02in}\noindent\textbf{Satisfying eqn.~\eqref{eq:pdcond1}.} As eqn.~\eqref{eq:sd_struct1} holds, eqn.~\eqref{eq:pdcond1} directly follows from Lemma~\ref{lem:cond3}.

\vspace{0.02in}\noindent\textbf{Satisfying eqn.~\eqref{eq:pdcond2}.} Using Lemma~\ref{lem:sd_bound} and applying Lemma~\ref{lem:sd_recursive} recursively, we get
\begin{flalign}
\|\tilde{\fb}^{(t)}\|_2\le\zeta\|\tilde{\fb}^{(t-1)}\|_2\le\dots\le\zeta^t\|\tilde{\fb}^{(0)}\|_2=\zeta^t\|\Qb^{-\nicefrac{1}{2}}\pb\|_2\nonumber\,.
\end{flalign}

\section{Convergence analysis of Algorithm~\ref{algo:iipm}}\label{app:convergence}

\subsection{Additional notation}

%\haim{In the arxiv version, I think this should be folded into the main notation text. PD: This is only used in this section, so we should probably leave it here. Haim: OK.}
For any two vectors $\ab=(a_1,\dots,a_\ell)^\ts$ and $\bb=(b_1,\dots,b_\ell)^\ts$ let $\ab\circ\bb=(a_1b_1,\dots,a_\ell b_\ell)^\ts$. For any vector $\ab\in\R{n}$ its $\ell_\infty$ norm is defined as $\|\ab\|_\infty=\max_i\abs{a_i}$. We heavily use the following standard inequality to prove results in this section:
\begin{flalign}\label{eq:normineq}
\abs{\frac{\ab^\ts\one_n}{n}}\le\|\ab\|_\infty\le\|\ab\|_2.
\end{flalign}

\subsection{Number of iterations for the CG solver}
In this section, most of the proofs follow \cite{Mon03} except for the fact that we used our sketching based preconditioner $\Qb^{\nicefrac{-1}{2}}$. Recall that $\mathcal{S}$ is the set of optimal and feasible solutions for the proposed LP.

\begin{lemma}\label{lem:ineq}
	Let $(\xb^{0},\yb^{0},\sbb^{0})$ be the initial point with $(\xb^{0},\sbb^{0})>\zero$ and $(\xb^{*},\yb^{*},\sbb^{*})\in\mathcal{S}$ such that $(\xb^{*},\sbb^{*})\le(\xb^{0},\sbb^{0})$ with $\sbb^{0}\ge|\Ab^\ts\yb^{0}-\cbb|$. Then, for any point $(\xb,\yb,\sbb)\in\mathcal{N}(\gamma)$ such that $\rb=\eta\,\rb^{0}$ and $0\le\eta\le\min\left\{1,\frac{\sbb^\ts\xb}{\sbb^{0\ts}\xb^{0}}\right\}$, we get
	\begin{subequations}
		\begin{flalign}
		&(i)~~\eta\,(\xb^\ts\sbb^{0}+\sbb^\ts\xb^{0})\le\,3\dimtwo\mu\label{eq:ineq}\,,\\
		&(ii)~~\eta\,\|\Sb(\xb^{*}-\xb^{0})\|_2\le\eta\,\|\Sb\xb^{0}\|_2\le\eta\sbb^{\ts}\xb^{0}\le\,3\dimtwo\mu\label{eq:ineq2}\,,\\
		&(iii)~~\eta\,\|\Xb(\sbb^{0}+\Ab^\ts\yb^{0}-\cbb)\|_2\le~2\eta\,\|\Xb\sbb^{0}\|_2\le~2\eta\,\xb^\ts\sbb^{0}\le~6\dimtwo\mu\label{eq:ineq3}\,.
		\end{flalign}
	\end{subequations}
\end{lemma}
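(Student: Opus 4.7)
\textbf{Proof plan for Lemma~\ref{lem:ineq}.}

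My plan is to prove (i) via the classical infeasible-IPM trick: the fact that $\rb = \eta\rb^0$ forces certain affine combinations of $(\xb,\yb,\sbb)$, $(\xb^0,\yb^0,\sbb^0)$, $(\xb^*,\yb^*,\sbb^*)$ to lie in complementary null/range spaces, giving an exact orthogonality identity. Parts (ii) and (iii) then reduce to elementary component-wise bounds that exploit the hypotheses $\zero \le \xb^* \le \xb^0$, $\zero \le \sbb^* \le \sbb^0$, and $\sbb^0 \ge |\Ab^\ts \yb^0 - \cbb|$.

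For (i), the main step is to define
\[
\ab \defeq \xb - \eta\xb^0 - (1-\eta)\xb^*, \qquad \bb' \defeq \yb - \eta\yb^0 - (1-\eta)\yb^*, \qquad \bbb \defeq \sbb - \eta\sbb^0 - (1-\eta)\sbb^*.
\]
Using $\rb_p = \eta\rb_p^0$ together with $\Ab\xb^* = \bb$, a direct check gives $\Ab\ab = \zero$; analogously, $\rb_d = \eta\rb_d^0$ and $\Ab^\ts\yb^* + \sbb^* = \cbb$ give $\bbb = -\Ab^\ts \bb'$. Therefore $\ab^\ts \bbb = -(\Ab\ab)^\ts \bb' = 0$. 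Expanding $\ab^\ts\bbb = 0$, invoking complementary slackness $\xb^{*\ts}\sbb^* = 0$, and rearranging yields
\[
\eta(\xb^\ts\sbb^0 + \sbb^\ts\xb^0) + (1-\eta)(\xb^\ts\sbb^* + \sbb^\ts\xb^*) = \xb^\ts\sbb + \eta^2\,\xb^{0\ts}\sbb^0 + \eta(1-\eta)(\xb^{0\ts}\sbb^* + \xb^{*\ts}\sbb^0).
\]
All inner products on the left are nonnegative, so the left side is at least $\eta(\xb^\ts\sbb^0+\sbb^\ts\xb^0)$. On the right, the hypotheses $\xb^* \le \xb^0$ and $\sbb^* \le \sbb^0$ give $\xb^{0\ts}\sbb^* + \xb^{*\ts}\sbb^0 \le 2\xb^{0\ts}\sbb^0 = 2n\mu_0$. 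Since $\xb^\ts\sbb = n\mu$ and the constraint $\eta \le \sbb^\ts\xb/\sbb^{0\ts}\xb^0 = \mu/\mu_0$ forces $\eta\mu_0 \le \mu$, the right side is bounded by $n\mu + \eta(2-\eta)n\mu_0 \le n\mu + 2\eta n\mu_0 \le 3n\mu$. This proves (i).

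For (ii), the hypothesis $\zero \le \xb^* \le \xb^0$ means $|x_i^* - x_i^0| \le x_i^0$ componentwise; multiplying by $s_i \ge 0$ and taking Euclidean norms gives $\|\Sb(\xb^*-\xb^0)\|_2 \le \|\Sb\xb^0\|_2$. Next, since $\sbb\circ\xb^0$ is entrywise nonnegative, $\|\Sb\xb^0\|_2 = \|\sbb\circ\xb^0\|_2 \le \|\sbb\circ\xb^0\|_1 = \sbb^\ts\xb^0$. The final inequality $\eta\sbb^\ts\xb^0 \le 3n\mu$ is exactly (i). For (iii), the assumption $\sbb^0 \ge |\Ab^\ts\yb^0 - \cbb|$ means $\zero \le \sbb^0 + (\Ab^\ts\yb^0-\cbb) \le 2\sbb^0$ componentwise, so $\|\Xb(\sbb^0 + \Ab^\ts\yb^0 - \cbb)\|_2 \le 2\|\Xb\sbb^0\|_2 \le 2\xb^\ts\sbb^0$ (using nonnegativity, then 1-vs-2 norm), and the bound $2\eta \xb^\ts\sbb^0 \le 6n\mu$ is again (i).

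The only nontrivial step is the derivation of the orthogonality identity $\ab^\ts\bbb = 0$ and the bookkeeping of the expansion of $\ab^\ts\bbb=0$ that isolates $\eta(\xb^\ts\sbb^0 + \sbb^\ts\xb^0)$; everything else is arithmetic with elementwise inequalities. Because the identity is exact and does not depend on the approximate solver, the preconditioning and CG error analysis from the rest of the paper play no role here.
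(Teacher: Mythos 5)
Your proposal is correct and follows essentially the same route as the paper: the same orthogonality identity $\bigl(\xb-\eta\xb^0-(1-\eta)\xb^*\bigr)^\ts\bigl(\sbb-\eta\sbb^0-(1-\eta)\sbb^*\bigr)=0$, the same expansion and bounding via $\xb^{*\ts}\sbb^*=0$, $\xb^{0\ts}\sbb^*+\xb^{*\ts}\sbb^0\le 2\xb^{0\ts}\sbb^0$, and $\eta\le \xb^\ts\sbb/\xb^{0\ts}\sbb^0$ for part (i), and the same elementary nonnegativity and $\ell_2$-vs-$\ell_1$ comparisons for parts (ii) and (iii) (where your componentwise bounds replace the paper's equivalent squared-norm expansions).
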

\begin{proof}
	We prove eqns.~\eqref{eq:ineq}--\eqref{eq:ineq3} below.
	
\noindent\textbf{Proof of eqn.~\eqref{eq:ineq}.}
	For completeness, we provide a proof of eqn.~\eqref{eq:ineq} following~\cite{Mon03}. Since $(\xb^{*}, \sbb^{*}, \yb^{*}) \in \mathcal{S}$, the following equalities hold:
	\begin{subequations}
		\begin{flalign}
		\Ab\xb^{*} &=\bb\label{eq:cond1}\\
		\Ab^\ts\yb^{*}+\sbb^{*} &=\cbb.\label{eq:cond2}
		\end{flalign}
	\end{subequations}
	
	\noindent Furthermore, $\rb=\eta \rb^{0}$ implies
	\begin{subequations}
		\begin{flalign}
		\Ab\xb-\bb &=\eta(\Ab\xb^{0}-b)\label{eq:cond3}\\
		\Ab^\ts\yb+\sbb-\cbb &=\eta(\Ab^\ts\yb^{0}+\sbb^{0}-\cbb).\label{eq:cond4}
		\end{flalign}
	\end{subequations}
	
	\noindent Combining eqn.~\eqref{eq:cond1} with eqn.~\eqref{eq:cond3} and eqn.~\eqref{eq:cond2} with eqn.~\eqref{eq:cond4}, we get
	\begin{subequations}
		\begin{flalign}
		\Ab\big(\xb-\eta\xb^{0}-(1-\eta)\xb^{*}\big) &=\zero\label{eq:cond5}\\
		\Ab^\ts(\yb-\eta\yb^{0}-(1-\eta)\yb^{*})+(\sbb-\eta\sbb^{0}-(1-\eta)\sbb^{*}) &=\zero.\label{eq:cond6}
		\end{flalign}
	\end{subequations}
	Multiplying eqn.~\eqref{eq:cond6} by $\left(\xb-\eta \xb^{0}-(1-\eta)\xb^{*}\right)^\ts$ on the left and using eqn.~\eqref{eq:cond5},
	we get
	$$
	\left(\xb-\eta \xb^{0}-(1-\eta)\xb^{*}\right)^\ts\left(\sbb-\eta\sbb^{0}-(1-\eta)\sbb^{*}\right)=0.
	$$
	Expanding we get
	\begin{flalign}
	&\eta\left(\xb^{0^{\ts}}\sbb+\xb^{\ts}\sbb^{0}\right)=\eta^{2} \xb^{0^{\ts}}\sbb^{0}+(1-\eta)^{2} (\xb^{*})^{\ts} \sbb^{*}+\xb^{\ts}\sbb\nonumber\\
	&~~~~~~~~~~~~~~~~~~~~~~~~~~~~~+\eta(1-\eta)\left(\xb^{0^{\ts}}\sbb^{*}+(\xb^{*})^{\ts} \sbb^{0}\right)-(1-\eta)\left((\xb^{*})^{\ts}\sbb+\xb^{\ts} \sbb^{*}\right).\label{eq:cond7}
	\end{flalign}
	Next, we use the given conditions and rewrite eqn.~\eqref{eq:cond7} as
	%Now, by eqn.~\eqref{eq:cond7} along with the facts that $\eta \leq \frac{\xb^{\ts} \sbb}{\xb^{0^\ts} \sbb^{0}}$\,,
	%$(\xb^{*})^{\ts} \sbb^{*}=0,(\xb, \sbb) \geq \zero,(\xb^{*}, \sbb^{*}) \geq \zero,\left(\xb^{0}, \sbb^{0}\right) \geq \zero, \eta \in[0,1], \xb^{*} \leq \xb^{0},$ and $\sbb^{*} \leq \sbb^{0},$ we conclude that
	\begin{flalign}
	\eta\left(\xb^{0^{\ts}} \sbb+\sbb^{0^{\ts}} \xb\right) & \leq \eta^{2} \xb^{0^{\ts}} \sbb^{0}+\xb^{\ts} \sbb+\eta(1-\eta)\left(\xb^{0^{\ts}} \sbb^{*}+\sbb^{0^{\ts}} \xb^{*}\right) \nonumber\\
	& \leq \eta^{2} \xb^{0^{\ts}} \sbb^{0}+\xb^{\ts} \sbb+2 \eta(1-\eta) \xb^{0^{\ts}} \sbb^{0} \nonumber\\
	& \leq 2 \eta \xb^{0^{\ts}} \sbb^{0}+\xb^{\ts} \sbb \leq 3 \xb^{\ts} \sbb=3\dimtwo\mu.\label{eq:fin_bound}
	\end{flalign}
	The first inequality in eqn.~\eqref{eq:fin_bound} follows from the following facts. First, $(1-\eta)((\xb^{*})^{\ts}\sbb+\xb^{\ts} \sbb^{*})\ge0$ as $(\xb^{*}, \sbb^{*}) \geq \zero$ and $(\xb^{0}, \sbb^{0}) \geq \zero$. Second, as $(\xb^{*}, \sbb^{*}, \yb^{*}) \in \mathcal{S}$ (which implies $\xb^{*}\circ\sbb^{*}=\zero$), we have $(\xb^{*})^{\ts} \sbb^{*}=0$. The second inequality in eqn.~\eqref{eq:fin_bound} holds as $\xb^{*} \leq \xb^{0}$, $\sbb^{*} \leq \sbb^{0}$,  $(\xb^{*}, \sbb^{*}) \geq \zero$, and $(\xb^{0}, \sbb^{0}) \geq \zero$; combining them we get $(\xb^{0^{\ts}} \sbb^{*}+\sbb^{0^{\ts}} \xb^{*})\le2\,\xb^{0^\ts}\sbb^{0}$. Third inequality in eqn.~\eqref{eq:fin_bound} is true as we have $ \eta^{2} \xb^{0^{\ts}}+2 \eta(1-\eta) \xb^{0^{\ts}} \sbb^{0}=2 \eta\xb^{0^{\ts}} \sbb^{0}-\eta^2\xb^{0^{\ts}} \sbb^{0}\le2 \eta\xb^{0^{\ts}} \sbb^{0}$. The final inequality holds as $\eta \leq \frac{\xb^{\ts} \sbb}{\xb^{0^\ts} \sbb^{0}}$.\qed
	
	\noindent\textbf{Proof of eqn.~\eqref{eq:ineq2}.}
	The last inequality follows from eqn.~\eqref{eq:ineq}. The second to last inequality is also easy to prove as
	\begin{flalign}
	\|\Sb\xb^{0}\|_2=\sqrt{\sum_{i=1}^{s}(s_ix_i^{0})^2}\le\sqrt{\left(\sum_{i=1}^{s}s_ix_i^{0}\right)^2}=\sbb^\ts\xb^{0}\,.
	\end{flalign}
	
	\noindent To prove the first inequality in eqn.~\eqref{eq:ineq2}, we use the fact $\xb^{0}\ge\xb^{*}$ as follows:
	\begin{flalign}
	\|\Sb\xb^{0}\|_2^2-\|\Sb(\xb^{*}-\xb^{0})\|_2^2=&~\sum_{i=1}^\dimtwo(s_ix_i^{0})^2-\sum_{i=1}^\dimtwo s_i^2\left((x_i^{*})^2+(x_i^{0})^2-2x_i^{*}x_i^{0}\right)\nonumber\\
	=&~\sum_{i=1}^\dimtwo s_i^2\left(2x_i^{*}x_i^{0}-(x_i^{*})^2\right)\ge 0\nonumber\,.
	\end{flalign}\qed
	
	\noindent \textbf{Proof of eqn.~\eqref{eq:ineq3}.}
	To prove this we use a similar approach as in eqn.~\eqref{eq:ineq2}. The last inequality directly follows from eqn.~\eqref{eq:ineq}; the second to last inequality is also easy to prove as
	\begin{flalign}
	\|\Xb\sbb^{0}\|_2=\sqrt{\sum_{i=1}^{\dimtwo}(x_is_i^{0})^2}\le\sqrt{\left(\sum_{i=1}^{\dimtwo}x_is_i^{0}\right)^2}=\xb^\ts\sbb^{0}\,.
	\end{flalign}
	For the first inequality, we proceed as follows:
	\begin{flalign}
	\|\Xb(\sbb^{0}+\Ab^\ts\yb^{0}-\cbb)\|_2^2=&~\|\Xb\sbb^{0}\|_2^2+\|\Xb(\Ab^\ts\yb^{0}-\cbb)\|_2^2+2\sbb^{0^\ts}\Xb^\ts\Xb(\Ab^\ts\yb^{0}-\cbb)\nonumber\\
	=&~\|\Xb\sbb^{0}\|_2^2+\sum_{i=1}^\dimtwo x_i^2(\Ab^\ts\yb^{0}-\cbb)_i^2+2\sum_{i=1}^\dimtwo x_i^2s_i^{0}(\Ab^\ts\yb^{0}-\cbb)_i\nonumber\\
	\le&~\|\Xb\sbb^{0}\|_2^2+\sum_{i=1}^\dimtwo (x_is_i^{0})^2+2\sum_{i=1}^\dimtwo(x_is_i^{0})^2\nonumber\\
	=&~\|\Xb\sbb^{0}\|_2^2+\|\Xb\sbb^{0}\|_2^2+2\|\Xb\sbb^{0}\|_2^2=4\|\Xb\sbb^{0}\|_2^2.\label{eq:note4}
	\end{flalign}
	The inequality in eqn.~\eqref{eq:note4} follows from $x_i\ge0$, $s_i^{0}\ge0$ and $\abs{(\Ab^\ts\yb^{0}-\cbb)_i}\le s_i^{0}$ for all $i=1\ldots \dimtwo$.
\end{proof}

\noindent Our next result bounds $\|\Qb^{-\nicefrac{1}{2}}\pb\|_2$ which is instrumental in proving the final bound.
\begin{lemma}\label{thm:boundf}
	Let $(\xb^{0},\yb^{0},\sbb^{0})$ be the initial point with $(\xb^{0},\sbb^{0})>\zero$ such that $\xb^{0}\ge\xb^{*}$ and $\sbb^{0}\ge\max\{\sbb^{*},|\cbb-\Ab^\ts\yb^{0}|\}$ for some $(\xb^{*},\yb^{*},\sbb^{*})\in\mathcal{S}$. Furthermore, let $(\xb,\yb,\sbb)\in\mathcal{N}(\gamma)$ with $\rb=\eta\,\rb^{0}$ for some $0\le\eta\le1$. If the sketching matrix $\Wb\in\RR{\dimtwo}{w}$ satisfies the condition in eqn.~\eqref{eq:pdcond1}, then
	\begin{flalign}
	\|\Qb^{-\nicefrac{1}{2}}\pb\|_2\le~\sqrt{2}\left(\frac{9\dimtwo}{\sqrt{1-\gamma}}+\sigma\sqrt{\frac{\dimtwo}{1-\gamma}}+\sqrt{\dimtwo}\right)\sqrt{\mu}\nonumber\,.
	\end{flalign}
	Recall that $\rb=(\rb_p,\rb_d)=(\Ab\xb-\bb,\Ab^\ts\yb+\sbb-\cbb)$ and $\rb^{0}=(\rb_p^{0},\rb_d^{0})=(\Ab\xb^{0}-\bb,\Ab^\ts\yb^{0}+\sbb^{0}-\cbb)$\,.
\end{lemma}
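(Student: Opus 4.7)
The plan is to bound $\|\Qb^{-\nicefrac{1}{2}}\pb\|_2$ by rewriting $\pb$ so that $\Ab\Db$ can be factored out on the left, and then invoking the spectral bound $\sigma_{\max}(\Qb^{-\nicefrac{1}{2}}\Ab\Db)\le\sqrt{2/(2-\zeta)}\le\sqrt{2}$, which is a direct consequence of eqn.~\eqref{eq:pdcond1}. After that the remaining task reduces to bounding the Euclidean norms of a handful of $\R{n}$ vectors, each of which either is a quantity already controlled by Lemma~\ref{lem:ineq}, or can be reduced to one via the centrality condition $x_i s_i\ge(1-\gamma)\mu$ from $\mathcal{N}(\gamma)$.

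First I would substitute $\rb_p=\eta\rb_p^{0}=\eta\Ab(\xb^{0}-\xb^{*})$ (using $\bb=\Ab\xb^{*}$) and $\rb_d=\eta\rb_d^{0}=\eta(\sbb^{0}+\Ab^\ts\yb^{0}-\cbb)$ into the definition of $\pb$ in eqn.~\eqref{eqn:pdef}. Using $\Db=\Xb^{1/2}\Sb^{-1/2}$, I would rewrite each of the four resulting summands as $\Ab\Db$ times an $\R{n}$-vector; for example $\Ab\xb=\Ab\Db\,(\Xb\Sb)^{1/2}\one_\dimtwo$, $\sigma\mu\,\Ab\Sb^{-1}\one_\dimtwo=\sigma\mu\,\Ab\Db\,(\Xb\Sb)^{-1/2}\one_\dimtwo$, $-\rb_p=\eta\,\Ab\Db\,\Xb^{-1/2}\Sb^{1/2}(\xb^{*}-\xb^{0})$, and $\Ab\Db^2\rb_d=\eta\,\Ab\Db\,\Xb^{1/2}\Sb^{-1/2}\rb_d^{0}$. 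This yields
\begin{flalign*}
\pb=\Ab\Db\Big[\eta\,\Xb^{-1/2}\Sb^{1/2}(\xb^{*}-\xb^{0})-\sigma\mu(\Xb\Sb)^{-1/2}\one_\dimtwo+(\Xb\Sb)^{1/2}\one_\dimtwo-\eta\,\Xb^{1/2}\Sb^{-1/2}\rb_d^{0}\Big].
\end{flalign*}
Applying $\|\Qb^{-\nicefrac{1}{2}}\Ab\Db\vb\|_2\le\sqrt{2}\,\|\vb\|_2$ and the triangle inequality, the problem reduces to bounding the four $\R{n}$ norms in the bracket.

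Next I would bound the four pieces. For any vector $\ub$, the neighborhood condition gives $\|\Xb^{-1/2}\Sb^{1/2}\ub\|_2\le\|\Sb\ub\|_2/\sqrt{(1-\gamma)\mu}$ and $\|\Xb^{1/2}\Sb^{-1/2}\ub\|_2\le\|\Xb\ub\|_2/\sqrt{(1-\gamma)\mu}$, simply because $(s_i/x_i)^{1/2}=s_i/(x_is_i)^{1/2}\le s_i/\sqrt{(1-\gamma)\mu}$ and similarly for the reciprocal. Combining these with the bounds from Lemma~\ref{lem:ineq}, namely $\eta\|\Sb(\xb^{*}-\xb^{0})\|_2\le 3\dimtwo\mu$ and $\eta\|\Xb\rb_d^{0}\|_2\le 6\dimtwo\mu$, produces $3\dimtwo\sqrt{\mu}/\sqrt{1-\gamma}$ and $6\dimtwo\sqrt{\mu}/\sqrt{1-\gamma}$ respectively, which sum to the $9\dimtwo/\sqrt{1-\gamma}$ contribution. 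For the middle two terms, $\|(\Xb\Sb)^{-1/2}\one_\dimtwo\|_2\le\sqrt{\dimtwo/((1-\gamma)\mu)}$ again by the neighborhood bound, giving $\sigma\sqrt{\dimtwo/(1-\gamma)}\sqrt{\mu}$; and $\|(\Xb\Sb)^{1/2}\one_\dimtwo\|_2=\sqrt{\xb^\ts\sbb}=\sqrt{\dimtwo\mu}$ directly from the definition of $\mu$. Adding the four bounds and multiplying by the $\sqrt{2}$ from the preconditioner spectrum yields the claimed inequality.

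The only part that takes care is the algebraic rearrangement in step one: one needs to check that $\rb_p$ (an $\R{m}$ quantity) really can be absorbed into the $\Ab\Db$ factorization, which works here because $\rb_p=\eta\,\Ab(\xb^{0}-\xb^{*})$ thanks to the fact that $\xb^{*}$ is feasible, i.e., $\Ab\xb^{*}=\bb$. The rest is routine: the factorization relies only on $\Db$ being diagonal and invertible along the central path, and the four norm bounds are essentially componentwise applications of $x_is_i\ge(1-\gamma)\mu$ together with the three estimates already established in Lemma~\ref{lem:ineq}.
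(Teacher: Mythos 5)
Your proof is correct and follows essentially the same route as the paper's: both factor $\Qb^{-\nicefrac{1}{2}}\Ab\Db$ out of each of the four summands of $\pb$, apply $\|\Qb^{-\nicefrac{1}{2}}\Ab\Db\|_2\le\sqrt{2}$ from eqn.~\eqref{eq:pdcond1}, and then control the remaining $\R{n}$-norms via Lemma~\ref{lem:ineq} and the centrality bound $x_is_i\ge(1-\gamma)\mu$. The only (immaterial) difference is that you pull $\Ab\Db$ out of the whole bracket before applying the triangle inequality, whereas the paper splits $\Qb^{-\nicefrac{1}{2}}\pb$ into $\Delta_1,\dots,\Delta_4$ first and factors each term separately.
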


\begin{proof}
Note that after correcting the approximation error of the CG solver using $\vb$, the primal and dual residuals $\rb=(\rb_p,\rb_d)$ corresponding to an iterate $(\xb,\yb,\sbb)\in\mathcal{N}(\gamma)$ always lie on the line segment between zero and $\rb^{(0)}$. In other words, $\rb=\eta\rb^{(0)}$ always holds for some $\eta\in[0,1]$. This was formally proven in Lemma 3.3 of \cite{Mon03}.
	In order to bound $\|\Qb^{-\nicefrac{1}{2}}\pb\|_2$,  first we express $\pb$ as in eqn.~\eqref{eq:normal} and rewrite
	\begin{flalign}
	\Qb^{-\nicefrac{1}{2}}\pb
	=&~\Qb^{-\nicefrac{1}{2}}\left(-\rb_p-\sigma\mu\Ab\Sb^{-1}\one_\dimtwo+\Ab\xb-\Ab\Db^2\rb_d\right).\label{eq:recur3}
	\end{flalign}
	Then, applying the triangle inequality to $\|\Qb^{-\nicefrac{1}{2}}\pb\|_2$ in eqn.~\eqref{eq:recur3}, we get
	\begin{flalign}
	\|\Qb^{-\nicefrac{1}{2}}\pb\|_2\le \Delta_1+\Delta_2+\Delta_3+\Delta_4\label{eq:recur4}\,,
	\end{flalign}
	where
	\begin{flalign}
	\Delta_1=&~\|\Qb^{-\nicefrac{1}{2}}\rb_p\|_2\nonumber\,,\\
	\Delta_2=&~\sigma\mu\|\Qb^{-\nicefrac{1}{2}}\Ab\Db(\Xb\Sb)^{-\nicefrac{1}{2}}\one_\dimtwo\|_2\nonumber\,,\\
	\Delta_3=&~\|\Qb^{-\nicefrac{1}{2}}\Ab\Db\Db^{-1}\xb\|_2\,,\nonumber\\
	\Delta_4=&~\|\Qb^{-\nicefrac{1}{2}}\Ab\Db^2\rb_d\|_2\,.\nonumber
	\end{flalign}
	To bound $\Delta_1$, $\Delta_2$, $\Delta_3$ and $\Delta_4$ we heavily use the condition of eqn.~\eqref{eq:pdcond1}.
	
	\paragraph{Bounding $\Delta_1$.} Using $\rb_p=\eta\,\rb_p^{0}$, $\rb_p^{0}=\Ab\xb^{0}-\bb$ and $\bb=\Ab\xb^{*}$, we rewrite $\Delta_1$ as
	\begin{flalign}
	\Delta_1=&~\eta\,\|\Qb^{-\nicefrac{1}{2}}\Ab(\xb^{0}-\xb^{*})\|_2\nonumber\\
	=&~\eta\,\|\Qb^{-\nicefrac{1}{2}}\Ab\Db\Db^{-1}(\xb^{0}-\xb^{*})\|_2\nonumber\\
	\le&~\eta\,\|\Qb^{-\nicefrac{1}{2}}\Ab\Db\|_2\|\Db^{-1}(\xb^{0}-\xb^{*})\|_2\nonumber\\
	\le&~\sqrt{2}\eta\,\|\Db^{-1}(\xb^{0}-\xb^{*})\|_2\nonumber\\
	=&~\sqrt{2}\eta\,\|(\Xb\Sb)^{-\nicefrac{1}{2}}\Sb(\xb^{0}-\xb^{*})\|_2\nonumber\\
	\le&~\sqrt{2}\eta\,\|(\Xb\Sb)^{-\nicefrac{1}{2}}\|_2\,\|\Sb(\xb^{0}-\xb^{*})\|_2\,,\label{eq:del11}
	\end{flalign}
	where the above steps follow from submultiplicativity and  eqn.~\eqref{eq:pdcond1}. From eqn.~\eqref{eq:pdcond1}, note that we have $\|\Qb^{-\nicefrac{1}{2}}\Ab\Db\|_2\le\sqrt{2}$ as $\zeta\le 1$\,. Now, applying eqn.~\eqref{eq:ineq2} and $\|(\Xb\Sb)^{-\nicefrac{1}{2}}\|_2=\max_{1\le i \le \dimtwo}\frac{1}{\sqrt{x_is_i}}$, we further have
	\begin{flalign}
	\Delta_1\le&~\sqrt{2}\,\max_{1\le i \le \dimtwo}\frac{1}{\sqrt{x_is_i}}\cdot 3\dimtwo\mu\nonumber\\
	\le&~ 3\sqrt{2}\,\dimtwo\sqrt{\frac{\mu}{1-\gamma}}\label{eq:del12}\,,
	\end{flalign}
	where the last inequality follows from $(\xb,\yb,\sbb)\in\mathcal{N}(\gamma)$.
	
	\paragraph{Bounding $\Delta_2$.} Applying submultiplicativity, we get
	\begin{flalign}
	\Delta_2=&~\sigma\mu\,\|\Qb^{-\nicefrac{1}{2}}\,\Ab\Db\,(\Xb\Sb)^{-\nicefrac{1}{2}}\one_\dimtwo\|_2\nonumber\\
	\le&~\sigma\mu\,\|\Qb^{-\nicefrac{1}{2}}\,\Ab\Db\|_2\|(\Xb\Sb)^{-\nicefrac{1}{2}}\one_\dimtwo\|_2\nonumber\\
	\le&~\sqrt{2}\,\sigma\mu\,\|(\Xb\Sb)^{-\nicefrac{1}{2}}\one_\dimtwo\|_2\nonumber\\
	=&~\sqrt{2}\,\sigma\mu\,\sqrt{\sum_{i=1}^{\dimtwo}\frac{1}{x_i s_i}}
	\le~\sqrt{2}\,\sigma\mu\,\sqrt{\sum_{i=1}^{\dimtwo}\frac{1}{(1-\gamma)\mu}}\nonumber\\
	=&~\sqrt{2}\,\sigma\,\sqrt{\frac{\dimtwo\,\mu}{(1-\gamma)}}\label{eq:del2}\,,
	\end{flalign}
	where the second to last inequality follows from eqn.~\eqref{eq:pdcond1} and the last inequality holds as $(\xb,\yb,\sbb)\in\mathcal{N}(\gamma)$.
	
	\paragraph{Bounding $\Delta_3$.} Using $\Db=\Sb^{-\nicefrac{1}{2}}\Xb^{\nicefrac{1}{2}}$ and $\xb=\Xb\,\one_\dimtwo$ we get
	%using SVD of $\Ab\Db$
	\begin{flalign}
	\Delta_3=&~\|\Qb^{-\nicefrac{1}{2}}\,\Ab\Db\,(\Sb^{\nicefrac{1}{2}}\Xb^{-\nicefrac{1}{2}})\,\Xb\,\one_\dimtwo\|_2\nonumber\\
	=&~\|\Qb^{-\nicefrac{1}{2}}\,\Ab\Db\,(\Sb\Xb)^{\nicefrac{1}{2}}\,\one_\dimtwo\|_2\nonumber\\
	\le&~\|\Qb^{-\nicefrac{1}{2}}\,\Ab\Db\|_2\|(\Sb\Xb)^{\nicefrac{1}{2}}\,\one_\dimtwo\|_2\nonumber\\
	\le&~\sqrt{2}\,\sqrt{\sum_{i=1}^{\dimtwo}x_i s_i}=~\sqrt{2\dimtwo\,\mu}\label{eq:del3}\,,
	\end{flalign}
	where the inequalities follow from submultiplicativity and eqn.~\eqref{eq:pdcond1}.
	
	\paragraph{Bounding $\Delta_4$.} Using $\rb_d=\eta\,\rb_d^{0}$, we have
	\begin{flalign}
	\Delta_4=&~\eta\|\Qb^{-\nicefrac{1}{2}}\,\Ab\,\Db^2\rb_d^{0}\|_2\nonumber\\
	\le&~ \eta\|\Qb^{-\nicefrac{1}{2}}\,\Ab\Db\|_2\|(\Xb\Sb)^{-\nicefrac{1}{2}}\Xb\rb_d^{0}\|_2\nonumber\\
	\le&~ \sqrt{2}\eta\,\|(\Xb\Sb)^{-\nicefrac{1}{2}}\Xb(\Ab^\ts\yb^{0}+\sbb^{0}-\cbb)\|_2\nonumber\\
	\le&~\sqrt{2}\eta\,\|(\Xb\Sb)^{-\nicefrac{1}{2}}\|_2\,\|\Xb(\Ab^\ts\yb^{0}+\sbb^{0}-\cbb)\|_2\,,\nonumber
	\end{flalign}
	where the above inequalities follow from submultiplicativity and eqn.~\eqref{eq:pdcond1}. Now, applying eqn.~\eqref{eq:ineq3} and $\|(\Xb\Sb)^{-\nicefrac{1}{2}}\|_2\le\frac{1}{\sqrt{(1-\gamma)\mu}}$, we further have
	\begin{flalign}
	\Delta_4\le~6\sqrt{2}\dimtwo\sqrt{\frac{\mu}{1-\gamma}}.\label{eq:del4}
	\end{flalign}
	
	\paragraph{Final bound.} Combining eqns.~\eqref{eq:recur4}, \eqref{eq:del12}, ,\eqref{eq:del2}, \eqref{eq:del3} and \eqref{eq:del4}, we get
	\begin{flalign}
	\|\Qb^{-\nicefrac{1}{2}}\pb\|_2\le~\sqrt{2}\left(\frac{9\dimtwo}{\sqrt{1-\gamma}}+\sigma\sqrt{\frac{\dimtwo}{1-\gamma}}+\sqrt{\dimtwo}\right)\sqrt{\mu}\,.
	\end{flalign}
	This concludes the proof of Lemma~\ref{thm:boundf}.
\end{proof}

\begin{lemma}\label{lem:conouter}
	Let the sketching matrix $\Wb$ satisfy the conditions of eqns.~\eqref{eq:pdcond1} and \eqref{eq:pdcond2}. Then, after $t\ge\frac{\log(\nicefrac{4\sqrt{6\dimtwo}\,\psi}{\gamma\sigma})}{\log(\nicefrac{1}{\zeta})}$ iterations of the CG solver in Algorithm~\ref{algo:PCG},
	$$\|\tilde{\fb}^{(t)}\|_2\le\frac{\gamma\sigma}{4\sqrt{\dimtwo}}\sqrt{\mu}~~~\text{and}~~~\|\vb\|_2\le\frac{\gamma\sigma}{4}\mu.$$
	Here $\psi=\left(\frac{9\dimtwo}{\sqrt{1-\gamma}}+\sigma\sqrt{\frac{\dimtwo}{1-\gamma}}+\sqrt{\dimtwo}\right)$ and $\tilde{\fb}^{(t)}=\Qb^{-\nicefrac{1}{2}}\Ab\Db^2\Ab^\ts\Qb^{-\nicefrac{1}{2}}\tilde{\zb}^t-\Qb^{-\nicefrac{1}{2}}\pb$ is the residual of the CG solver.
\end{lemma}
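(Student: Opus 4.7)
The plan is to chain together the bounds already established in the preceding lemmas and then solve for the smallest iteration count $t$ that forces the CG residual, and hence the perturbation vector, below the targets. There is no deep obstacle here; the proof is essentially a calculation.

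First, I will invoke the residual bound of eqn.~\eqref{eq:pdcond2} (the CG guarantee proved in Section~\ref{sxn:PCG}, made possible by Lemma~\ref{lem:cond3}): for any $t$,
\begin{flalign*}
\|\tilde{\fb}^{(t)}\|_2 \le \zeta^t\,\|\Qb^{-\nicefrac{1}{2}}\pb\|_2.
\end{flalign*}
Next, I apply Lemma~\ref{thm:boundf} to replace $\|\Qb^{-\nicefrac{1}{2}}\pb\|_2$ by $\sqrt{2}\,\psi\sqrt{\mu}$, where $\psi$ is exactly the quantity defined in the statement. This yields
\begin{flalign*}
\|\tilde{\fb}^{(t)}\|_2 \le \sqrt{2}\,\psi\,\zeta^t\,\sqrt{\mu}.
\end{flalign*}

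Second, I will translate the two target inequalities into constraints on $\zeta^t$. The bound on $\|\vb\|_2$ is the binding one: by Lemma~\ref{lem:v}, $\|\vb\|_2 \le \sqrt{3n\mu}\,\|\tilde{\fb}^{(t)}\|_2$, so demanding $\|\vb\|_2 \le \frac{\gamma\sigma}{4}\mu$ reduces to $\|\tilde{\fb}^{(t)}\|_2 \le \frac{\gamma\sigma}{4\sqrt{3n}}\sqrt{\mu}$, which in turn (combined with the chained bound above) reduces to $\zeta^t \le \frac{\gamma\sigma}{4\sqrt{6n}\,\psi}$. Taking $\log(\cdot)$ of both sides and dividing by $\log(1/\zeta)>0$ (since $\zeta\in[0,1)$) gives the claimed threshold
\begin{flalign*}
t \ge \frac{\log\bigl(4\sqrt{6n}\,\psi/(\gamma\sigma)\bigr)}{\log(1/\zeta)}.
\end{flalign*}

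Finally, I will check that this same $t$ suffices for the first inequality. Plugging back, $\|\tilde{\fb}^{(t)}\|_2 \le \frac{\gamma\sigma}{4\sqrt{3n}}\sqrt{\mu} \le \frac{\gamma\sigma}{4\sqrt{n}}\sqrt{\mu}$, which is exactly the first claimed bound (with some slack). That closes the proof. The only thing requiring any care is matching the constant $\sqrt{6}$ appearing in the threshold: it arises as $\sqrt{2}$ from Lemma~\ref{thm:boundf} times $\sqrt{3}$ from Lemma~\ref{lem:v}, so one must be sure to use the $\vb$-bound (not the $\tilde{\fb}^{(t)}$-bound) when deriving the threshold, since the $\vb$-bound is strictly more stringent. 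Everything else is algebraic bookkeeping.
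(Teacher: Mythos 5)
Your proposal is correct and follows essentially the same route as the paper: chain eqn.~\eqref{eq:pdcond2} with Lemma~\ref{thm:boundf} to get $\|\tilde{\fb}^{(t)}\|_2\le\sqrt{2}\,\psi\,\zeta^t\sqrt{\mu}$, then combine with Lemma~\ref{lem:v} and solve for $t$. The only (harmless) stylistic difference is that you derive the threshold from the binding $\vb$-constraint and then verify the residual bound follows, whereas the paper checks each of the two inequalities separately against the stated choice of $t$; your observation that the $\sqrt{6}$ comes from $\sqrt{2}\cdot\sqrt{3}$ matches the paper's accounting exactly.
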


\begin{proof}
	Combining Lemma~\ref{thm:boundf} and the condition in eqn.~\eqref{eq:pdcond2}, we get
	\begin{flalign}\label{eq:bd}
	\|\tilde{\fb}^{(t)}\|_2\le\zeta^t\psi\sqrt{2\mu}.
	\end{flalign}
	Now, $\|\tilde{\fb}^{(t)}\|_2\le\frac{\gamma\sigma}{4\sqrt{\dimtwo}}\sqrt{\mu}$ holds if
	$\sqrt{2}\psi\,\zeta^t\sqrt{\mu}\le\frac{\gamma\sigma}{4\sqrt{\dimtwo}}\sqrt{\mu}$, which holds if $\left(\frac{1}{\zeta}\right)^t\ge\frac{4\sqrt{2\,\dimtwo}\,\psi}{\gamma\sigma}$. The last inequality holds for our choice of $t$. Next, combining Lemma~\ref{lem:v} and eqn.~\eqref{eq:bd} we get
	\begin{flalign*}
	\|\vb\|_2\le\sqrt{3n\mu}\,\|\tilde{\fb}^{(t)}\|_2\le\sqrt{6n}\,\zeta^t\psi\mu
	\end{flalign*}
	Therefore, $\|\vb\|_2\le\frac{\gamma\sigma\mu}{4}$ holds if
	$\sqrt{6n}\psi\,\zeta^t\psi\mu\le\frac{\gamma\sigma\mu}{4}$, which holds for our choice of $t$.
	Now, fixing $\gamma$, $\sigma$, and $\zeta$, after $t=\Ocal (\log\dimtwo)$ iterations of Algorithm~\ref{algo:PCG} the conclusions of the lemma hold.
\end{proof}

\subsection{Determining step-size, bounding the number of iterations, and proof of Theorem~\ref{thm:1}}

Assume that the triplet $(\hat{\Delx},\hat{\Dely},\hat{\Dels})$ satisfies eqns.~\eqref{eq:delxhat}, \eqref{eq:addl} and \eqref{eq:delshat}. We rewrite this system in the following alternative form:
\begin{subequations}\label{eq:iip_3}
	\begin{flalign}
	\Ab\hat{\Delta\xb}=&-\rb_p, \label{eq:iip_3_1}\\
	\Ab^\ts\hat{\Delta\yb}+\hat{\Delta\sbb}=&-\rb_d,  \label{eq:iip_3_2}\\
	\Xb\hat{\Delta\sbb}+\Sb\hat{\Delta\xb}=&-\Xb\Sb\,\one_\dimtwo+\sigma\mu\,\one_\dimtwo - \vb.  \label{eq:iip_3_3}
	\end{flalign}
\end{subequations}
Indeed, we now show how to derive eqns.~\eqref{eq:delxhat}, \eqref{eq:addl} and \eqref{eq:delshat} from eqn.~\eqref{eq:iip_3}. Pre-multiplying both sides of eqn.~\eqref{eq:iip_3_3} by $\Ab\Sb^{-1}$ and noting that $\Db^2=\Xb\Sb^{-1}$, we get
\begin{flalign}
&~\Ab\Db^2\hat{\Delta\sbb}+\Ab\hat{\Delx}=-\Ab\Xb\one_n+\sigma\mu\Ab\Sb^{-1}\one_n-\Ab\Sb^{-1}\vb\nonumber\\
\Rightarrow&~\Ab\Db^2\hat{\Delta\sbb}=-\Ab\xb+\rb_p+\sigma\mu\Ab\Sb^{-1}\one_n-\Ab\Sb^{-1}\vb.\label{eq:iip_31}
\end{flalign}
Eqn.~\eqref{eq:iip_31} holds as $\Ab\Xb\one_n=\Ab\xb$ and, from eqn.~\eqref{eq:iip_3_1}, $\Ab\hat{\Delx}=-\rb_p$. Next, pre-multiplying eqn.~\eqref{eq:iip_3_2} by $\Ab\Db^2$, we get
\begin{flalign}
&~\Ab\Db^2\Ab^\ts\hat{\Dely}+\Ab\Db^2\hat{\Dels}=-\Ab\Db^2\rb_d\nonumber\\
\Rightarrow &~\Ab\Db^2\Ab^\ts\hat{\Dely}=-\rb_p-\sigma\mu\Ab\Sb^{-1}\one_n+\Ab\xb-\Ab\Db^2\rb_d+\Ab\Sb^{-1}\vb=\pb+\Ab\Sb^{-1}\vb.\label{eq:iip_311}
\end{flalign}
The first equality in eqn.~\eqref{eq:iip_311} follows from eqn.~\eqref{eq:iip_31} and the definition of $\pb$. This establishes
eqn.~\eqref{eq:addl}. Eqn.~\eqref{eq:delshat} directly follows from eqn.~\eqref{eq:iip_3_2}. Finally, we get eqn.~\eqref{eq:delxhat} by pre-multiplying eqn.~\eqref{eq:iip_3_3} by $\Sb^{-1}$.

Next, we define each new point traversed by the algorithm as $(\xnew,\ynew, \snew)$, where
\begin{flalign}
(\xnew,\ynew,\snew) &= (\xb, \yb, \sbb) + \alpha(\hat{\Delta\xb},\hat{\Delta \yb},\hat{\Delta\sbb}), \\
\munew &= \xnew^\ts \snew/ \dimtwo, \\
\rb(\alpha) &= \rb\left(\xnew,\snew,\ynew\right).
\end{flalign}
The goal in this section is to bound the number of iterations required by Algorithm~\ref{algo:iipm}.
Towards that end, we bound the magnitude of the step size $\alpha$. First, we provide an upper bound on $\alpha$, which allows us to show that each new point $(\xnew,\snew,\ynew)$ traversed by the algorithm stays within the neighborhood $\neigh$.
Second, we provide a lower bound on $\alpha$, which allows us to bound the number of iterations required. We use multiple lemmas from~\cite{Mon03}, which we reproduce here, without their proofs.

First, we provide an upper bound on $\alpha$, ensuring that each new point $(\xnew,\ynew,\snew)$ traversed by the algorithm stays within the neighborhood $\neigh$.
\begin{lemma}[Lemma 3.5 of \cite{Mon03}]\label{lemmamaxalpha1}
Assume $(\hat{\Delx},\hat{\Dely},\hat{\Dels})$  satisfies eqns.~(\ref{eq:iip_3}) for some $\sig > 0$, $(\xb, \yb,\sbb) \in \neigh$ (for $\gamma \in (0,1)$), and $\|\vb\|_2 \leq \frac{\gamma \sig \mu}{4}$. Then, $(\xnew,\ynew,\snew) \in \neigh$ for every scalar $\alpha$ such that
\begin{flalign} \label{alphalemma1}
0 \leq \alpha \leq \min \left\{ 1, \frac{\gamma \sig \mu}{4 \delxdelsnorm}\right\} .
\end{flalign}
\end{lemma}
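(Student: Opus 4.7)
The plan is to verify the two defining conditions of the neighborhood $\mathcal{N}(\gamma)$ at the new iterate $(\xnew,\ynew,\snew)$, namely (a) componentwise centrality $x_i(\alpha)s_i(\alpha) \ge (1-\gamma)\mu(\alpha)$, and (b) the residual/duality ratio bound $\|\rb(\alpha)\|_2/\|\rb^0\|_2 \le \mu(\alpha)/\mu_0$. The algebra driving the proof is that eqns.~\eqref{eq:iip_3_1}--\eqref{eq:iip_3_2} force the residuals to contract linearly, while eqn.~\eqref{eq:iip_3_3} governs the evolution of $\xb \circ \sbb$ along the search direction, with the perturbation $\vb$ and the quadratic cross-term $\hat{\Delta\xb}\circ\hat{\Delta\sbb}$ acting as the only sources of error that must be controlled.

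First, I would use eqns.~\eqref{eq:iip_3_1}--\eqref{eq:iip_3_2} to write $\rb_p(\alpha) = \rb_p + \alpha\,\Ab\hat{\Delta\xb} = (1-\alpha)\rb_p$ and similarly $\rb_d(\alpha) = (1-\alpha)\rb_d$, so that $\rb(\alpha)=(1-\alpha)\rb$. Thus $\|\rb(\alpha)\|_2/\|\rb^0\|_2 = (1-\alpha)\|\rb\|_2/\|\rb^0\|_2 \le (1-\alpha)\mu/\mu_0$ using the hypothesis $(\xb,\yb,\sbb)\in\neigh$. So the residual condition reduces to showing $\mu(\alpha) \ge (1-\alpha)\mu$. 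Next, taking the inner product of eqn.~\eqref{eq:iip_3_3} with $\one_\dimtwo$ and expanding $\xnew^\ts\snew$ yields the identity
\begin{equation*}
\munew = (1-\alpha)\mu + \alpha\sigma\mu - \alpha\,\vTen + \alpha^2\,\hat{\Delta\xb}^\ts\hat{\Delta\sbb}/\dimtwo.
\end{equation*}
Using $\abs{\vTen}\le\|\vb\|_2/\sqrt{\dimtwo}\le\gamma\sigma\mu/4$ and $\alpha^2|\hat{\Delta\xb}^\ts\hat{\Delta\sbb}|/\dimtwo \le \alpha^2\delxdelsnorm \le \alpha\gamma\sigma\mu/4$ from the assumed step-size bound, I get $\munew - (1-\alpha)\mu \ge \alpha\sigma\mu(1-\gamma/2) \ge 0$, which proves the residual condition (b).

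For (a), I would take the $i$-th component of eqn.~\eqref{eq:iip_3_3}, giving $x_i \hat{\Delta s}_i + s_i \hat{\Delta x}_i = -x_i s_i + \sigma\mu - v_i$, and expand
\begin{equation*}
x_i(\alpha)s_i(\alpha) - (1-\gamma)\mu(\alpha) = (1-\alpha)\bigl[x_i s_i - (1-\gamma)\mu\bigr] + \alpha\gamma\sigma\mu - \alpha\bigl[v_i - (1-\gamma)\vTen\bigr] + \alpha^2\bigl[\hat{\Delta x}_i\hat{\Delta s}_i - (1-\gamma)\hat{\Delta\xb}^\ts\hat{\Delta\sbb}/\dimtwo\bigr].
\end{equation*}
The first bracket is nonnegative because $(\xb,\yb,\sbb)\in\neigh$. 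For the perturbation terms, $\abs{v_i} \le \|\vb\|_\infty \le \|\vb\|_2 \le \gamma\sigma\mu/4$ and $|\vTen|\le\gamma\sigma\mu/4$ together give a bound of $\gamma\sigma\mu/2$ on $|v_i-(1-\gamma)\vTen|$. For the quadratic terms, both $\abs{\hat{\Delta x}_i \hat{\Delta s}_i}$ and $|\hat{\Delta\xb}^\ts\hat{\Delta\sbb}/\dimtwo|$ are dominated by $\delxdelsnorm$, contributing at most $2\alpha^2\delxdelsnorm$. Invoking the step-size bound $\alpha \le \gamma\sigma\mu/(4\delxdelsnorm)$ converts the quadratic contribution into at most $\alpha\gamma\sigma\mu/2$. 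Summing gives a total error of at most $\alpha\gamma\sigma\mu$, which is exactly what the $\alpha\gamma\sigma\mu$ term provides, so centrality is preserved. This also ensures $x_i(\alpha),s_i(\alpha)>0$ by continuity (since at $\alpha=0$ both are positive and the product stays bounded below by $(1-\gamma)\mu(\alpha)>0$).

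The main obstacle is purely bookkeeping: matching the four error sources (scalar $v_i$, averaged $\vTen$, diagonal quadratic $\hat{\Delta x}_i\hat{\Delta s}_i$, averaged quadratic $\hat{\Delta\xb}^\ts\hat{\Delta\sbb}/\dimtwo$) against the single budget $\alpha\gamma\sigma\mu$, where the factor of $1/4$ in both hypotheses is chosen precisely so that the four pieces each consume $\alpha\gamma\sigma\mu/4$ and their sum saturates the available slack. The only subtle point is that bounding $|\vTen|$ via the component-wise bound $\|\vb\|_\infty\le\|\vb\|_2$ is loose but suffices; a tighter Cauchy--Schwarz bound $|\vTen|\le\|\vb\|_2/\sqrt{\dimtwo}$ gives extra room but is unnecessary for the stated constants.
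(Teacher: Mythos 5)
Your proof is correct. Note that the paper itself does not prove this statement --- it is reproduced verbatim from Lemma 3.5 of \cite{Mon03} and explicitly imported ``without proof'' --- but your argument is essentially the standard one used there: linear contraction of the residuals from eqns.~\eqref{eq:iip_3_1}--\eqref{eq:iip_3_2}, the exact expansion of $\mu(\alpha)$ and of $x_i(\alpha)s_i(\alpha)-(1-\gamma)\mu(\alpha)$ from eqn.~\eqref{eq:iip_3_3}, and the observation that the four error terms (two from $\vb$, two from $\hat{\Delta\xb}\circ\hat{\Delta\sbb}$) each consume $\alpha\gamma\sigma\mu/4$ of the $\alpha\gamma\sigma\mu$ budget. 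The bookkeeping checks out, and your closing continuity argument correctly supplies the (implicit) positivity of $(\xb(\alpha),\sbb(\alpha))$, using that $\mu(\alpha')>0$ along the whole segment because $\sigma>0$ and $\gamma<1$.
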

\noindent We now provide a lower bound on the values of $\alphaused$ and the corresponding $\mu(\alphaused)$; see Algorithm~\ref{algo:iipm}.
\begin{lemma}[Lemma 3.6 of~\cite{Mon03}]\label{lemmaminalpha1}
In each iteration of Algorithm~\ref{algo:iipm}, if $\|\vb\|_2\le\frac{\gamma\sigma\mu}{4}$, then the step size $\alphaused$ satisfies
\begin{flalign} \label{alphalemma2}
\alphaused \geq \min \left\{ 1,  \frac{ \min   \{ \gamma \sigma, (1 - \frac{5}{4} \sigma) \} \mu } {4  \|\hat{\Delta x} \circ \hat{\Delta s}\|_\infty}  \right\}
	\end{flalign}
	and
	\begin{flalign} \label{mulemma2}
	\mu(\alphaused)= \Big[ 1 - \frac{\alphaused}{2} (1-\frac{5}{4}\sig) \Big] \mu.
\end{flalign}
\end{lemma}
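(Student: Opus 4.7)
I would begin by expanding the quantity $\xnew^\ts\snew = n\munew$ as an explicit quadratic in $\alpha$. Substituting $\xnew = \xb + \alpha\hat{\Delta\xb}$ and $\snew = \sbb + \alpha\hat{\Delta\sbb}$ and taking the inner product of the third Newton equation~\eqref{eq:iip_3_3} with $\one_n$ to evaluate $\xb^\ts\hat{\Delta\sbb} + \sbb^\ts\hat{\Delta\xb} = -n\mu(1-\sigma) - \one_n^\ts\vb$ gives
\begin{equation*}
\munew \;=\; \mu\bigl[1 - \alpha(1-\sigma)\bigr] \;-\; \frac{\alpha}{n}\,\one_n^\ts\vb \;+\; \frac{\alpha^2}{n}\,\hat{\Delta\xb}^\ts\hat{\Delta\sbb}.
\end{equation*}
Controlling the two error terms via eqn.~\eqref{eq:normineq}, namely $|\one_n^\ts\vb/n|\leq\|\vb\|_2 \leq \gamma\sigma\mu/4 \leq \sigma\mu/4$ (using $\gamma<1$) together with $|\hat{\Delta\xb}^\ts\hat{\Delta\sbb}|/n \leq \|\hat{\Delta\xb}\circ\hat{\Delta\sbb}\|_\infty$, and regrouping, yields the key bound
\begin{equation*}
\munew \;\leq\; \mu \;-\; \alpha\,\mu\bigl(1 - \tfrac{5\sigma}{4}\bigr) \;+\; \alpha^2\,\|\hat{\Delta\xb}\circ\hat{\Delta\sbb}\|_\infty.
\end{equation*}
Because $\sigma<4/5$ the linear coefficient is strictly negative, and whenever $\alpha \leq (1-5\sigma/4)\mu/\bigl(2\,\|\hat{\Delta\xb}\circ\hat{\Delta\sbb}\|_\infty\bigr)$ the right-hand side is itself bounded by $\bigl[1-\alpha(1-5\sigma/4)/2\bigr]\mu$.

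To extract~\eqref{alphalemma2} I would split on the sign of $\hat{\Delta\xb}^\ts\hat{\Delta\sbb}$. If $\hat{\Delta\xb}^\ts\hat{\Delta\sbb}\leq 0$, the quadratic $\munew$ is concave with strictly negative derivative at $\alpha=0$, so it is monotonically decreasing on $[0,\alphamax]$; hence $\alphaused=\alphamax$ and Lemma~\ref{lemmamaxalpha1} supplies the $\gamma\sigma$ piece of~\eqref{alphalemma2}. If $\hat{\Delta\xb}^\ts\hat{\Delta\sbb} > 0$ the quadratic is convex and its unconstrained minimizer satisfies $\alpha^\ast \geq (1-5\sigma/4)\mu/\bigl(2\,\|\hat{\Delta\xb}\circ\hat{\Delta\sbb}\|_\infty\bigr)$, so $\alphaused = \min\{\alphamax,\alpha^\ast\}$ is bounded below by the minimum of $\alphamax$ and this quantity. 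Weakening the factor $2$ to $4$ in both denominators merges the two cases into the unified bound $\min\{\gamma\sigma,\,1-5\sigma/4\}\mu/\bigl(4\,\|\hat{\Delta\xb}\circ\hat{\Delta\sbb}\|_\infty\bigr)$, establishing~\eqref{alphalemma2}. Plugging $\alpha=\alphaused$ back into the quadratic upper bound then delivers~\eqref{mulemma2} (read as an upper bound on the attained duality measure).

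The hard part will be handling~\eqref{mulemma2} in the regime where $\alphaused$ equals the unconstrained minimizer $\alpha^\ast$ of the convex quadratic: then $\alphaused$ is not automatically inside the range $\alpha \leq (1-5\sigma/4)\mu/\bigl(2\,\|\hat{\Delta\xb}\circ\hat{\Delta\sbb}\|_\infty\bigr)$ in which the clean linear upper bound was derived, so one must fall back on the closed-form vertex value $\munew(\alpha^\ast) = \mu - b^2/(4c)$ where $b = -\mu(1-\sigma) - \one_n^\ts\vb/n$ and $c = \hat{\Delta\xb}^\ts\hat{\Delta\sbb}/n$, and verify algebraically that $\munew(\alpha^\ast) \leq [1 - \alpha^\ast(1-5\sigma/4)/2]\mu$ follows from the inequality $-b \geq (1-5\sigma/4)\mu$. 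This reduces to elementary manipulation at the parabola's vertex, and it is precisely this manipulation that justifies the slightly suboptimal factor $4$ (rather than $2$) in the denominator of~\eqref{alphalemma2}, allowing a single clean statement that covers both branches of the case analysis.
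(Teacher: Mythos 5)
Your proposal is correct. The paper itself gives no proof of this lemma --- it is reproduced verbatim from Lemma 3.6 of \cite{Mon03} ``without their proofs'' --- and your argument is essentially the standard one from that source: expand $\munew$ as a quadratic in $\alpha$ using $\one_\dimtwo^\ts$ applied to eqn.~\eqref{eq:iip_3_3}, absorb the $\one_\dimtwo^\ts\vb/\dimtwo$ term via $\|\vb\|_2\le\gamma\sigma\mu/4$ into the $(1-\tfrac{5}{4}\sigma)$ coefficient, and split on the sign of $\hat{\Delta\xb}^\ts\hat{\Delta\sbb}$, invoking Lemma~\ref{lemmamaxalpha1} for the $\gamma\sigma$ branch and the vertex of the convex parabola for the other. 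Two cosmetic points: the ``$=$'' in \eqref{mulemma2} is really a ``$\le$'' (as you correctly read it), and in the concave branch, where $\alphaused=\alphamax$ may exceed the range in which you derived the clean linear bound, \eqref{mulemma2} still holds for every $\alpha\ge 0$ simply because the quadratic term $c\alpha^2$ is nonpositive and can be dropped --- worth one explicit sentence, but not a gap.
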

\noindent At this point, we have provided a lower bound (eqn.~(\ref{alphalemma2})) for the allowed values of the step size $\alphaused$. Next, we show that this lower bound is bounded away from zero. From eqn.~(\ref{alphalemma2}) this is equivalent to showing that $\delxdelsnorm$ is bounded.
\begin{lemma}[Lemma 3.7 of~\cite{Mon03} (slightly modified)]\label{lemmaminalpha2}
Let $(\xb^{0},\yb^{0},\sbb^{0})$ be the initial point with $(\xb^{0},\sbb^{0})>0$ and $(\xb^0,\sbb^0) \geq (\xb^{*}, \sbb^{*}) $ for some $(\xb^{*},\yb^{*},\sbb^{*}) \in \mathcal{S}$. Let $(\xb,\yb,\sbb) \in \neigh$ be such that $\rb = \eta \rb^{0}$ for some $\eta \in [0,1]$ and $\|\vb\|_2\le\frac{\gamma\sigma\mu}{4}$. Then, the search direction  $(\hat{\Delta \xb},\hat{\Delta \yb},\hat{\Delta \sbb})$ produced by Algorithm \ref{algo:iipm} at each iteration satisfies
\begin{flalign} \label{normsmax}
\max \{\| \Db^{-1} \hat{\Delta \xb}\|_2, \| \Db \hat{\Delta \sbb}\|_2 \} \le~\left(1+\frac{\sigma^2}{1-\gamma}-2\sigma\right)^{\nicefrac{1}{2}}\sqrt{\dimtwo\mu}
+\frac{6\dimtwo}{\sqrt{(1-\gamma)}}\sqrt{\mu}+\frac{\gamma\sigma}{4\,\sqrt{1-\gamma}}\sqrt{\mu}.
\end{flalign}
\end{lemma}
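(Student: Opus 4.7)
The idea is to reduce the analysis to an ``essentially feasible'' one by introducing shifted search directions that annihilate the infeasibility residuals; orthogonality of the shifted steps will then make the two target norms satisfy a Pythagorean relation that can be bounded directly. Because $(\xb,\yb,\sbb)\in\mathcal{N}(\gamma)$ has $\rb = \eta\,\rb^{0}$ and $(\xb^{*},\yb^{*},\sbb^{*})\in\mathcal{S}$, the residuals factor as $\rb_p = \eta\,\Ab(\xb^{0}-\xb^{*})$ and $\rb_d = \eta\bigl(\Ab^{\ts}(\yb^{0}-\yb^{*}) + (\sbb^{0}-\sbb^{*})\bigr)$. I would define $\tilde{\Delta\xb} := \hat{\Delta\xb} + \eta(\xb^{0}-\xb^{*})$, $\tilde{\Delta\yb} := \hat{\Delta\yb} + \eta(\yb^{0}-\yb^{*})$, and $\tilde{\Delta\sbb} := \hat{\Delta\sbb} + \eta(\sbb^{0}-\sbb^{*})$. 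Then eqns.~\eqref{eq:iip_3_1}--\eqref{eq:iip_3_2} become the homogeneous relations $\Ab\tilde{\Delta\xb} = \zero$ and $\Ab^{\ts}\tilde{\Delta\yb} + \tilde{\Delta\sbb} = \zero$, which together yield the key orthogonality $\tilde{\Delta\xb}^{\ts}\tilde{\Delta\sbb} = 0$.

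Next, substituting the tilde quantities into eqn.~\eqref{eq:iip_3_3} and pre-multiplying by $(\Xb\Sb)^{-\nicefrac{1}{2}}$ gives the single identity $\Db\,\tilde{\Delta\sbb} + \Db^{-1}\tilde{\Delta\xb} = \tilde{\gb}$, where $\tilde{\gb} := -(\Xb\Sb)^{\nicefrac{1}{2}}\one + \sigma\mu\,(\Xb\Sb)^{-\nicefrac{1}{2}}\one - (\Xb\Sb)^{-\nicefrac{1}{2}}\vb + \eta\,\Db(\sbb^{0}-\sbb^{*}) + \eta\,\Db^{-1}(\xb^{0}-\xb^{*})$. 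The inner product $(\Db\tilde{\Delta\sbb})^{\ts}(\Db^{-1}\tilde{\Delta\xb})$ equals $\tilde{\Delta\sbb}^{\ts}\tilde{\Delta\xb}$, hence vanishes by Step~1, so a Pythagorean relation gives $\max\{\|\Db\tilde{\Delta\sbb}\|_2,\|\Db^{-1}\tilde{\Delta\xb}\|_2\}\le\|\tilde{\gb}\|_2$. The target norms then follow via $\|\Db\hat{\Delta\sbb}\|_2 \le \|\Db\tilde{\Delta\sbb}\|_2 + \eta\|\Db(\sbb^{0}-\sbb^{*})\|_2$ and the symmetric inequality for $\|\Db^{-1}\hat{\Delta\xb}\|_2$.

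It remains to bound the four contributions to $\|\tilde{\gb}\|_2$ by the triangle inequality. The centering piece $\|-(\Xb\Sb)^{\nicefrac{1}{2}}\one + \sigma\mu(\Xb\Sb)^{-\nicefrac{1}{2}}\one\|_2^2 = \sum_i\bigl(\sqrt{x_is_i}-\sigma\mu/\sqrt{x_is_i}\bigr)^2$, expanded using $\sum x_is_i = n\mu$ and $x_is_i\ge(1-\gamma)\mu$, is at most $n\mu\bigl(1+\sigma^{2}/(1-\gamma) - 2\sigma\bigr)$, matching the first term in eqn.~\eqref{normsmax}. The $\vb$ contribution is $\|(\Xb\Sb)^{-\nicefrac{1}{2}}\vb\|_2\le\|\vb\|_2/\sqrt{(1-\gamma)\mu}$, which together with the a priori bound $\|\vb\|_2\le\gamma\sigma\mu/4$ furnished by Lemma~\ref{lem:conouter} yields $\gamma\sigma\sqrt{\mu}/(4\sqrt{1-\gamma})$. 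For the infeasibility corrections, $0 \le \sbb^{0}-\sbb^{*}\le\sbb^{0}$ componentwise together with $x_is_i\ge(1-\gamma)\mu$ gives $\eta^{2}\|\Db(\sbb^{0}-\sbb^{*})\|_2^{2} \le \eta^{2}\|\Xb\sbb^{0}\|_2^{2}/((1-\gamma)\mu)$; Lemma~\ref{lem:ineq}(iii) bounds $\eta\|\Xb\sbb^{0}\|_2\le 3n\mu$, producing $\eta\|\Db(\sbb^{0}-\sbb^{*})\|_2 \le 3n\sqrt{\mu}/\sqrt{1-\gamma}$, and Lemma~\ref{lem:ineq}(ii) handles the symmetric $\eta\|\Db^{-1}(\xb^{0}-\xb^{*})\|_2$ term analogously. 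Summing all contributions in $\|\tilde{\gb}\|_2$ and then in the back-translation step from Step~2 assembles the bound of eqn.~\eqref{normsmax}.

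\textbf{Main obstacle.} The technical heart is the orthogonality trick in Step~1. Without it, the cross-term $\hat{\Delta\xb}^{\ts}\hat{\Delta\sbb}$ of the (unshifted) directions is not controllable in an infeasible setting, so one cannot pass directly from the single identity on $\Db\hat{\Delta\sbb}+\Db^{-1}\hat{\Delta\xb}$ to separate norm bounds for the two summands. The shift works because the residual lies along $\rb^{0}$, a property guaranteed by membership in $\mathcal{N}(\gamma)$ and, in turn, by the correction vector $\vb$ constructed in Section~\ref{sxn:IIPM}. Handling $\vb$ itself is a second, related subtlety: its two-norm must be small enough not to overwhelm the centering term, which is exactly what the bound $\|\vb\|_2\le\gamma\sigma\mu/4$ (obtained by running Algorithm~\ref{algo:PCG} for $t=\Omega(\log n)$ iterations, cf.\ Lemma~\ref{lem:conouter}) ensures.
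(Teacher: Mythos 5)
Your proposal reconstructs from scratch a proof that the paper itself does not give: the paper imports Lemma 3.7 from~\cite{Mon03} ``without their proofs'' and only justifies the one modification, namely replacing the last term $\frac{\gamma\sigma}{4\sqrt{n}}\sqrt{\mu}$ by $\frac{\gamma\sigma}{4\sqrt{1-\gamma}}\sqrt{\mu}$ via the bound $\|(\Xb\Sb)^{-1/2}\vb\|_2\le\|(\Xb\Sb)^{-1/2}\|_2\|\vb\|_2\le\frac{\gamma\sigma\mu}{4\sqrt{(1-\gamma)\mu}}$ (eqns.~\eqref{eq:db1}--\eqref{eq:bd2}), which is exactly your treatment of the $\vb$ term. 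Your overall skeleton --- shifting by $\eta(\xb^0-\xb^*)$ and $\eta(\sbb^0-\sbb^*)$ to land in $\ker(\Ab)$ and $\mathrm{range}(\Ab^\ts)$, invoking orthogonality and Pythagoras on $\Db^{-1}\tilde{\Delta\xb}+\Db\tilde{\Delta\sbb}=\tilde{\gb}$, and bounding $\tilde{\gb}$ term by term --- is the standard and correct route, and your bounds for the centering term and the $\vb$ term match the first and third terms of eqn.~\eqref{normsmax} exactly.

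There is, however, a constant-factor gap in the middle term. As you account for things, $\|\tilde{\gb}\|_2$ picks up $\eta\|\Db(\sbb^0-\sbb^*)\|_2+\eta\|\Db^{-1}(\xb^0-\xb^*)\|_2$, each of which you bound \emph{separately} by $\frac{3n\sqrt{\mu}}{\sqrt{1-\gamma}}$ using parts (ii) and (iii) of Lemma~\ref{lem:ineq}; the back-translation then adds one more such term, for a total of $\frac{9n\sqrt{\mu}}{\sqrt{1-\gamma}}$ rather than the stated $\frac{6n\sqrt{\mu}}{\sqrt{1-\gamma}}$. To recover the constant $6$, bound the two shift contributions \emph{jointly}: since
\begin{flalign*}
\eta\|\Db(\sbb^0-\sbb^*)\|_2+\eta\|\Db^{-1}(\xb^0-\xb^*)\|_2
\le \frac{\eta\big(\|\Xb\sbb^{0}\|_2+\|\Sb\xb^{0}\|_2\big)}{\sqrt{(1-\gamma)\mu}}
\le \frac{\eta\big(\xb^\ts\sbb^{0}+\sbb^\ts\xb^{0}\big)}{\sqrt{(1-\gamma)\mu}}
\le \frac{3n\sqrt{\mu}}{\sqrt{1-\gamma}}\,,
\end{flalign*}
where the last step is part (i) of Lemma~\ref{lem:ineq} (which controls the \emph{sum} $\eta(\xb^\ts\sbb^{0}+\sbb^\ts\xb^{0})\le 3n\mu$, not each summand by $3n\mu$), the contribution inside $\|\tilde{\gb}\|_2$ is $\frac{3n\sqrt{\mu}}{\sqrt{1-\gamma}}$ and the back-translation adds at most the same quantity again, giving $\frac{6n\sqrt{\mu}}{\sqrt{1-\gamma}}$ as claimed. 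This is a bookkeeping fix, not a conceptual one, and it does not affect the asymptotics downstream; but as written your argument establishes only the weaker inequality with $9n$ in place of $6n$. A minor additional remark: the hypothesis $\|\vb\|_2\le\frac{\gamma\sigma\mu}{4}$ is assumed in the lemma statement, so you need not route through Lemma~\ref{lem:conouter} to obtain it.
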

\noindent We should note here that the above lemma is slightly different than Lemma 3.7 of~\cite{Mon03}. Indeed, Lemma 3.7 of~\cite{Mon03} actually proves the following bound:
\begin{flalign} \label{normsmax2}
\max \{\| \Db^{-1} \hat{\Delta \xb}\|_2, \| \Db \hat{\Delta \sbb}\|_2 \}
\le~\left(1+\frac{\sigma^2}{1-\gamma}-2\sigma\right)^{\nicefrac{1}{2}}\sqrt{\dimtwo\mu}+\frac{6\dimtwo}{\sqrt{(1-\gamma)}}\sqrt{\mu}+\frac{\gamma\sigma}{4\sqrt{n}}\sqrt{\mu}\,.
\end{flalign}
Notice that there is slight difference in the last term in the right-hand side, which does not asymptotically change the bound. The underlying reason for this difference is the fact that~\cite{Mon03} constructed the vector $\vb$ differently. In our case, we need to bound $\|(\Xb\Sb)^{-1/2}\vb\|_2$, which we do as follows:
\begin{flalign}
\| (\Xb \Sb)^{-1/2}\vb\|_2\le\| (\Xb \Sb)^{-1/2}\|_2\,\|\vb\|_2\le \frac{1}{\min_i \sqrt{x_is_i}}\,\frac{\gamma\sigma\mu}{4}\label{eq:db1}\,,
\end{flalign}
where in the above expression we use the fact that $\| (\Xb \Sb)^{-1/2}\|_2=\frac{1}{\min_i\sqrt{x_is_i}}$. Now as $(\xb,\yb,\sbb)\in\mathcal{N}(\gamma)$, we further have $x_is_i\ge (1-\gamma)\mu$ for all $i=1\ldots n$. Combining this with eqn.~\eqref{eq:db1}, we get
\begin{flalign}
\| (\Xb \Sb)^{-1/2}\vb\|_2\le\frac{\gamma\sigma\mu}{4\sqrt{(1-\gamma)\mu}}=\frac{\gamma\sigma}{4\,\sqrt{1-\gamma}}\sqrt{\mu}.\label{eq:bd2}
\end{flalign}
On the other hand,~\cite{Mon03} had a different construction of $\vb$ for which $\|(\Xb\Sb)^{-1/2}\vb\|_2=\|\tilde{\fb}^{(t)}\|_2$ holds. Therefore they had the following bound:
$$\|(\Xb\Sb)^{-1/2}\vb\|_2=\|\tilde{\fb}^{(t)}\|_2\le\frac{\gamma\sigma}{4\sqrt{n}}\sqrt{\mu}.$$
The next lemma bounds the number of iterations that Algorithm~\ref{algo:iipm} needs when started with an infeasible point that is sufficiently positive.
\begin{lemma}[Theorem 2.6  of \cite{Mon03}] \label{theoremOuter}
Assume that the constants $\gamma$ and $\sigma$ are such that $\max\{\gamma^{-1},(1-\gamma)^{-1},\sigma^{-1},(1-\frac{5}{4}\sigma)^{-1}\}=\Ocal(1)$. Let the initial point $(\xb^{0},\sbb^{0},\yb^{0})$ satisfy $(\xb^{0}, \sbb^{0}) \geq (\xb^{*}, \sbb^{*} )$ for some $(\xb^{*}, \sbb^{*},\yb^{*}) \in \mathcal{S}$ and $\|\vb\|_2\le\frac{\gamma\sigma\mu}{4}$. Algorithm \ref{algo:iipm} generates an iterate $(\xb^{k}, \sbb^{k}, \yb^{k})$ satisfying $\mu_k \leq \epsilon \mu_0$ and $\| \rb^{k}\|_2 \leq \epsilon \| \rb^{0}\|_2$ after
$\mathcal{O}(\dimtwo^2 \log{\nicefrac{1}{\epsilon}})$ iterations.
\end{lemma}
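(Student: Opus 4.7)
The plan is to combine the CG convergence bound from Section~\ref{sxn:PCG} with the perturbation construction from Section~\ref{sxn:IIPM} and then invoke the convergence framework of~\cite{Mon03} (reproduced as Lemmas~\ref{lemmamaxalpha1}--\ref{theoremOuter} in the appendix). The central quantitative ingredient is Lemma~\ref{lem:conouter}, which says that after $t=\mathcal{O}(\log n)$ iterations of Algorithm~\ref{algo:PCG} with $\zeta=1/2$, one has the twin bounds $\|\tilde{\fb}^{(t)}\|_2 \le \frac{\gamma\sigma}{4\sqrt{n}}\sqrt{\mu}$ and $\|\vb\|_2 \le \frac{\gamma\sigma}{4}\mu$. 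So first I would argue that these bounds are exactly the hypotheses required by the Monteiro-style lemmas: the bound on $\|\vb\|_2$ feeds directly into Lemmas~\ref{lemmamaxalpha1}, \ref{lemmaminalpha1}, and \ref{lemmaminalpha2}, while the bound on $\|\tilde{\fb}^{(t)}\|_2$, combined with the construction of $\vb$ in eqn.~\eqref{eq:compv} and Lemma~\ref{lem:fullrankR}, ensures the invariant $\rb^k = \eta_k\,\rb^{0}$ for some $\eta_k\in[0,1]$ at every iteration, which is precisely what lets us apply Lemma~\ref{theoremOuter}.

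Next, I would check that the hypotheses on the initial point and on the constants match what Theorem~\ref{theoremOuter} demands: the initialization $(\xb^0,\sbb^0)\ge(\xb^{*},\sbb^{*})$ with $\sbb^0 \ge |\Ab^\ts\yb^0 - \cbb|$, and $\max\{\gamma^{-1},(1-\gamma)^{-1},\sigma^{-1},(1-\tfrac{5}{4}\sigma)^{-1}\} = \mathcal{O}(1)$. With $\sigma\in(0,4/5)$ and $\gamma\in(0,1)$ chosen as in Algorithm~\ref{algo:iipm}, this is satisfied. Then Lemma~\ref{lemmamaxalpha1} gives that each trial point stays in $\mathcal{N}(\gamma)$, Lemma~\ref{lemmaminalpha1} yields the recursion $\mu_{k+1}\le (1 - \tfrac{\alphaused}{2}(1-\tfrac{5}{4}\sigma))\mu_k$, and Lemma~\ref{lemmaminalpha2} bounds $\|\hat{\Delta\xb}\circ\hat{\Delta\sbb}\|_\infty = \mathcal{O}(n\mu)$ via $\|\Db^{-1}\hat{\Delta\xb}\|_2\cdot\|\Db\hat{\Delta\sbb}\|_2$, so that $\alphaused = \Omega(1/n)$. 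Plugging into the recursion for $\mu_k$ gives geometric decrease at rate $1-\Omega(1/n)$, hence $\mathcal{O}(n^2\log(1/\epsilon))$ outer iterations to drive $\mu_k\le\epsilon$, which is exactly the statement of Lemma~\ref{theoremOuter}.

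Finally, I would handle the randomness. Each iteration of the outer loop draws a fresh sketching matrix $\Wb$ (step (a) of Algorithm~\ref{algo:iipm}) with failure probability $\delta$. The two events needed at iteration $k$ are: (i) the preconditioner bound in eqn.~\eqref{eq:cnd1} (which drives both Lemmas~\ref{lem:cond3} and \ref{lem:fullrankR}), and (ii) the sketching bound of eqn.~\eqref{eqn:pdprec} applied to $\Zb=(\Xb\Sb)^{1/2}$ (which drives Lemma~\ref{lem:v}). Both events together hold with probability $\ge 1-2\delta$ per iteration. Since there are $\mathcal{O}(n^2\log(1/\epsilon))$ iterations, choosing $\delta = \mathcal{O}(n^{-2})$ (absorbing the $\log(1/\epsilon)$ factor by slightly inflating constants) and taking a union bound across all iterations gives a total failure probability of at most $0.1$, yielding the claimed $0.9$ success probability.

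The main obstacle I expect is purely bookkeeping rather than conceptual: namely, verifying that our construction of $\vb$ in eqn.~\eqref{eq:compv} gives the same structural consequences that~\cite{Mon03} uses (residuals lying on the segment from $\zero$ to $\rb^{0}$, and the bound~\eqref{normsmax} on $\max\{\|\Db^{-1}\hat{\Delta\xb}\|_2,\|\Db\hat{\Delta\sbb}\|_2\}$). The sketching-based $\vb$ changes the last term in~\eqref{normsmax} from $\frac{\gamma\sigma}{4\sqrt{n}}\sqrt{\mu}$ to $\frac{\gamma\sigma}{4\sqrt{1-\gamma}}\sqrt{\mu}$ via the estimate $\|(\Xb\Sb)^{-1/2}\vb\|_2 \le \tfrac{\|\vb\|_2}{\min_i\sqrt{x_is_i}}$, which is asymptotically harmless since $\gamma$ is a constant, and therefore leaves the $\Omega(1/n)$ lower bound on $\alphaused$ (and hence the $\mathcal{O}(n^2\log(1/\epsilon))$ iteration count) intact.
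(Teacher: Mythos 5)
You should first note that the paper itself gives no proof of this lemma: it is imported verbatim as Theorem~2.6 of~\cite{Mon03}, and the appendix states explicitly that the Monteiro lemmas are reproduced ``without their proofs.'' Your proposal therefore attempts more than the paper does, namely to reconstruct the iteration-count argument from Lemmas~\ref{lemmamaxalpha1}--\ref{lemmaminalpha2}, and the overall strategy is the correct one. However, your quantitative chain is internally inconsistent at the decisive step. Lemma~\ref{lemmaminalpha2} bounds \emph{each} of $\|\Db^{-1}\hat{\Delta\xb}\|_2$ and $\|\Db\hat{\Delta\sbb}\|_2$ by $\Ocal(\dimtwo\sqrt{\mu})$, and since $\hat{\Delta\xb}\circ\hat{\Delta\sbb}=(\Db^{-1}\hat{\Delta\xb})\circ(\Db\hat{\Delta\sbb})$, the product of these two norms bounds $\|\hat{\Delta\xb}\circ\hat{\Delta\sbb}\|_\infty$ by $\Ocal(\dimtwo^2\mu)$, not $\Ocal(\dimtwo\mu)$. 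Feeding this into eqn.~\eqref{alphalemma2} gives $\alphaused=\Omega(1/\dimtwo^2)$, not $\Omega(1/\dimtwo)$, and the recursion~\eqref{mulemma2} then contracts $\mu$ at rate $1-\Omega(1/\dimtwo^2)$, which is precisely what yields the stated $\Ocal(\dimtwo^2\log\nicefrac{1}{\epsilon})$ bound. As written, your intermediate claims would instead imply an $\Ocal(\dimtwo\log\nicefrac{1}{\epsilon})$ iteration count, contradicting the conclusion you assert; the fix is just the corrected arithmetic above. Two smaller points: the probabilistic bookkeeping (fresh sketch per iteration, union bound with $\delta=\Ocal(n^{-2})$) belongs to the proof of Theorem~\ref{thm:1}, not to this lemma, which is deterministic given its hypotheses $\|\vb\|_2\le\frac{\gamma\sigma\mu}{4}$ and $(\xb^0,\sbb^0)\ge(\xb^*,\sbb^*)$; and the residual conclusion $\|\rb^k\|_2\le\epsilon\|\rb^0\|_2$ should be stated as following from the neighborhood condition $\|\rb^k\|_2/\|\rb^0\|_2\le\mu_k/\mu_0$ once $\mu_k\le\epsilon\mu_0$ is established.
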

\noindent Finally, Theorem~\ref{thm:1} follows from Lemmas~\ref{lem:conouter} and~\ref{theoremOuter}.

\section{Additional notes on experiments}\label{app:experiments}

\subsection{Support Vector Machines (SVMs)}
\label{app:svm}

The classical $\ell_1$-SVM problem is as follows. We consider the task of fitting an SVM to data pairs $S = \{ (x_i, y_i)\}_{i=1}^m$, where $x_i \in \mathbb{R}^n$ and $y_i \in \{ + 1, - 1\}$. Here, $m$ is the number of training points, and $n$ is the feature dimension. The SVM problem with an $\ell_1$ regularizer has the following form:
\begin{align} \label{svm2}
\underset{{ w  }}{\operatorname{minimize}}  \quad &  \|w \|_1  \\
\text{subject to} \quad& y_i (w^T x_i + b') \geq 1, \quad i=1\ldots m.  \nonumber %\label{line2} \\
\end{align}
This problem can be written as an LP by introducing the variables $w^+$ and $w^-$, where $w = w^+ - w^-$. The objective becomes  $\sum_{j=1}^n w^+_j + w^-_j$, and we constrain $w^+_i \geq 0$ and $w^-_i \geq 0$. Note that the size of the constraint matrix in the LP becomes $m \times (2n +1)$.

\subsection{Random data}
\label{app:rand}
We generate random synthetic instances of linear programs as follows. To generate $\Ab \in \mathbb{R}^{m \times n}$, we set ${a}_{ij} \sim_{i.i.d.}U(0,1)$ with probability $p$ and ${a}_{ij} = 0$ otherwise. We then add min$\{m,n\}$  i.i.d. draws from $U(0,1)$ to the main diagonal, to ensure each row of $\Ab$ has at least one nonzero entry. We set $\bb = \Ab \xb + 0.1\zb$, where $\xb$ and $\zb$ are random vectors drawn from $N(0,1)$. Finally, we set $c \sim N(0,1)$.

\subsection{Real-world data}
\label{app:real}

We used a gene expression cancer RNA-Sequencing dataset, taken from the UCI Machine Learning repository. It is part of the RNA-Seq (HiSeq) PANCAN data set~\cite{Weinstein2013} and is a random extraction of gene expressions from patients who have different types of tumors: BRCA, KIRC, COAD, LUAD, and PRAD. We considered the binary classification task of identifying BRCA versus other types.

We also used the DrivFace dataset taken from the UCI Machine Learning repository. In the DrivFace dataset, each sample corresponds to an image of a human subject, taken while driving in real scenarios. Each image is labeled as corresponding to one of three possible gaze directions: left, straight, or right. We considered the binary classification task of identifying two different gaze directions: straight, or to either side (left or right).

\end{appendices}

\end{document}